\documentclass[aps,prx,twocolumn,superscriptaddress,notitlepage,noshowkeys,eqsecnum]{revtex4-2}
\usepackage[T1]{fontenc}
\usepackage[latin9]{inputenc}
\usepackage[english]{babel}
\usepackage[table]{xcolor}
\definecolor{note_fontcolor}{rgb}{0.800781, 0.800781, 0.800781}
\usepackage{graphicx,color}
\usepackage{amsmath}
\usepackage{amssymb}
\usepackage{amsthm}
\usepackage{revsymb}
\usepackage{bm}
\usepackage{hyperref}
\usepackage{xurl}
\hypersetup{breaklinks=true}
\usepackage{enumitem} 
\usepackage{tikz}
\usepackage{pgfplots}
\usepackage{hhline}
\usepackage{enumitem}
\usepackage{fancyvrb}
\usepackage[most]{tcolorbox}

\bibliographystyle{prsty-title-hyperref}

\newtheorem{thm}{Theorem}

\newtheorem{lemma}[thm]{Lemma}
\newtheorem{corol}[thm]{Corollary}

\theoremstyle{definition}

\newtheorem{remark}[thm]{Remark}

\newcommand{\rmd}{\mathrm{d}}
\newcommand{\rmi}{\mathrm{i}}
\newcommand{\rme}{\mathrm{e}}

\newcommand{\sinc}{\operatorname{sinc}}

\newcommand{\T}{\operatorname{T}}

\newcommand{\Ci}{\operatorname{Ci}}

\definecolor{dgreen}{rgb}{0,0.5,0}

\definecolor{dblue}{rgb}{0,0,0.6}

\definecolor{dred}{rgb}{0.784,0,0}

\definecolor{dorange}{cmyk}{0,0.72,1,0.16}
\definecolor{dmagenta}{rgb}{0.847,0.149,0.490}

\definecolor{delete}{cmyk}{0.5,0,0,0}

\definecolor{0}{RGB}{51, 51, 51}
\definecolor{1}{RGB}{49, 130, 189}
\definecolor{2}{RGB}{255, 102, 0}
\definecolor{3}{RGB}{51, 160, 44}
\definecolor{4}{RGB}{227, 26, 28}
\definecolor{5}{RGB}{106, 61, 154}
\definecolor{6}{RGB}{140, 81, 10}

\begin{document}
\title{Strong Error Bounds for Trotter \& Strang-Splittings and Their Implications for Quantum Chemistry}
\author{Daniel Burgarth}
\affiliation{Department Physik, Friedrich-Alexander-Universit\"at Erlangen-N\"urnberg, Staudtstra\ss e 7, 91058 Erlangen, Germany}
\affiliation{Center for Engineered Quantum Systems, Macquarie University, 2109 NSW, Australia} \author{Paolo Facchi}
\affiliation{Dipartimento di Fisica, Universit\`a di Bari, I-70126 Bari, Italy}
\affiliation{INFN, Sezione di Bari, I-70126 Bari, Italy}
\author{Alexander Hahn}
\affiliation{Center for Engineered Quantum Systems, Macquarie University, 2109 NSW, Australia}
\author{Mattias Johnsson}
\affiliation{Center for Engineered Quantum Systems, Macquarie University, 2109 NSW, Australia}
\author{Kazuya Yuasa}
\affiliation{Department of Physics, Waseda University, Tokyo 169-8555, Japan}
\date{\today}
\begin{abstract}
	Efficient error estimates for the Trotter product formula are central in quantum computing, mathematical physics, and numerical simulations. However, the Trotter error's dependency on the input state and its application to unbounded operators remains unclear. Here, we present a general theory for error estimation, including higher-order product formulas, with explicit input state dependency. Our approach overcomes two limitations of the existing operator-norm estimates in the literature. First, previous bounds are too pessimistic as they quantify the worst-case scenario. Second, previous bounds become trivial for unbounded operators and cannot be applied to a wide class of Trotter scenarios, including atomic and molecular Hamiltonians.  Our method enables analytical treatment of Trotter errors in chemistry simulations, illustrated through a case study on the hydrogen atom. Our findings reveal: (i) for states with fat-tailed energy distribution, such as low-angular-momentum states of the hydrogen atom, the Trotter error scales worse than expected (sublinearly) in the number of Trotter steps; (ii) certain states do not admit an advantage in the scaling from higher-order Trotterization, and thus, the higher-order Trotter hierarchy breaks down for these states, including the hydrogen atom's ground state; (iii) the scaling of higher-order Trotter bounds might depend on the order of the Hamiltonians in the Trotter product for states with fat-tailed energy distribution. Physically, the enlarged Trotter error is caused by the atom's ionization due to the Trotter dynamics. Mathematically, we find that certain domain conditions are not satisfied by some states so higher moments of the potential and kinetic energies diverge. Our analytical error analysis agrees with numerical simulations, indicating that we can estimate the state-dependent Trotter error scaling genuinely.
\end{abstract}
\maketitle

\section{Introduction}
Solving the Schr{\"o}dinger equation is the most fundamental endeavor in quantum physics.
However, depending on the model, this task is extremely difficult or even impossible to perform analytically.
Therefore, methods to find approximate solutions are of paramount importance.
Besides perturbation theory, the Trotter product formula serves as a powerful tool to approximate the time-evolution of a quantum system.
By subdividing the total time into small steps and by switching between simpler components, it allows the implementation of the dynamics under the full Hamiltonian.
The advantage of this approach is that the approximation becomes more accurate as the smaller time steps are chosen.
Thus, one can tune the accuracy of the Trotter approximation according to the problem at hand.
Due to this, the Trotter product formula has become the basic building block in many areas of physics.
To name three of them:
(i) \emph{Quantum computing} --- One of the most promising and powerful potential applications of quantum computers is simulating the dynamics of quantum systems.
Already in 1982, Feynman conjectured an exponential quantum advantage from this approach~\cite{Feynman1982}\@.
Lloyd then delivered an explicit algorithm in 1996 to implement Feynman's idea~\cite{Lloyd1996} on the basis of the Trotter product formula.
Ever since, it has been an active area of research and the list of papers referring to the exponential quantum advantage is rapidly growing~\cite{QSimPapers2023}\@.
A particularly important subclass of quantum simulation is that of quantum chemistry --- the simulation of chemical materials~\cite{Lanyon2010, Wecker2014, Babbush2015, Poulin2015, Babbush2018, Cao2019, McArdle2020}\@.
It has been proposed as a potential pathway to developing new pharmaceuticals~\cite{Lam2020}, catalysts~\cite{Ahn2019}, or fertilizers~\cite{Berry2019, Lee2021}\@.
All these simulations can be understood in terms of the Trotter product formula.
(ii) \emph{Mathematical physics} --- Here, the study of the convergence of the Trotter product formula has a long history.
See, for instance, Refs.~\cite{Trotter1959, Kato1978} for some early works.
Apart from elementary interest in the Trotter product formula due to its fundamental importance for quantum physics, it also has practical implications for mathematical physics.
For instance, it forms the basis for the path-integral formulation of quantum mechanics~\cite{Nelson1964, Johnson2002, Folland2008}\@.
As such, the Trotter product formula can be the foundation for quantum field theory~\cite{Klauber2013, Simon2005, Nicola2019, Gaveau2004}\@.
(iii) \emph{Numerical simulations} --- The standard algorithms to solve the Schr{\"o}dinger equation numerically are based on the Trotter product formula.
Most prominently, the \emph{split-step} method~\cite{McLachlan2002} (also known as \emph{strang-splitting}) is used whenever the potential energy is reasonably well understood.
This is the case for instance in the simulation of atoms and molecules~\cite{Kosloff1988, Tannor2008, Choi2019, Roulet2019, Chan2023}, quantum optics~\cite{Fitzek2020}, or nonlinear physics such as Bose-Einstein condensation~\cite{Besse2002,Lubich2008,Bao2013,Eilinghoff2016,Henning2017,Ostermann2021,Bao2022,Bao2024}\@.
Furthermore, numerical algorithms for studying quantum many-body systems are based on Trotterization~\cite{Verstraete2004, Zwolak2004, Vidal2004, Cincio2008}\@.

In all practical simulations of quantum systems, it is important to understand the Trotter error.
That is, how close is the Trotter evolution to the actual target evolution?
On one hand, in quantum simulation, one uses the Quantum Phase Estimation algorithm to extract the spectral information of a Hamiltonian of interest.
The controlled unitary gates in the quantum circuit are implemented by Trotterizing between noncommuting elementary gates corresponding to a representation of the target Hamiltonian.
The Trotter error then determines the necessary resources (gate count) of the algorithm.
Due to the lack of large fault-tolerant quantum computers existing to date, providing such resource estimates is a difficult problem.
For this reason, the current approach is to use heuristics, so that resource estimates can be made based on a set of assumptions on the Trotter error~\cite{Reiher2017, Berry2019, Lee2021}\@.
On the other hand, for numerical simulations, one directly Trotterizes between noncommuting parts of the target Hamiltonian or a truncated (discretized) version of it.
Here, the Trotter error determines the convergence of the algorithm to a reasonable solution.
Therefore, efficient implementations of the Trotter product formula and a realistic estimate for the Trotter error are central in the development of both quantum technology and numerical physics.
The current state-of-the-art approaches are bounds, which quantify the Trotter error in terms of the operator norm of the involved commutators~\cite{Childs2021, Schubert2023, Zhuk2023}\@.
However, for particular input states, these results are far from tight.
This is because the operator norm quantifies the worst-case scenario corresponding to the input state with the largest error.
For this reason, a state-dependent analysis leads to a more realistic and tighter estimate of the actual Trotter error.
This has also been noticed e.g.\ in Ref.~\cite{Childs2021}, where the authors conclude: ``The spectral-norm error bound [\dots] would be overly pessimistic if we simulate with a low-energy initial state. It would then be beneficial to change to the error metric of the Euclidean distance to avoid the worst-case propagation.''
A simple example where this can be observed is given in Sec.~\ref{sec:first_order}:
For two Hamiltonians $A$ and $B$, we can have $[A, B]$ large, but $A\varphi\approx0$ and $B\varphi\approx0$ for the input state $\varphi$ of interest.
Even more convincing is the example of atomic and molecular Hamiltonians.
Here, the commutator norm bounds diverge and a state-dependent analysis is actually \emph{mandatory}.
See for instance Refs.~\cite{Simon2005, Ito1998, Ichinose2004, Burgarth2023} or the discussion in Sec.~\ref{sec:previous_results}\@.

These considerations lead to the immediate desire to be able to quantify state-dependent Trotter convergence speeds in a general scenario.
In this paper, we solve this issue.
We derive state-dependent Trotter bounds that apply to arbitrary Hamiltonians with eigenstates.
These bounds might have wide-ranging applications in quantum technology, mathematical physics, and numerical simulations.
They refine existing bounds in the literature, where it is usually assumed that the Trotter error scales as $N^{-1}$, i.e.\ with the inverse number of Trotter steps $N$.
This is justified by comparing the Taylor expansions of Trotterized and target dynamics, which coincide up to the first order.
Due to the importance of efficient Trotterization, a lot of effort has been put into improving this method and achieving agreement of Trotterized and target dynamics in higher Taylor orders as well.
This led to what is called \emph{higher-order Trotter-Suzuki decompositions}~\cite{Childs2021}\@.
In practice, the second-order Trotterization (also known as \emph{strang-splitting} or \emph{split-step method}) is the most commonly used higher-order Trotter scheme.
This is because it improves the asymptotic scaling to $N^{-2}$, while still being similarly cost-efficient as the first-order Trotterization.
As for the standard first-order Trotter product formula, state-dependent bounds for higher-order schemes are not available in the literature yet.
We also derive general state-dependent Trotter bounds for higher-order Trotter schemes, which are valid for arbitrary Hamiltonians with eigenstates.
This combines both advantages of faster error scaling due to higher-order Trotterization and smaller error prefactor due to an explicit dependency on the input state.

Of course, upper bounds are not a mathematical proof of the asymptotic convergence speed.
Therefore, we bolster our general theory with a detailed case study of chemistry simulation.
This numerical analysis convincingly indicates the tightness of the asymptotic scaling of our bounds.
As an application of our results, we study the hydrogen atom, which is \emph{the} paradigmatic example to look at.
Historically, it has been the first quantum system, for which the Schr{\"o}dinger equation was solved analytically~\cite{Schroedinger26, Pauli1926}\@.
This makes the hydrogen atom interesting from a fundamental perspective.
But also for practical purposes, the hydrogen atom is of outstanding importance.
It is simple enough to gain analytical insights while still reflecting many features of more complicated systems.
Therefore, its wave functions form the basis for our understanding of multi-electron atoms~\cite{Helgaker2014}\@.
For example, many molecular Hamiltonians are modeled using linear combinations of atomic orbitals (LCAO), which are derived from the hydrogen eigenfunctions~\cite{Huheey1973}\@.
By studying certain eigenstates of the Hamiltonian of the hydrogen atom, we find that the asymptotic Trotter error scaling with $N$ differs from the predictions in the literature.
More precisely, the scaling with $N$ is determined by the quantum number $\ell$ of the orbital angular momentum.
In particular, we find that all eigenstates with $\ell\geq2$ (d-orbitals and higher) admit the desired $N^{-1}$ scaling.
However, for s-orbitals ($\ell=0$), the Trotter error only scales as $N^{-1/4}$.
This is remarkable as it implies that the Trotter error for the ground state is larger than expected.
Notice that the ground state plays a particularly important role in chemistry simulations as it encodes many interesting properties of an atom or a molecule.

We then ask the question whether higher-order Trotter methods provide relief and admit faster scalings.
This leads to two more revelations.
First, we find an asymmetry in the scaling of the Trotter bound.
This shows that the ordering of the Trotter product could be crucial.
Second, for s-orbitals including the ground state, we do not find any improvement from higher-order schemes.
In fact, the Trotter error scaling remains $N^{-1/4}$.
This shows that our observation on the slower ground-state convergence carries over to $p$th-order product formulas and puts higher-order schemes into question for certain ground-state chemistry simulations.

\subsection{Relation to previous results}\label{sec:previous_results}
Historically, the study of product formulas started in the field of pure mathematics.
However, applications of product formulas are now ubiquitous in many fields, such as applied mathematics (see strang-splitting or symmetric integrators), physics (see Trotterization), and computer science (see split-step method).
In recent years, all these applied fields have contributed to advancing our understanding of product formulas.
Nevertheless, there has been a lack of transfer of the results among these research areas.
A prominent example where this can be observed is that of chemistry simulations, which have only been studied from the perspective of physics and computer science.
In this paper, we employ methods from different fields, shed light on their connections, and discuss the implications for chemistry simulations from different angles.
We hope that this will help to start a discourse among the different areas employing product formulas.

The main contribution of physics and computer science to product formulas has been the development of more and more efficient Trotter error bounds to improve digital simulations.
This led to the seminal and influential work of Childs \textit{et~al.}~\cite{Childs2021}.
These results have been further improved very recently by Schubert and Mendl~\cite{Schubert2023} and Zhuk \textit{et~al.}~\cite{Zhuk2023}\@.
Furthermore, the physical origin of Trotter errors in hopping models is discussed in Refs.~\cite{Abanin2017,Khvalyuk2023}\@.
The error bounds provided in these references quantify the Trotter error in some matrix norm, usually the operator norm, and give rise to a commutator scaling.
Such bounds are particularly well-suited for simulations of many-body systems, local observables, or quantum Monte-Carlo~\cite{Childs2021}\@.
If we wanted to apply these norm bounds in the context of chemistry simulations, we would first truncate (discretize) the atomic or molecular Hamiltonian of interest at some finite level.
Afterwards, we would either simulate the truncated Hamiltonian directly by Trotterization or go to a second-quantized picture.
In any case, the commutator of the noncommuting components of the resulting operator would determine the Trotter error.
The hope is that for a large enough truncation dimension, the actual target dynamics gets approximated well.
With the same logic, the so-computed Trotter error would be close to the true Trotter error of the full system.
However, this logic can sometimes be wrong and additional care has to be taken.
In fact, the norm Trotter errors simply diverge for unbounded operators such as the ones considered in chemistry simulations.
See Eq.~\eqref{eq:commutator_diverge} for an example.
Intuitively, this happens because norm error bounds increase with extending truncation dimension, which ultimately leads to a diverging error.
This shows that there is a very intricate interplay between the error of truncation and the error of Trotterization~\cite{Burgarth2023}\@.
The circumstance that norm error bounds are not the right quantities to look at for product formulas is well known in pure~\cite{Trotter1959, Kato1978, Simon2005} and applied mathematics~\cite{Jahnke2000, Thalhammer2008, Hansen2009, Neuhauser2009, Kieri2015} as well as mathematical physics~\cite{Ichinose2004, Simon2005}\@.
For this reason, it is necessary to study \emph{state-dependent} (also known as \emph{strong} in the mathematical literature) error bounds for chemistry simulations.
So far, Trotter methods considering low-energy subspaces have been developed~\cite{Sahinoglu2021, Yi2022, Gong2023} and some effort on infinite-dimensional and state-dependent Trotter bounds has been made~\cite{LeVeque1983,Rogava1993,Sheng1994,Tang1995,Kuroda1995,Dia1997,Ichinose1997,Ichinose1997a,Takanobu1997,Ichinose1998,Doumeki1998,Neidhardt1999,Jahnke2000,Ichinose2004a,Thalhammer2008,Hansen2009,Neuhauser2009,Kieri2015,An2021,Hatomura2022,Burgarth2023,Hatomura2023}\@.
However, these results are not general enough to apply to atomic and molecular Hamiltonians in quantum chemistry.
In particular, the bounds in Refs.~\cite{LeVeque1983, Sahinoglu2021, Yi2022, Hatomura2022, Burgarth2023, Hatomura2023, Gong2023} still diverge as they are considering finite-dimensional systems.
The bounds derived in Refs.~\cite{Rogava1993,Sheng1994,Dia1997,Ichinose1997,Ichinose1997a,Takanobu1997,Ichinose1998,Doumeki1998,Neidhardt1999,Ichinose2004a} do not consider the unitary Trotter product formula, but the semigroup case instead.
This is an unphysical scenario, in which the norm of the input state is not conserved, but decreases with time.
Due to this decay behavior, it is much easier to derive semigroup Trotter bounds than in the unitary setting.
In fact, the aforementioned references could use the decay behavior in semigroups to derive \emph{norm} error bounds in situations where only state-dependent bounds become non-trivial for unitary Trotterization.
On the contrary, Refs.~\cite{Tang1995,Jahnke2000,Thalhammer2008,Hansen2009,Neuhauser2009,Kieri2015} are concerned with state-dependent Trotter bounds for the semigroup setting.
Due to the semigroup nature of these results, they still do not apply to a physical unitary time-evolution.
In addition, the assumptions made in these papers are not satisfied by chemistry Hamiltonians.
In particular, Ref.~\cite{Thalhammer2008} and Ref.~\cite[Thms.~2.1 and~2.2]{Jahnke2000} still require at least one of the two operators to be bounded.
Reference~\cite{Tang1995} does not consider the setting of a quantum mechanical Hilbert space.
Furthermore, the assumptions on the (multi-) commutators in Refs.~\cite{Hansen2009, Neuhauser2009,Kieri2015} and Ref.~\cite[Thm.~2.3]{Jahnke2000} are not satisfied for chemistry simulations.
We should mention that Ref.~\cite{Neuhauser2009} shows how to extend their semigroup results to the unitary case by Friedrich's extension.
Additionally, Refs.~\cite{Kuroda1995,Hochbruck2003,An2021} and~\cite[Sec.~4.5]{Bao2013} consider state-dependent Trotter bounds for the unitary setting.
Nevertheless, all these references make additional assumptions about the commutator (or potential in Ref.~\cite{Bao2013}), which are not satisfied by chemistry Hamiltonians.
This shows that there is a pressing gap in the literature, which concerns one of the most important classes of models for quantum computing.
On the contrary, our bounds are directly applicable to chemistry simulations.
They apply to general $p$th-order product formulas as well and provide conditions under which the Trotter error scales as $N^{-p}$.
For the case where these conditions are violated, we derive error bounds with slower fractional scalings.
Notice the effect of potentially fractional Trotter error scaling is known in the mathematical physics literature discussed above~\cite{Rogava1993, Kuroda1995, Ichinose1997, Takanobu1997, Doumeki1998, Neidhardt1999, Jahnke2000, Ichinose2004a}\@.

Due to the generality, our results can explain some open questions in the literature.
For example, the authors of Ref.~\cite{Chan2023} numerically found larger absolute ground-state Trotter errors that imply increased resource estimates for quantum chemistry.
So far, an explanation of this circumstance is lacking.
Furthermore, the problem of slower Trotter convergence for the ground state of the hydrogen atom has already been observed numerically in 2006~\cite{Chin2006}\@.
Nevertheless, the explanation for this effect remained open.
Both effects are aligned with our investigation and can be explained by our results.
Notice that similar issues are known for the Numerov method in numerical simulation~\cite{Buendia1985}, which might be due to related reasons.
In addition, a recent analysis by Lee \textit{et~al.}~\cite{Lee2023} suggests that a proof for an exponential quantum advantage is yet to be found.
In their conclusion, the authors state~\cite{Lee2023}\@:
``Numerical calculations are neither mathematical proof of asymptotics with respect to size and error, nor can we exclude exponential quantum advantage in specific problems.
However, our results suggest that without new and fundamental insights, there may be a lack of generic exponential quantum advantage in this task.''
Our paper aims to kick off such a discussion and provides ideas on how to tackle problems in quantum chemistry analytically.
Notice that our results only imply a polynomially increased runtime and do not affect a potential exponential quantum advantage of quantum simulations.

\subsection{Structure of the paper}
The remainder of the paper is structured as follows.
In Sec.~\ref{sec:first_order}, we present state-dependent error bounds for the standard (first-order) Trotter product formula (Thm.~\ref{thm:trotter_thm}).
These bounds are valid even for unbounded operators such as atomic and molecular Hamiltonians under certain constraints on the input state.
Furthermore, we show how to obtain state-dependent error bounds when the input state does \emph{not} satisfy the conditions (Thm.~\ref{thm:first_order_alpha}).
In this case, the scaling of the Trotter error gets slowed down.
We apply these results to the hydrogen atom in Sec.~\ref{sec:bound} and find slower Trotter scalings for the ground state and other states with low angular momenta.
Afterwards, these findings are verified numerically.
Section~\ref{sec:second_order} is dedicated to the second-order Trotterization.
We develop state-dependent bounds here, which are also applicable to unbounded Hamiltonians (Thm.~\ref{thm:2nd_order}).
Again, these bounds rely on assumptions on the input state of interest, and error bounds with slower scalings are obtained when the assumptions are not satisfied (Thm.~\ref{thm:second_order_alpha}).
These general results are complemented by the application to the hydrogen atom in Sec.~\ref{sec:second_order_hydrogen}\@.
We find that there is no improvement from the second-order strategy for the ground state.
However, excited states can admit a provably advantageous scaling given the ordering of the operators in the Trotter product is correct.
In Sec.~\ref{sec:higher_order}, we present a method to obtain state-dependent bounds for higher-order Trotter-Suzuki product formulas.
We explicitly compute the bounds for a fourth-order Trotterization (Thm.~\ref{thm:4th_order}).
Furthermore, we provide a loose estimate for arbitrary-order Trotterization, that shows the desired asymptotic scaling if the input state satisfies certain conditions (Thm.~\ref{thm:loose_bound_higher_order}).
These conditions are not satisfied for the ground state of the hydrogen atom, indicating the breakdown of the higher-order Trotter hierarchy.
Section~\ref{sec:methods} then gives more details on the methods and proof ideas used to obtain the results.
Possible generalizations and an outlook on how our methodology might apply to other areas of research are given in Sec.~\ref{sec:outlook}\@.
Finally, we conclude in Sec.~\ref{sec:conclusion}\@.
The technical details of the derivations are collected in the appendices.
Appendix~\ref{sec:proof_trotter_thm} provides the proof of the first-order Trotter bounds presented in Sec.~\ref{sec:first_order} (Thm.~\ref{thm:trotter_thm} and Thm.~\ref{thm:first_order_alpha}).
How these results explicitly carry over to the hydrogen atom is discussed in Appendix~\ref{sec:hydrogen_bounds}\@.
The method to obtain the higher-order Trotter bounds is then outlined in detail in Appendix~\ref{sec:higher_order_derivation}\@.
This enables us to prove the second-order Trotter bounds presented in Sec.~\ref{sec:second_order} (Thm.~\ref{thm:2nd_order} and Thm.~\ref{thm:second_order_alpha}), and the loose bound on the general $p$th-order Trotter error (Thm.~\ref{thm:loose_bound_higher_order}).
Afterward, we provide a Mathematica script to explicitly calculate higher-order bounds with small prefactors in Appendix~\ref{sec:Mathematica}\@.
This script is used to compute the fourth-order Trotter bounds (Thm.~\ref{thm:4th_order}) as well as the bounds for a particular sixth-order product formula.

\section{First-order Trotter bounds}
\label{sec:first_order}
For two Hamiltonians $H_1$ and $H_2$, the commonly used bound in the literature to estimate the Trotter error is~\cite{Childs2021, Suzuki1985}
\begin{align}
&\left\|
\left(\rme^{-\rmi \frac{t}{N}H_2}\rme^{-\rmi \frac{t}{N}H_1}\right)^{N}
-\rme^{-\rmi t(H_{1}+H_{2})}
\right\|\nonumber \\
&\qquad\qquad\qquad\qquad\qquad
\le\frac{t^{2}}{2N}\Vert [H_{1},H_{2}]\Vert,
\label{eq:operator_norm_bound}
\end{align}
where $\Vert A\Vert=\sup_{\Vert\psi\Vert=1}\Vert A\psi\Vert$ is the operator norm of an operator $A$ with $\Vert\psi\Vert=\sqrt{\langle\psi\vert\psi\rangle}$ the (Euclidean) norm of a vector $\psi$.
If $A$ is a matrix, $\Vert A\Vert$ computes its largest singular value.
Thus, the standard error measure in Eq.~\eqref{eq:operator_norm_bound} considers a worst-case scenario of the input state with the largest Trotter error.
Clearly, by incorporating the explicit dependency on the input state, a tighter and more accurate error analysis is possible. 
In particular, consider the case mentioned in the introduction, where $[H_1, H_2]$ is large but $H_1\varphi\approx 0$ and $H_2\varphi\approx 0$ for an input state $\varphi$ of interest.
A simple example of this scenario is given by the two block matrices
\begin{equation}
H_1=\begin{pmatrix}
\medskip
0 & \epsilon\\
\epsilon & A
\end{pmatrix},
\qquad
H_2=\begin{pmatrix}
\medskip
0 & \epsilon\\
\epsilon & B
\end{pmatrix},
\label{eq:example}
\end{equation}
with
$\Vert[A, B]\Vert\gg 1$ and $\Vert\epsilon\Vert\ll 1$.
Here, $\Vert[H_1,H_2]\Vert\gg 1$ due to the noncommuting blocks $A$ and $B$.
Thus, the norm error~\eqref{eq:operator_norm_bound} is large in this case.
However, consider a state
\begin{equation}
\varphi=\begin{pmatrix}
\medskip
\varphi_0 \\
0
\end{pmatrix},
\label{eq:state}
\end{equation}
which is only supported in the zero blocks of $H_1$ and $H_2$.
For such a state, one has $\Vert H_1\varphi\Vert=\Vert H_2\varphi\Vert=\Vert\epsilon\varphi_0\Vert\ll 1$, which would lead to a small state-dependent Trotter error.
This example shows that norm bounds can indeed be very loose in situations, where state-dependent bounds are actually small.
Intuitively, one would expect this to happen for low-energy input states~\cite{Sahinoglu2021, Yi2022}.
Therefore, it is natural to consider the Trotter error evaluated on eigenstates of the target Hamiltonian.

Furthermore, we will see in Sec.~\ref{sec:bound} that state-dependent bounds are necessary to describe problems in quantum chemistry.
This is because in general we have $\Vert[H_1, H_2]\Vert=\infty$ since the involved operators are unbounded.
Our state-dependent bounds for the Trotter product formula are valid for both bounded and unbounded Hamiltonians.
For the case of unbounded Hamiltonians, we need to impose a condition on the input state.
This is because an unbounded operator $A$ is only defined on a certain (dense) subspace $\mathcal{D}(A)\subset\mathcal{H}$ of the Hilbert space $\mathcal{H}$.
The space $\mathcal{D}(A)$ is called the \emph{domain} of $A$ and incorporates many important physical properties of the operator, such as boundary conditions.

As a simple example, consider the standard one-dimensional position operator 
$X:\psi(x)\mapsto x \psi(x)$ on wave functions $\psi\in \mathcal{H}=L^2(\mathbb{R})$.
The operator $X$ has to map a wave function into a wave function. 
That is, we have to require that  $X \psi$ is again square-integrable, i.e.\ $x\psi(x)\in L^2(\mathbb{R})$.
This imposes a condition on the possible input states of $X$, which restricts its domain $\mathcal{D}(X)=\{\psi \in L^2(\mathbb{R}) : x\psi(x)\in L^2(\mathbb{R})\}$. 
Obviously, $\mathcal{D}(X)$ is a strict subset of the entire $L^2(\mathbb{R})$. 
For example, the square-integrable function $\psi(x)= 1/(1+|x|)$ does not belong to $\mathcal{D}(X)$, because $x\psi(x)$ is not a wave function ($\|X\psi\|=\infty$).

Notice that all Hamiltonians on \emph{finite}-dimensional Hilbert spaces are automatically bounded.
Bounded operators are defined on the entire Hilbert space so that we do not have to worry about their domains.
Therefore, the domain conditions in our results are trivially satisfied and can be dropped for all \emph{bounded} Trotter pairs, in particular for finite-dimensional systems.
Conversely, for unbounded self-adjoint operators, the domain is always a strict subspace of the Hilbert space and the domain conditions are mandatory.

We now present our state-dependent bound for the first-order Trotter product formula.
For simplicity, we initially consider eigenstates and lift these bounds to general states later.
To this end, we define the eigenstate-dependent error
\begin{equation}
	\xi_N(t;\varphi)\equiv
	\left\Vert \left[\left(\rme^{-\rmi \frac{t}{N}H_2}\rme^{-\rmi \frac{t}{N}H_1}\right)^{N}-\rme^{-\rmi th}\right]\varphi\right\Vert,\label{eq:definition_standard_error}
\end{equation}
where $h$ is an eigenstate of $H_1+H_2$, i.e.\ $(H_1+H_2)\varphi=h\varphi$.
Theorem~\ref{thm:trotter_thm} gives a bound on $\xi_N(t;\varphi)$ and is a generalization of Ref.~\cite[Main result 1]{Burgarth2023} to unbounded Hamiltonians.
\begin{thm}\label{thm:trotter_thm}
Let $H_{1}$ be self-adjoint on $\mathcal{D}(H_{1})$ and $H_{2}$
be self-adjoint on $\mathcal{D}(H_{2})$. 
Let $\varphi$ be an eigenstate of $H_{1}+H_{2}$ with eigenvalue $h$, i.e.\ $(H_{1}+H_{2})\varphi=h\varphi$.
If $\varphi\in\mathcal{D}(H_{1}^{2})\cap\mathcal{D}(H_{2}^{2})$,
then 
\begin{equation*}
\xi_N(t;\varphi)\le\frac{t^{2}}{2N}\,\Bigl(\Vert(H_{1}-g)^{2}\varphi\Vert+\Vert(H_{2}-h+g)^{2}\varphi\Vert\Bigr),
\end{equation*}
for all $t,g\in\mathbb{R}$, and the Trotter product formula converges on $\varphi$.
\end{thm}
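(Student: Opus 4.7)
The plan is to first reduce the $N$-step error to a single-step error via telescoping, and then exploit the eigenstate relation $(H_1+H_2)\varphi = h\varphi$ by absorbing the eigenvalue into a scalar shift. Writing $S_\tau = \rme^{-\rmi\tau H_2}\rme^{-\rmi\tau H_1}$ with $\tau = t/N$ and $\alpha = \rme^{-\rmi\tau h}$, the standard identity $S_\tau^N - \alpha^N = \sum_{k=0}^{N-1}\alpha^k S_\tau^{N-1-k}(S_\tau - \alpha)$ combined with the unitarity of $S_\tau$ and $|\alpha|=1$ gives
\begin{equation*}
\xi_N(t;\varphi) \le N\,\|(S_\tau - \alpha)\varphi\|.
\end{equation*}
Introducing the shifted operators $A = H_1 - g$ and $B = H_2 - h + g$, which are self-adjoint on $\mathcal{D}(H_1)$ and $\mathcal{D}(H_2)$ respectively and satisfy $(A+B)\varphi = 0$, the scalar phases factor out to give $\rme^{-\rmi\tau H_2}\rme^{-\rmi\tau H_1} = \rme^{-\rmi\tau h}\rme^{-\rmi\tau B}\rme^{-\rmi\tau A}$, whence $\|(S_\tau - \alpha)\varphi\| = \|(\rme^{-\rmi\tau B}\rme^{-\rmi\tau A} - I)\varphi\|$. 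The task therefore reduces to bounding this one-step quantity by $\tfrac{\tau^2}{2}(\|A^2\varphi\|+\|B^2\varphi\|)$.

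The crucial step is to use the unitarity of $\rme^{-\rmi\tau B}$ to flip it to the opposite side,
\begin{equation*}
\|(\rme^{-\rmi\tau B}\rme^{-\rmi\tau A} - I)\varphi\| = \|\rme^{-\rmi\tau A}\varphi - \rme^{\rmi\tau B}\varphi\|,
\end{equation*}
so that each exponential now acts directly on $\varphi$. Because $\varphi \in \mathcal{D}(A^2) \cap \mathcal{D}(B^2) = \mathcal{D}(H_1^2) \cap \mathcal{D}(H_2^2)$, second-order Taylor expansions with integral remainder yield $\rme^{-\rmi\tau A}\varphi = \varphi - \rmi\tau A\varphi + \rho_A$ and $\rme^{\rmi\tau B}\varphi = \varphi + \rmi\tau B\varphi + \rho_B$, with $\|\rho_A\| \le \tfrac{\tau^2}{2}\|A^2\varphi\|$ and $\|\rho_B\| \le \tfrac{\tau^2}{2}\|B^2\varphi\|$ obtained via the spectral theorem from the pointwise bound $|\rme^{-\rmi\tau x} - 1 + \rmi\tau x| \le \tfrac{\tau^2 x^2}{2}$. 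Subtracting, the linear terms cancel identically since $A\varphi + B\varphi = 0$, and one is left with $\|\rho_A - \rho_B\| \le \tfrac{\tau^2}{2}(\|A^2\varphi\| + \|B^2\varphi\|)$. Inserting this into the telescoping estimate yields exactly the claimed inequality, and since the right-hand side vanishes as $N\to\infty$, strong convergence of the Trotter product formula on $\varphi$ follows.

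The main subtlety, and the reason the unitarity flip is essential, is the need to avoid cross terms such as $(\rme^{-\rmi\tau B}-I)(\rme^{-\rmi\tau A}-I)\varphi$ that would appear in a naive expansion of the product. Such a term involves the vector $(\rme^{-\rmi\tau A}-I)\varphi$, which for unbounded $B$ need not lie in $\mathcal{D}(B)$; consequently the standard integral identity $(\rme^{-\rmi\tau B}-I)\psi = -\rmi\int_0^\tau \rme^{-\rmi s B}B\psi\,\rmd s$ is inapplicable, and without further domain compatibility between $A$ and $B$ one cannot recover a clean $\tau^2$ estimate. The reformulation $\|\rme^{-\rmi\tau A}\varphi - \rme^{\rmi\tau B}\varphi\|$ circumvents this entirely: both exponentials act on the same vector $\varphi$, the eigenstate condition is exactly what kills the first-order terms, and no commutator or iterated domain assumption on $A$ and $B$ is invoked. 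This is precisely what will make the estimate applicable to atomic and molecular Hamiltonians, where $[H_1,H_2]$ is generically unbounded.
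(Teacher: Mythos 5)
Your proof is correct and follows essentially the same route as the paper's: telescoping to a single Trotter step, absorbing the eigenvalue and the shift $g$ into $A=H_1-g$ and $B=H_2-h+g$, using unitarity to rewrite the one-step error as $\|\rme^{-\rmi\tau A}\varphi-\rme^{\rmi\tau B}\varphi\|$, and cancelling the first-order terms via $(A+B)\varphi=0$ before bounding the second-order Taylor remainders. The only cosmetic difference is that you bound the remainders via the spectral theorem and the pointwise inequality $|\rme^{-\rmi y}-1+\rmi y|\le y^2/2$, whereas the paper uses the integral representation of the remainder; both are equivalent under the stated domain hypothesis.
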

\begin{proof}
This is proved in Appendix~\ref{sec:proof_trotter_thm}\@.
The idea is to shift the energy of $H_1+H_2$ in $\varphi$ to zero.
By this approach, the target evolution applied to $\varphi$ becomes the identity. 
The Trotter error is bounded by $N$ times the error in a single step, which can be bounded under the above domain conditions on $\varphi$ by using a second-order Taylor expansion for the short-time evolution
\begin{equation}
\rme^{-\rmi \frac{t}{N} H_j}\varphi= \varphi -\rmi \frac{t}{N} H_j \varphi + \mathcal{O}\!\left(\frac{t^2}{N^2}\|H_j^2\varphi\|\right),
\label{eq:Taylor2t}
\end{equation}
for $j=1,2$.
\end{proof}
One might wonder why Thm.~\ref{thm:trotter_thm} does not give rise to a commutator scaling.
However, one can show that general state-dependent Trotter bounds with  a commutator scaling cannot exist:
\begin{thm}\label{thm:no-go_commutator_scaling}
There exist self-adjoint operators $H_{1}$ and $H_{2}$, and vectors $\varphi\in\mathcal{D}(H_1H_2)\cap\mathcal{D}(H_2H_1)$, such that a  Trotter error bound of the form
\begin{equation}
\xi_N(t;\varphi)\leq \omega\Vert [H_1,H_2]\varphi\Vert,
\label{eq:no-go_commutator_scaling}
\end{equation}
with $\omega>0$, does not hold.
\end{thm}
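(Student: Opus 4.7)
The plan is to exhibit a concrete finite-dimensional counterexample---a triple $(H_1,H_2,\varphi)$ for which the right-hand side of~(\ref{eq:no-go_commutator_scaling}) vanishes while the left-hand side does not. Working in a finite-dimensional Hilbert space is sufficient, since $H_1$ and $H_2$ are then automatically bounded and the domain condition $\varphi\in\mathcal{D}(H_1H_2)\cap\mathcal{D}(H_2H_1)$ is trivially met.

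The first step is a structural observation that tells me where to look. Suppose $(H_1+H_2)\varphi=h\varphi$ and $[H_1,H_2]\varphi=0$. Since $[H_1+H_2,H_j]\varphi=\pm[H_1,H_2]\varphi=0$ for $j=1,2$,
\begin{equation*}
(H_1+H_2)(H_j\varphi)=H_j(H_1+H_2)\varphi+[H_1+H_2,H_j]\varphi=hH_j\varphi,
\end{equation*}
so both $H_1\varphi$ and $H_2\varphi$ lie in the $h$-eigenspace of $H_1+H_2$. If $h$ is non-degenerate this forces $H_1\varphi,H_2\varphi\propto\varphi$, making $\varphi$ a simultaneous eigenvector of $H_1$ and $H_2$, on which the Trotter formula is exact and $\xi_N(t;\varphi)=0$. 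A counterexample therefore requires a \emph{degenerate} eigenvalue $h$. This also rules out $\dim\mathcal{H}=2$, where degeneracy of an eigenvalue of $H_1+H_2$ forces $H_1+H_2\propto I$ and hence $[H_1,H_2]=0$ identically.

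The second step is to realize the counterexample on $\mathbb{C}^3$. I would take $H_1+H_2$ to have rank one, so that its kernel is two-dimensional, and then search within that kernel for a $\varphi$ annihilated by $[H_1,H_2]$ but with $H_1\varphi,H_2\varphi\not\propto\varphi$. A short algebraic search produces such data explicitly: setting $H_2=|1\rangle\langle 1|-H_1$ for a tridiagonal self-adjoint $H_1$ and choosing $\varphi=|3\rangle$ works.

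The final step is to verify that $\xi_N(t;\varphi)>0$. Because $(H_1+H_2)\varphi=0$ in this construction, the target evolution fixes $\varphi$, and a telescoping bound reduces the $N$-step error to $N$ copies of the single-step discrepancy $(\rme^{-\rmi sH_2}\rme^{-\rmi sH_1}-1)\varphi$ with $s=t/N$. Its $O(s^2)$ term is proportional to $[H_1,H_2]\varphi$ and vanishes by construction, but the $O(s^3)$ coefficient is an independent combination of cubic monomials in $H_1,H_2$ which I would check acts nontrivially on $\varphi$, giving $\xi_N(t;\varphi)=\Theta(t^3/N^2)>0$ for generic $t$. I expect this last verification to be the principal obstacle: once the quadratic term has been engineered to vanish, one must confirm that the cubic remainder has not accidentally vanished as well, which depends on the specific matrix entries and admits no shortcut beyond direct computation.
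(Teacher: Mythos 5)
Your proposal is correct and follows essentially the same route as the paper: its Example~1 is exactly a $3\times3$ counterexample of the form you describe ($H_1$ tridiagonal, $H_1+H_2$ a rank-one projector, $\varphi$ a kernel vector with $[H_1,H_2]\varphi=0$ but $H_1\varphi\not\propto\varphi$), so your construction is a relabeled version of the paper's, and your third-order check $\tfrac{1}{6}H_1^3\varphi+\tfrac{1}{2}H_2H_1^2\varphi+\tfrac{1}{3}H_2^2H_1\varphi\neq0$ correctly supplies the nonvanishing of $\xi_N$ that the paper merely asserts. The only additions on the paper's side are a second, unbounded example where the bound fails on a dense set, while your degeneracy/dimension-count motivation is extra structure the paper does not include.
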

\begin{proof}
Essentially, the reason why a general bound~\eqref{eq:no-go_commutator_scaling} cannot exist is because $[H_1, H_2]\varphi$ does not carry enough information.
Therefore, we can easily construct counter-examples, where such a bound fails to apply.
To this end, we have to find an example with $[H_1, H_2]\varphi=0$ but $[\rme^{-\rmi t H_1},\rme^{-\rmi t H_2}]\varphi\neq0$.
Then, the right-hand side in Eq.~\eqref{eq:no-go_commutator_scaling} is zero, even though the Trotter formula is not exact.
We now give two examples of this situation.
\emph{Example 1} --- Consider the two $3\times3$ Hamiltonians
	\begin{equation*}
		H_1 = \begin{pmatrix}
			0 & 1 & 0 \\
			1 & 0 & 1 \\
			0 & 1 & 1
		\end{pmatrix},
		\quad
		H_2 = \begin{pmatrix}
			0 & -1 & 0 \\
			-1 & 0 & -1 \\
			0 & -1 & 0
		\end{pmatrix},
	\end{equation*}
	and the input state
	\begin{equation*}
		\varphi = \begin{pmatrix}
			1 \\
			0 \\
			0
		\end{pmatrix}.
	\end{equation*}
Notice that $\varphi$ is an eigenstate of $H_1+H_2$ with eigenvalue zero.
Furthermore, we have $\Vert [H_1,H_2]\varphi\Vert=0$, but $\xi_N(t;\varphi)>0$ if $N<\infty$ and $t\neq0$.
\emph{Example 2} --- While Example 1 only goes wrong for certain input states, it is even possible to construct examples where $\Vert [H_1,H_2]\varphi\Vert=0$ but $\xi_N(t;\varphi)>0$ on a \emph{dense} set of states.
For this, consider the model from Ref.~\cite{Burgarth2021}: $H_1=-P^2$ and $H_2=P^2+P$, where $P$ is the one-dimensional momentum operator defined on the real half-line $L^2(\mathbb{R}_+)$.
The operators $H_1$ and $H_2$ are self-adjoint on a common dense domain $D$~\cite[Eq.~(16)]{Burgarth2021}, and  $[H_1,H_2]\varphi=0$ for all $\varphi\in \mathcal{D}(0,+\infty)\subset\mathcal{D}(H_1H_2)\cap\mathcal{D}(H_2H_1)$, which is dense in $L^2(\mathbb{R}_+)$.
Thus, the right-hand side of Eq.~\eqref{eq:no-go_commutator_scaling} is zero.
However, as shown in Ref.~\cite{Burgarth2021}, the sum $H_1+H_2=P$ is not essentially self-adjoint on $D$ and Trotter does not converge strongly on $D$.
Therefore, the actual Trotter error is larger than zero even in the limit $N\rightarrow\infty$.
\end{proof}
Notice that there might not always be eigenstates, i.e.\ if $H_1+H_2$ only admits a purely continuous spectrum.
However, this does not happen in chemistry problems, where the states of bounded electrons (``orbitals'') constitute a set of eigenstates for the atomic or molecular target Hamiltonian.
Note also that Thm.~\ref{thm:trotter_thm} naturally extends to Trotter products with more than two operators.
See the discussion around Eq.~(\ref{eq:first_order_more_operators}).

To treat generic input states, we can apply Thm.~\ref{thm:trotter_thm} to superpositions of eigenstates.
\begin{corol}\label{cor:superposition_states}
For states that are superpositions of eigenstates of $H_1+H_2$,
\begin{equation*}
\psi=\sum_{\ell=1}^L c_{\ell}\varphi_{\ell},
\end{equation*}
with $L$ finite or $L=\infty$ and $(H_1+H_2)\varphi_\ell=h_\ell\varphi_\ell$,
 we obtain for all $g_\ell\in\mathbb{R}$
	\begin{align}
		&\xi_N (t;\psi)
		\le\sum_{\ell=1}^L |c_\ell|\xi_N(t;\varphi_\ell)
		\le\left(\sum_{\ell=1}^L  \xi_N(t;\varphi_\ell)^2\right)^{1/2}
		\nonumber\\
		&\quad\le\frac{t^{2}}{2N}\left[\sum_{\ell=1}^L\Bigl(
		\Vert H_{1}(g_\ell)^{2}\varphi_{\ell}\Vert
		+\Vert H_{2}(h_\ell-g_\ell)^{2}\varphi_{\ell}\Vert\Bigr)^2\right]^{1/2},
		\label{eq:superposition_states}
	\end{align}
where $H_j(g)=H_j-g$ ($j=1,2$), and the right-hand side is defined as $+\infty$ if it diverges~\cite{note3}. 
\end{corol}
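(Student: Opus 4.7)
The plan is to chain three elementary inequalities: linearity of the Trotter error, Cauchy--Schwarz on the coefficient/error sequences, and the eigenstate bound of Theorem~\ref{thm:trotter_thm} applied term by term. For the first step, I would use linearity of $(\rme^{-\rmi\frac{t}{N}H_2}\rme^{-\rmi\frac{t}{N}H_1})^N - \rme^{-\rmi t(H_1+H_2)}$ together with the spectral identity $\rme^{-\rmi t(H_1+H_2)}\varphi_\ell = \rme^{-\rmi t h_\ell}\varphi_\ell$ to write
\begin{equation*}
\Bigl[\bigl(\rme^{-\rmi\frac{t}{N}H_2}\rme^{-\rmi\frac{t}{N}H_1}\bigr)^N - \rme^{-\rmi t(H_1+H_2)}\Bigr]\psi = \sum_{\ell=1}^L c_\ell\Bigl[\bigl(\rme^{-\rmi\frac{t}{N}H_2}\rme^{-\rmi\frac{t}{N}H_1}\bigr)^N - \rme^{-\rmi t h_\ell}\Bigr]\varphi_\ell,
\end{equation*}
and the triangle inequality then delivers the first estimate $\xi_N(t;\psi)\le\sum_\ell|c_\ell|\,\xi_N(t;\varphi_\ell)$ of~\eqref{eq:superposition_states}.

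Next, I would apply Cauchy--Schwarz,
\begin{equation*}
\sum_\ell |c_\ell|\,\xi_N(t;\varphi_\ell)\le\biggl(\sum_\ell|c_\ell|^2\biggr)^{\!1/2}\biggl(\sum_\ell\xi_N(t;\varphi_\ell)^2\biggr)^{\!1/2},
\end{equation*}
and use that, with the $\{\varphi_\ell\}$ chosen orthonormally within each eigenspace of the self-adjoint operator $H_1+H_2$, one has $\sum_\ell|c_\ell|^2 = \Vert\psi\Vert^2\le 1$ for a normalized input, giving the second estimate. The third estimate follows by invoking Theorem~\ref{thm:trotter_thm} for each $\varphi_\ell$ separately, with its own shift $g_\ell\in\mathbb{R}$ and under the assumption $\varphi_\ell\in\mathcal{D}(H_1^2)\cap\mathcal{D}(H_2^2)$, which replaces each $\xi_N(t;\varphi_\ell)$ by $\frac{t^{2}}{2N}\bigl(\Vert H_1(g_\ell)^2\varphi_\ell\Vert+\Vert H_2(h_\ell-g_\ell)^2\varphi_\ell\Vert\bigr)$. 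For $L=\infty$, the right-hand side is interpreted as $+\infty$ whenever the series diverges; when it is finite, the increasing partial sums furnish a uniform bound on $\xi_N(t;\psi)$ that passes to the limit.

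The only real subtlety---and the reason the corollary has to be phrased eigenstate by eigenstate rather than simply invoking Theorem~\ref{thm:trotter_thm} once on $\psi$---is that one must avoid demanding $\psi\in\mathcal{D}(H_1^2)\cap\mathcal{D}(H_2^2)$. For generic superpositions this domain condition may fail even when the right-hand side of~\eqref{eq:superposition_states} is finite, so applying the theorem term by term is precisely what allows the estimate to cover states $\psi$ lying outside the joint domain, an \emph{asymmetry} that will be crucial for the hydrogenic applications later in the paper.
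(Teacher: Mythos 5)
Your proof is correct and follows essentially the same route as the paper's own (one-line) argument: triangle inequality on the eigenbasis decomposition, Cauchy--Schwarz together with $\sum_\ell|c_\ell|^2=\|\psi\|^2\le1$ for orthonormal $\varphi_\ell$ and normalized $\psi$, and Thm.~\ref{thm:trotter_thm} applied termwise with individual shifts $g_\ell$. One small correction to your closing remark: the primary reason Thm.~\ref{thm:trotter_thm} cannot be invoked on $\psi$ directly is simply that $\psi$ is not an eigenstate of $H_1+H_2$, the domain issue being secondary.
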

\begin{proof}
	This follows by the triangle and Cauchy-Schwarz inequalities.
\end{proof}

Theorem~\ref{thm:trotter_thm} requires the input state to be in the domain intersection $\varphi\in\mathcal{D}(H_{1}^{2})\cap\mathcal{D}(H_{2}^{2})$, so that
$\|H_j^2\varphi\|<\infty$ ($j=1,2$) and the bound does not diverge.
However, this might not always be the case.
In particular, we will see in Sec.~\ref{sec:bound} that this domain condition is violated by the ground state of the hydrogen atom.
For such a case, we establish the following theorem, which works under relaxed domain conditions.
It comes at the price of a slower Trotter convergence speed.
The scaling is ruled by the high-energy distribution of the input state $\varphi$,
\begin{equation*}
\operatorname{\mathrm{Prob}}_{\varphi}(|H_j|\geq \Lambda)
=\|\theta(|H_j|-\Lambda)\varphi\|^2 
=\mu_{j,\varphi}(\{|\lambda|\geq\Lambda\}),
\end{equation*}
where $\theta$ is the Heaviside step function, and
$\mu_{j,\varphi}(\Omega)$ 
with $\Omega\subset\mathbb{R}$ are
the spectral measures of the Hamiltonians $H_j$ ($j=1,2$) at the input state $\varphi$.
\begin{thm}\label{thm:first_order_alpha}
Let $H_1$ be self-adjoint on $\mathcal{D}(H_1)$ and $H_2$ be self-adjoint on $\mathcal{D}(H_2)$. 
Let $\varphi$ be an eigenstate of $H_1+H_2$ with eigenvalue $h$, i.e.\ $(H_1+H_2)\varphi=h\varphi$. 
Furthermore, let the tails of the spectral measures $\mu_{j,\varphi}$ of $H_j$ at $\varphi$ decay as
\begin{equation}
\mu_{j,\varphi}(\{|\lambda|\geq\Lambda\})
=\mathcal{O}\!\left(\frac{1}{\Lambda^{2\delta}}\right),\quad j=1,2,
\label{eq:tail_bound_1}
\end{equation}
for $\Lambda>0$ and some $\delta>1$. 
Then,
\begin{equation*}
\xi_N(t;\varphi)
=\begin{cases}
\medskip
\displaystyle
\mathcal{O}\!\left(\frac{t^\delta}{N^{\delta-1}}\right),&1<\delta<2,\\
\medskip
\displaystyle
\mathcal{O}\!\left(\frac{t^2}{N}\sqrt{\log (N/t)}\right),&\delta=2,\\
\displaystyle
\mathcal{O}\!\left(\frac{t^2}{N}\right),&\delta>2.
\end{cases}
\end{equation*}
\end{thm}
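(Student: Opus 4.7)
The plan is to mimic the proof of Thm.~\ref{thm:trotter_thm}, but to replace the second-order Taylor bound $\|R_j\|\le \tau^2\|H_j^2\varphi\|/2$, which is unavailable here, by a spectral-calculus estimate that uses only the tail decay~\eqref{eq:tail_bound_1}. Following the energy-shift trick of Thm.~\ref{thm:trotter_thm}, I may assume without loss of generality $(H_1+H_2)\varphi=0$, since translating $H_j$ by a real constant preserves the exponent $\delta$ in \eqref{eq:tail_bound_1} and only changes the implicit constants. Then $V_\tau\varphi=\varphi$ with $\tau=t/N$, and the telescoping identity $U_\tau^N - V_\tau^N = \sum_{k=0}^{N-1} U_\tau^{N-1-k}(U_\tau - V_\tau)V_\tau^k$ combined with unitarity reduces the task to bounding the single-step error $\|U_\tau\varphi - \varphi\|$ and multiplying by $N$.

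For the single-step error, I decompose
\begin{equation*}
U_\tau\varphi - \varphi = R_1 + R_2 + (e^{-\rmi\tau H_2} - 1)\bigl(-\rmi\tau H_1\varphi + R_1\bigr),
\end{equation*}
where $R_j := e^{-\rmi\tau H_j}\varphi - \varphi + \rmi\tau H_j\varphi$ is the Taylor remainder; the linear pieces $-\rmi\tau H_j\varphi$ cancel by the eigenvalue relation. The cross term $\tau\|(e^{-\rmi\tau H_2} - 1)H_1\varphi\|$ would a priori require the spectral measure of $H_2$ at $H_1\varphi$, which the hypothesis does not control. The key step is to use $H_1\varphi = -H_2\varphi$ together with the commutativity $[H_2,e^{-\rmi\tau H_2}]=0$ to rewrite it as $\tau\|H_2(e^{-\rmi\tau H_2} - 1)\varphi\|$, so that only the single spectral measure $\mu_{2,\varphi}$ enters. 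All surviving norms are then of the form $\int|g(\tau,\lambda)|^2\,d\mu_{j,\varphi}(\lambda)$ for explicit functions $g$ built from $e^{-\rmi\tau\lambda} - 1$ and $e^{-\rmi\tau\lambda} - 1 + \rmi\tau\lambda$.

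I then estimate each integral via the pointwise bounds $|e^{-\rmi x}-1+\rmi x|\le\min(x^2/2,\,|x|+2)$ and $|e^{-\rmi x}-1|\le\min(|x|,2)$, splitting the spectral domain at $|\lambda|=1/\tau$. Integration by parts against the tail $F_j(\Lambda)=\mu_{j,\varphi}(\{|\lambda|\ge\Lambda\})=\mathcal{O}(\Lambda^{-2\delta})$ gives $\int_{|\lambda|\le 1/\tau}\lambda^4\,d\mu_{j,\varphi}$ of order $\tau^{2\delta-4}$ for $1<\delta<2$, of order $\log(1/\tau)$ for $\delta=2$, and $\mathcal{O}(1)$ for $\delta>2$, while $\int_{|\lambda|>1/\tau}\lambda^2\,d\mu_{j,\varphi}=\mathcal{O}(\tau^{2\delta-2})$ for all $\delta>1$. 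Assembling the prefactors and taking square roots yields $\|U_\tau\varphi - \varphi\| = \mathcal{O}(\tau^{\delta\wedge 2})$ with a $\sqrt{\log(1/\tau)}$ correction at $\delta=2$; multiplying by $N$ and substituting $\tau=t/N$ reproduces the three regimes stated in the theorem.

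The main technical obstacle is the critical case $\delta=2$, where the moment $\int\lambda^4\,d\mu_{j,\varphi}$ sits exactly on the edge of integrability, so the logarithmic factor emerges unavoidably from the integration-by-parts step and its constant must be tracked carefully. A secondary subtlety is the role of the hypothesis $\delta>1$ (rather than some smaller threshold): it is precisely what guarantees convergence of $\int_{|\lambda|>1/\tau}\lambda^2\,d\mu_{j,\varphi}$ with the correct scaling $\tau^{2\delta-2}$, ensuring that the cross term does not dominate the Taylor remainders.
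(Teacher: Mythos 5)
Your plan is correct and would establish the theorem; the skeleton (energy shift, telescoping to $N$ times a single-step error, then spectral-calculus estimates driven by the tail decay) coincides with the paper's. The one place where you genuinely diverge is the single-step decomposition. You expand $\rme^{-\rmi\tau H_2}\rme^{-\rmi\tau H_1}\varphi$ directly and therefore pick up the cross term $(\rme^{-\rmi\tau H_2}-1)(-\rmi\tau H_1\varphi+R_1)$, which you then tame by the (valid) observation that $H_1\varphi=-H_2\varphi$ and $[H_2,\rme^{-\rmi\tau H_2}]=0$, so only $\mu_{2,\varphi}$ is needed. The paper instead left-multiplies by the unitary $\rme^{\rmi\tau H_2}$ before estimating, writing $\|(\rme^{-\rmi\tau H_2}\rme^{-\rmi\tau H_1}-I)\varphi\|=\|(\rme^{-\rmi\tau H_1}-\rme^{\rmi\tau H_2})\varphi\|$ and then inserting $-I+\rmi\tau H_1+I+\rmi\tau H_2$, which annihilates $\varphi$; this yields exactly the two pure Taylor remainders $\|(\rme^{-\rmi\tau H_1}-I+\rmi\tau H_1)\varphi\|$ and $\|(\rme^{\rmi\tau H_2}-I-\rmi\tau H_2)\varphi\|$ with no cross term at all, so your ``key step'' never has to be made. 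Your route costs an extra (harmless) term $(\rme^{-\rmi\tau H_2}-1)R_1$, bounded by $2\|R_1\|$, and a third spectral integral of the form $\tau^2\int\lambda^2|\rme^{-\rmi\tau\lambda}-1|^2\,\rmd\mu_{2,\varphi}$, which your splitting at $|\lambda|=1/\tau$ handles with the same exponents. The final tail estimates are also organized slightly differently --- the paper integrates by parts against $\nu\bm{(}(\lambda,+\infty)\bm{)}$ over the whole half-line for $1<\delta<2$ (obtaining sharp constants) and splits at a fixed, $\tau$-independent energy $\Lambda$ only in the critical case $\delta=2$, whereas you split at the $\tau$-dependent scale $1/\tau$ throughout --- but both give the stated scalings, including the $\sqrt{\log(N/t)}$ factor at $\delta=2$ and the reliance on $\delta>1$ for the convergence of the high-energy second moment.
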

\begin{proof}
The proof for $\delta\in(1,2]$ is provided in Appendix~\ref{sec:proof_trotter_thm}\@.
We use the same strategy of Thm.~\ref{thm:trotter_thm}\@. 
However, the bound on the single-step error will now require a finer analysis than the second-order Taylor expansion~\eqref{eq:Taylor2t}, since $\|H_j^2\varphi\|$ is no longer assured to be bounded. 
By using the spectral theorem, one can show that the high-energy decay of the spectral measures implies instead the following error on the short-time evolution:
\begin{equation*}
\rme^{-\rmi \frac{t}{N} H_j}\varphi
=\varphi -\rmi \frac{t}{N} H_j \varphi + \mathcal{O}\!\left(\frac{t^\delta}{N^\delta}\right),
\end{equation*}
for $\delta<2$, and an analogous error with an additional logarithmic factor for $\delta=2$, whence the theorem follows. 
Finally, the case $\delta>2$ is covered by Thm.~\ref{thm:trotter_thm}, since in this case the decay~\eqref{eq:tail_bound_1} implies that $\varphi\in\mathcal{D}(H_1^2)\cap\mathcal{D}(H_2^2)$. 
\end{proof}
Note that the conditions on the tails of the spectral distributions in Eq.~(\ref{eq:tail_bound_1}) imply that $\varphi\in\mathcal{D}(|H_1|^{\gamma_1})\cap\mathcal{D}(|H_2|^{\gamma_2})$ for all $\gamma_j\in[0,\delta)$.
If $\delta\le2$, it is not guaranteed that $\varphi\in\mathcal{D}(H_1^2)\cap\mathcal{D}(H_2^2)$, and one obtains a convergence slower (by a logarithmic factor) than the convergence $\mathcal{O}(1/N)$ given by Thm.~\ref{thm:trotter_thm}\@.

Notice that the assumption that $\varphi$ is an eigenstate of $H_1+H_2$ and thus $\varphi\in\mathcal{D}(H_1)\cap\mathcal{D}(H_2)$ implies the decay~\eqref{eq:tail_bound_1} of the spectral distributions with $\delta=1$. 
Therefore, the condition~\eqref{eq:tail_bound_1} contains no additional information and is immaterial for $\delta\leq 1$.
As a matter of fact, one can prove the convergence of the Trotter product formula without assuming~\eqref{eq:tail_bound_1} (or equivalently for $\delta\leq 1$). 
However, in such a case the rate of convergence $\xi_N(t;\varphi)\to 0$ as $N\to\infty$ can be arbitrarily slow.

If Eq.~\eqref{eq:superposition_states} converges and $\psi$ is fully supported in the part of the Hilbert space corresponding to the point spectrum of $H_1+H_2$, then the Trotter error scaling is determined by the domain conditions in Thm.~\ref{thm:trotter_thm} and Thm.~\ref{thm:first_order_alpha}\@.
Notice that this is always the case for any bounded $H_1+H_2$, for which the condition~\eqref{eq:tail_bound_1} holds for any arbitrary large $\delta$.
Moreover, this is true for chemistry simulations, where the input state of quantum phase estimation or the Hartree state aims to approximate the target eigenstate well. 
In particular, one is usually interested in the space of bounded electron states of a chemical Hamiltonian, which corresponds to its point spectrum.

Together, Thm.~\ref{thm:trotter_thm} and Thm.~\ref{thm:first_order_alpha} allow us to study the Trotter error for arbitrary eigenstates of the target Hamiltonian.
The bounds can be lifted to generic states employing Cor.~\ref{cor:superposition_states}\@.
Our bound from Thm.~\ref{thm:trotter_thm} is tight in the sense that it is saturated by the example provided in Eq.~\eqref{eq:example}:
If $\epsilon=0$, then the state $\varphi$ in Eq.~\eqref{eq:state} has eigenvalue $h=0$, i.e.\ $(H_1+H_2)\varphi=0$.
Furthermore, $H_1^2\varphi=H_2^2\varphi=0$.
Therefore, choosing $g=0$ leads to the bound $\xi_N(t;\varphi)=0$, which is exact.
Thus, our bounds allow for more efficient quantum simulation due to an explicit and tight state dependency.
In the following, we apply them to the chemistry simulation of the hydrogen atom.
We emphasise that---even though the main example considered here is the hydrogen atom---our bounds are applicable much more broadly.
For example, they could be used to study the error in symplectic integration for eigenstates of smoothly perturbed Hamiltonians $H=H_0+\epsilon H_\mathrm{smooth}$ ($\epsilon\ll1$).
See for instance Ref.~\cite{McLachlan2002}\@.

\section{Application of the first-order bounds to the hydrogen atom}
\label{sec:bound}
The Hamiltonian of the hydrogen atom is given by
\begin{equation*}
H_\mathrm{hydrogen} = H_1 + H_2, 
\end{equation*}
with (temporarily re-establishing the Planck constant $\hbar$)
\begin{equation}
H_{1}=-\frac{\hbar^{2}}{2m_{\rme}}\Delta
\label{eq:hamiltonian_kin}
\end{equation}
and 
\begin{equation}
H_{2}=-\frac{\hbar^{2}}{m_{\rme}a_{0}r},
\label{eq:hamiltonian_pot}
\end{equation}
where
$m_\rme$ is the electron mass (or equivalently the reduced electron mass), $e$ is the elementary (electron) charge, and 
\begin{equation*}
a_{0}=\frac{4\pi\varepsilon_{0}\hbar^{2}}{m_{\rme}e^{2}}
\end{equation*}
is the Bohr radius, with $\varepsilon_0$ being the vacuum permittivity.
Furthermore, $\Delta=\nabla^2$ is the Laplace operator, and $r=\vert \mathbf{r} \vert$ denotes the distance from the nucleus, which is placed at $\mathbf{r}=(0, 0, 0)$.
The eigenfunctions $\Psi_{n\ell m}(\mathbf{r})$
of the hydrogen atom have eigenenergies
\begin{equation*}
E_n=-\frac{\hbar^{2}}{2m_{\rme}a_{0}^{2}n^2},
\end{equation*}
and are labeled by the principal quantum number $n=1,2,\ldots$, as well as $\ell=0,1,\ldots,n-1$ and $m=-\ell,-\ell+1,\ldots,\ell-1,\ell$, which determine the orbital angular momentum.

To implement the dynamics under the Hamiltonian $H_\mathrm{hydrogen}=H_1+H_2$ of the hydrogen atom, one Trotterizes between the kinetic energy $H_1$ and the potential energy $H_2$.
This approach is called \emph{split-step method} and is commonly used in the numerical analysis of quantum chemistry, see e.g.\ Refs.~\cite{Chan2023, Kosloff1988, Choi2019}.
For the advantages of this approach over other methods, see Refs.~\cite{Kosloff1988, Choi2019, Roulet2019, Tannor2008}.
If we want to know the speed of Trotter convergence, we can compute the distance between the Trotterized dynamics and the actual dynamics. 
For finite-dimensional systems, this can be bounded by Eq.~\eqref{eq:operator_norm_bound}.
However, it is easy to see that for the hydrogen atom, we have 
\begin{equation}
\Vert [H_{1},H_{2}]\Vert
\propto\left\Vert \left[\frac{\Delta}{2},\frac{1}{r}\right]\right\Vert 
=\left\Vert \frac{1}{r^2}\frac{\rmd}{\rmd r}\right\Vert
=\infty.
\label{eq:commutator_diverge}
\end{equation}
Physically, this comes from the fact that the commutator is a differential operator (involving radial momentum) and has a singularity at $r=0$: 
There are wave functions in the Hilbert space with arbitrary large radial momentum and/or large support (high electron density) close to this singularity at the nucleus.
From a mathematical perspective, this is not a surprise, since both $H_1$ and $H_2$ are unbounded and we need state-dependent bounds.
See, for instance, Refs.~\cite{Simon2005, Ichinose2004, Burgarth2023}.
Since the task of quantum chemistry simulation is to find the spectrum of a Hamiltonian, it is perfectly reasonable to look at the convergence speed
on its eigenstates.
That is, the figure of merit for the hydrogen atom is
\begin{equation*}
	\xi_N(t;\Psi_{n\ell m}) = \left\Vert 
	\left[
	\left(\rme^{-\rmi \frac{t}{N}H_2}\rme^{-\rmi \frac{t}{N}H_1}\right)^{N} -\rme^{-\rmi t E_n}
	\right]\Psi_{n\ell m}
	\right\Vert.
\end{equation*}
It is well known that the eigenfunctions of the hydrogen atom are given by the product of the radial wave function $R_{n\ell}$ and the spherical harmonics $Y_{\ell m}$.
The spherical part $Y_{\ell m}$ only accounts for the degeneracies in the spectrum of the hydrogen atom, so different $m$ values for fixed $n$ and $\ell$ lead to the same Trotter dynamics.
Therefore, we can restrict our attention to the radial part $R_{n\ell}$ when computing $\xi_N$.
Now, we apply our bound from Thm.~\ref{thm:trotter_thm} to this Trotter scenario.
For example, we could consider the excited state $\Psi_{320}$ of the hydrogen atom, for which we can bound
\begin{equation*}
	\xi_N(t;\Psi_{320})\leq \frac{1}{40 t_0^2}\frac{t^2}{N}, \qquad t_0=\frac{m_\mathrm{e}a_0^2}{\hbar}.
\end{equation*}
As opposed to the standard bound from Eq.~\eqref{eq:operator_norm_bound}, which just gives the trivial bound for the hydrogen atom due to $\|[H_1, H_2]\|=\infty$, this bound is nontrivial and has the expected scaling of $\mathcal{O}(t^2/N)$.

The ground state is of particular importance for quantum chemistry applications.
However, our bound from Thm.~\ref{thm:trotter_thm} diverges in this case as the domain conditions are not satisfied by the ground state $\Psi_{100}$ of the hydrogen atom.
Therefore, we need to perform a more refined analysis through Thm.~\ref{thm:first_order_alpha}\@.
This indicates that the Trotter product formula converges \emph{slower} than $\mathcal{O}(N^{-1})$ for the ground state $\Psi_{100}$ of the hydrogen atom.
In fact, we find a slower scaling of the Trotter error for input states with low angular momenta.
That is, all s-orbitals only admit the $N^{-1/4}$ scaling and all p-orbitals lead to the $N^{-3/4}$ scaling.
Importantly, we emphasize that the quantum number $\ell$ of the orbital angular momentum alone determines the state-dependent Trotter scaling.
We summarize our findings in Table~\ref{tab:order_results}\@.
\begin{table}
	\caption{\label{tab:order_results}Scalings of the analytical first-order Trotter error bounds for the eigenfunctions $\Psi_{n\ell m}$ of the hydrogen atom.
	See Sec.~\ref{sec:trotter_thm} and specifically Eqs.~\eqref{eq:bound_R_n0}--\eqref{eq:bound_R_nl}.
	We find different scalings depending on the input states, which are solely determined by the orbital angular momentum $\ell$.
	In particular, the scalings for the s- and p-orbitals are slower than the expected $N^{-1}$ scaling.}
	\begin{center}
		\begin{tabular}{|c|c|}
		\hline
		\begin{tabular}[c]{@{}c@{}}  \bf{Orbital angular momen-} \\ \bf{tum quantum number} $\bm{\ell}$ \end{tabular} &  \begin{tabular}[c]{@{}c@{}} \bf{Scaling of bound on} \\ \bf{Trotter error} $\bm{\xi_N(t;\Psi_{n\ell m})}$ \end{tabular} \\
			\hhline{:==:}
			$\ell = 0$ (s-orbitals) & $\mathcal{O}(N^{-1/4})$\\
			$\ell = 1$ (p-orbitals) & $\mathcal{O}(N^{-3/4})$\\
			$\ell \geq 2$ (d-orbitals and higher)& $\mathcal{O}(N^{-1})$\\
			\hline
		\end{tabular}
	\end{center}
\end{table}
We also provide explicit error bounds in Sec.~\ref{sec:trotter_thm} of the Methods section. 
See Eqs.~\eqref{eq:bound_R_n0}--\eqref{eq:bound_R_nl}.
These bounds allow us to compute the necessary resources to achieve a chemical accuracy for an algorithm for a chemistry simulation.
For instance, assume one would like to implement the dynamics under the hydrogen atom Hamiltonian for the ground state $\Psi_{100}$ for a total evolution time $t$ of the order of microseconds.
To reach a chemical accuracy~\cite{note4}, 
one would need Trotter steps $N$ of the order $10^{3}$--$10^{4}$\@.

Since the results in Table~\ref{tab:order_results} are just upper bounds, it makes sense to convince ourselves numerically that we indeed obtain a slower scaling for the ground state than for higher excited states.
To compute the Trotter error, we numerically integrate the Schr\"{o}dinger equation using XMDS2~\cite{Dennis2013}, an open-source package for solving multidimensional partial differential equations.
For these simulations, we use a grid-based method and work in the Hartree atomic units, so that $\hbar=m_\rme=a_0=1$ and the Hamiltonians become $H_1=-\Delta/2$ and $H_2=-1/r$.
We diagonalize the kinetic energy $H_2=-\Delta/2$ by representing it in its spectral basis of spherical Bessel functions.
The number of Bessel functions used corresponds to the number of grid points in position space as well as the number of modes in the spectral space, resulting in a discretized system suitable for numerical integration.
More details on the numerical simulation methods can be found in Sec.~\ref{sec:methods_numerical} of the Methods section.

We simulated the system multiple times, each using a different number of modes while keeping the radial cutoff constant.
The results are shown in Fig.~\ref{figNumericsGroundState}\@.
The ground-state Trotter error initially scales as $N^{-1/4}$, but asymptotes to $N^{-1}$.
This transition occurs later and later (at a larger number of the Trotter steps $N$) as the number of Bessel modes (numerical cutoff dimension) increases.
Notice that the eigenstate Trotter error always asymptotically scales as $N^{-1}$ for finite-dimensional systems~\cite{Burgarth2023}.
Therefore, for a finite number of Bessel modes, we eventually see the $N^{-1}$ scaling for large enough $N$.
For this reason, a later transition point to the $N^{-1}$ scaling with increased system dimension is the only way to numerically observe slower convergence rates for the full unbounded Hamiltonians.
\begin{figure}
\centering
	\begin{tikzpicture}[mark size={1.5}, scale=1]
			\begin{axis}[
			xmode=log,
			ymode=log,
			xlabel={Trotter steps $N$},
			ylabel=First-order Trotter error $\xi_N(t;\Psi_{100})$,
			label style={font=\footnotesize},
			tick label style={font=\footnotesize},
			x post scale=1,
			y post scale=1,
			legend style={at={(0.4,-0.2)},anchor=north},
			legend columns=3,
			legend cell align={left},
			xmin=2, xmax=3000,
			]
			\addplot[color=1, thick, mark=*, only marks] table[x=N, y=error, col sep=comma] {Trotter_Error_1st_order_groundstate_100.csv};
			\addplot[color=2, mark=*, only marks] table[x=N, y=error, col sep=comma] {Trotter_Error_1st_order_groundstate_200.csv};
			\addplot[color=3, mark=*, only marks] table[x=N, y=error, col sep=comma] {Trotter_Error_1st_order_groundstate_300.csv};
			\addplot[color=4, mark=*, only marks] table[x=N, y=error, col sep=comma] {Trotter_Error_1st_order_groundstate_400.csv};
			\addplot[color=5, mark=*, only marks] table[x=N, y=error, col sep=comma] {Trotter_Error_1st_order_groundstate_800.csv};
			\addplot[color=0, ultra thick, domain=2:3000, densely dashed]{0.3*x^(-1)};
			\addplot[color=6, ultra thick, domain=2:3000]{2.04665*x^(-1/2) + 0.832107/x + 1.34365/x^(1/4)};
			\addplot[color=0, ultra thick, domain=2:3000]{0.31/x + 0.1/x^(1/2) + 0.1/x^(1/4)};
			\legend{
			\footnotesize $100$ modes,
			\footnotesize $200$ modes,
			\footnotesize $300$ modes,
			\footnotesize $400$ modes,
			\footnotesize $800$ modes,
			\footnotesize $N^{-1}$,
			\footnotesize analytic bound,
			\footnotesize $\infty$ mode scaling,
			};
			\end{axis}
			\node[color=6,rotate=-8] at (4,4.75) {\footnotesize upper bound};
			\node[color=0,rotate=-8] at (4,3.5) {\footnotesize scaling for $\infty$ modes};
			\draw [-stealth, ultra thick] (3,2) node[below, align=left] {\footnotesize increasing\\  \footnotesize \# of modes} -- (4.2,3);
			\node[color=0,rotate=-25] at (0.7,3.15) {\footnotesize $\mathcal{O}(N^{-1})$};
		\end{tikzpicture}
		\caption{\label{figNumericsGroundState}The Trotter error at time $t=1$ in the Hartree atomic units $\hbar=m_\mathrm{e}=a_0=1$, as a function of the Trotter steps for the ground state $\Psi_{100}$ of the hydrogen atom.
		The radial cutoff in the simulations is $R=30$.
		We show five different levels of discretization characterized by the number of Bessel modes.
		For reference, we show the slopes of $N^{-1}$ (grey dashed line) and our bound (brown solid line), which scales as $N^{-1/4}$.
		See Eqs.~\eqref{eqErrorBoundhydrogenGround1}--\eqref{eqErrorBoundhydrogenGround2}.
		We see that the asymptotics for any finite discretization initially start in a consistent way with our bound, and eventually go as $N^{-1}$, but $N$ at which this transition happens becomes larger with increasing number of modes.
		This provides an evidence that the scaling in the infinite-mode limit is indeed slower than $N^{-1}$.
		To indicate a potential curve for the infinite-mode limit, we show a slope (gray solid line) with the same scaling as our analytic bound.}
\end{figure}
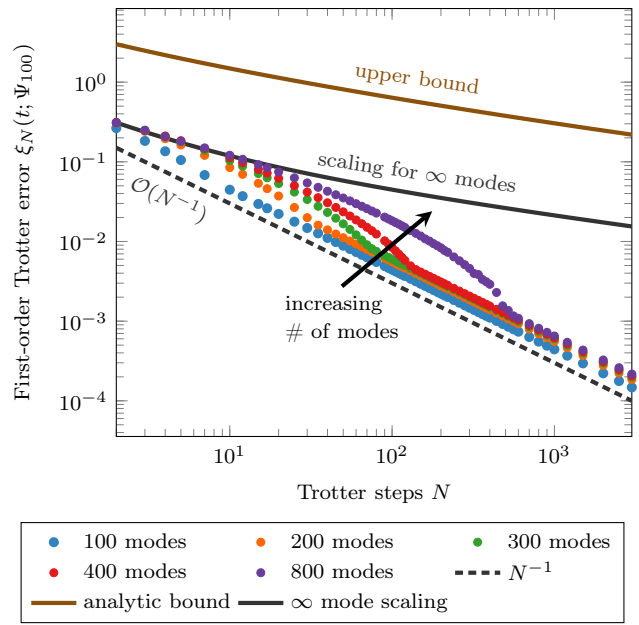
\begin{figure}
	\centering
		\begin{tikzpicture}[mark size={1.5}, scale=1]
			\begin{axis}[
			xmode=log,
			ymode=log,
			xlabel={Trotter steps $N$},
			ylabel={First-order Trotter error $\xi_N(t;\Psi_{210})$},
			label style={font=\footnotesize},
			tick label style={font=\footnotesize},
			x post scale=1,
			y post scale=1,
			legend style={at={(0.5,-0.2)},anchor=north},
			legend columns=3,
			legend cell align={left},
			xmin=2, xmax=3000,
			]
			\addplot[color=1, mark=*, only marks] table[x=N, y=error, col sep=comma] {Trotter_Error_1st_order_n2_l1_10.csv};
			\addplot[color=2, mark=*, only marks] table[x=N, y=error, col sep=comma] {Trotter_Error_1st_order_n2_l1_20.csv};
			\addplot[color=3, mark=*, only marks] table[x=N, y=error, col sep=comma] {Trotter_Error_1st_order_n2_l1_50.csv};
			\addplot[color=4, mark=*, only marks] table[x=N, y=error, col sep=comma] {Trotter_Error_1st_order_n2_l1_200.csv};
			\addplot[color=5, mark=*, only marks] table[x=N, y=error, col sep=comma] {Trotter_Error_1st_order_n2_l1_800.csv};
			\addplot[color=0, ultra thick, domain=2:3000, densely dashed]{0.003*x^(-1)};
			\addplot[color=6, ultra thick, domain=2:3000]{0.145959/x + 0.133831/x^(3/4)};
			\legend{
			\footnotesize $10$ modes,
			\footnotesize $20$ modes,
			\footnotesize $50$ modes,
			\footnotesize $200$ modes,
			\footnotesize $800$ modes,
			\footnotesize $N^{-1}$,
			\footnotesize analytic bound,
			}
			\end{axis}
		\end{tikzpicture}
		\caption{\label{figNumericsn2l1State}The Trotter error at time $t=1$ in the Hartree atomic units $\hbar=m_\mathrm{e}=a_0=1$, as a function of the Trotter steps for the state $\Psi_{210}$ of the hydrogen atom.
		The radial cutoff in the simulations is $R=30$.
		We show five different levels of discretization characterized by the number of Bessel modes; however, beyond 200 modes (red) the results are indistinguishable.
		The grey dashed line shows the slope of the $N^{-1}$ scaling, and we see that the Trotter error scales as $N^{-1}$ in all cases, even though the state $\Psi_{210}$ does not satisfy the domain conditions for Thm.~\ref{thm:trotter_thm}\@.
		Our analytic bound, which scales as $N^{-3/4}$, is depicted in brown.
		}
\end{figure}
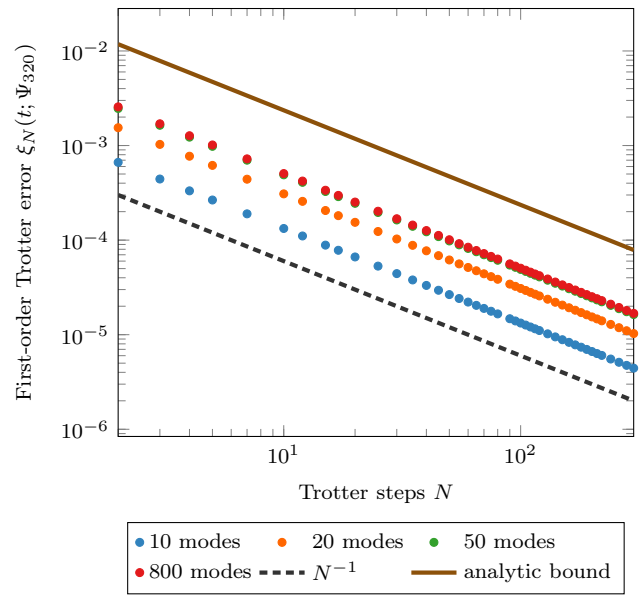
\begin{figure}
	\centering
		\begin{tikzpicture}[mark size={1.5}, scale=1]
			\begin{axis}[
			xmode=log,
			ymode=log,
			xlabel={Trotter steps $N$},
			ylabel={First-order Trotter error $\xi_N(t;\Psi_{320})$},
			label style={font=\footnotesize},
			tick label style={font=\footnotesize},
			x post scale=1,
			y post scale=1,
			legend style={at={(0.5,-0.2)},anchor=north},
			legend columns=3,
			legend cell align={left},
			xmin=2, xmax=300,
			]
			\addplot[color=1, mark=*, only marks] table[x=N, y=error, col sep=comma] {Trotter_Error_n3_l2_10.csv};
			\addplot[color=2, mark=*, only marks] table[x=N, y=error, col sep=comma] {Trotter_Error_n3_l2_20.csv};
			\addplot[color=3, mark=*, only marks] table[x=N, y=error, col sep=comma] {Trotter_Error_n3_l2_50.csv};
			\addplot[color=4, mark=*, only marks] table[x=N, y=error, col sep=comma] {Trotter_Error_n3_l2_800.csv};
			\addplot[color=0, ultra thick, domain=2:300, densely dashed]{0.0006*x^(-1)};
			\addplot[color=6, ultra thick, domain=2:300]{0.023591/x};
			\legend{
			\footnotesize $10$ modes,
			\footnotesize $20$ modes,
			\footnotesize $50$ modes,
			\footnotesize $800$ modes,
			\footnotesize $N^{-1}$,
			\footnotesize analytic bound,
			}
			\end{axis}
		\end{tikzpicture}
		\caption{\label{figNumericsn3l2State}The Trotter error at time $t=1$ in the Hartree atomic units $\hbar=m_\mathrm{e}=a_0=1$, as a function of the Trotter steps for the state $\Psi_{320}$ of the hydrogen atom.
		The radial cutoff in the simulations is $R=40$.
		We use four different levels of discretization characterized by the number of Bessel modes; however, beyond 50 modes (green) the results are indistinguishable.
		The grey dashed line shows the slope of the $N^{-1}$ scaling, and we see that the Trotter error scales as $N^{-1}$ in all cases as expected.
		For reference, we plot our analytic bound in brown.
		It also shows an $N^{-1}$ behavior.}
\end{figure}

We also investigate the Trotter error scaling for the excited state $\Psi_{210}$ of the hydrogen atom with $n=2$, $\ell=1$, and $m=0$ in Fig.~\ref{figNumericsn2l1State}, as well as for the excited state $\Psi_{320}$ with $n=3$, $\ell=2$, and $m=0$ in Fig.~\ref{figNumericsn3l2State}\@.
In both cases, we find a $N^{-1}$ scaling independent of the number of Bessel modes.
This behavior shows that the Trotter errors for the ground and excited states scale differently in the full system limit of infinitely many Bessel modes.
More specifically, excited states admit $N^{-1}$ scaling, whereas the ground state only scales as $N^{-1/4}$.
These results are rather remarkable as the ground state is usually the target of quantum chemistry simulations.
Its Trotter error scales slower than expected, indicating that more resources (quantum gates) are needed for accurate quantum chemistry simulation.
Ultimately, this leads to a polynomially increased runtime of ground-state quantum chemistry algorithms.
Furthermore, the increased gate count can intensify the errors of the quantum computation due to environmental decoherence or gate imperfections.

Physically, the slower scaling of the Trotter error in the ground state, and in all s-orbitals, can be explained by the fact that the electron is close to the $1/r$ singularity.
Due to this, it has very high fluctuations in both potential and kinetic energies.
In particular, we will see in Appendix~\ref{sec:hydrogen_bounds} that the fourth moments, i.e.\ $\|H_j^2\varphi\|$, are diverging.
When performing the Trotterization, in each period the state evolves under the bare potential and kinetic Hamiltonians.
Due to its huge energy fluctuations, there is a significant probability for the electron to be kicked out of the space of the bound states by this evolution.
In turn, the hydrogen atom gets ionized.
As we increase the number of Trotter steps $N$, each Trotter cycle becomes shorter and we approximate the target dynamics under the Hamiltonian of the hydrogen atom better.
Therefore, the ionization effect decreases in $N$.
If we truncate the system to make it amenable to numerical simulations, we will always cut off certain modes corresponding to free electron states.
As a result, these modes are not attained from the Trotter evolution and a smaller ionization effect is observed numerically.
Going to higher and higher truncation dimensions allows for more room in the space of free electron states.
In turn, we can numerically see a stronger ionization effect, which leads to a worse approximation of the target dynamics.
This explains the scaling behavior we see numerically in Fig.~\ref{figNumericsGroundState}\@.
Ultimately, in the limit of infinitely many modes, there is always a significant probability of leaking out of the bound space no matter how large the Trotter number $N$ is.
Therefore, the full Trotter problem converges slower than $N^{-1}$.
We numerically observe this ionization behavior and show the results in Fig.~\ref{fig:ionization}\@.
It is noticeable that the s-orbitals admit particularly strong ionization behaviors.
This is consistent with our analytical bounds as well as the numerical results in Figs.~\ref{figNumericsGroundState}--\ref{figNumericsn3l2State}\@.
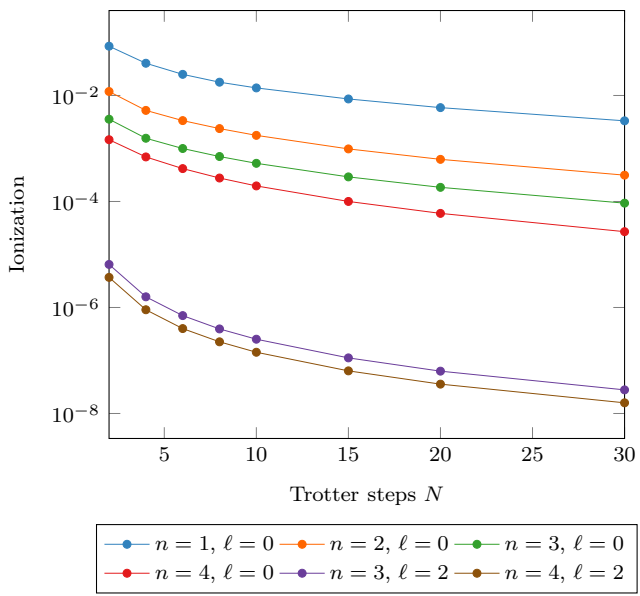
\begin{figure}
	\centering
		\begin{tikzpicture}[mark size={1.5}, scale=1]
			\begin{axis}[
			ymode=log,
			xlabel={Trotter steps $N$},
			ylabel={Ionization},
			label style={font=\footnotesize},
			tick label style={font=\footnotesize},
			x post scale=1,
			y post scale=1,
			legend style={at={(0.5,-0.2)},anchor=north},
			legend columns=3,
			legend cell align={left},
			xmin=2, xmax=30,
			]
			\addplot[color=1, mark=*, ultra thin] table[x=N, y=error, col sep=comma] {Ionization_n1_l0.csv};
			\addplot[color=2, mark=*, ultra thin] table[x=N, y=error, col sep=comma] {Ionization_n2_l0.csv};
			\addplot[color=3, mark=*, ultra thin] table[x=N, y=error, col sep=comma] {Ionization_n3_l0.csv};
			\addplot[color=4, mark=*, ultra thin] table[x=N, y=error, col sep=comma] {Ionization_n4_l0.csv};
			\addplot[color=5, mark=*, ultra thin] table[x=N, y=error, col sep=comma] {Ionization_n3_l2.csv};
			\addplot[color=6, mark=*, ultra thin] table[x=N, y=error, col sep=comma] {Ionization_n4_l2.csv};
			\legend{
			\footnotesize {$n=1$, $\ell=0$},
			\footnotesize {$n=2$, $\ell=0$},
			\footnotesize {$n=3$, $\ell=0$},
			\footnotesize {$n=4$, $\ell=0$},
			\footnotesize {$n=3$, $\ell=2$},
			\footnotesize {$n=4$, $\ell=2$},
			}
			\end{axis}
		\end{tikzpicture}
		\caption{\label{fig:ionization}Numerical simulations of the ionization probability as a function of the number of Trotter steps. 
		The initial state $\varphi$ is prepared in various energy eigenstates $\Psi_{n\ell m}$ of the hydrogen atom, and then we let it evolve to $\varphi(t)=U_N(t)\varphi$ by the Trotter evolution $U_N(t)$, with the Hamiltonians $H_1$ and $H_2$ given by Eqs.~\eqref{eq:hamiltonian_kin} and \eqref{eq:hamiltonian_pot}, for a total time $t=1$ in the Hartree atomic units $\hbar=m_\mathrm{e}=a_0=1$. 
		The ionization probability is given by $1-\|P_{\mathrm{bd}}\varphi(t)\|^2$, where $P_{\mathrm{bd}}=\sum_{n\ell m} |\Psi_{n\ell m}\rangle \langle\Psi_{n\ell m}|$ is the projection on the space of bound states of the hydrogen atom.
		For the evolution under the non-Trotterized Hamiltonian $H = H_1 + H_2$, the ionization fraction was zero up to numerical errors $(<10^{-10})$ as expected. 
		For the Trotterized evolution, $\varphi(t)$ acquires a nonzero component out of the space spanned by the bound states, resulting in a nonzero ionization probability, as shown in the figure. 
		The $\ell=0$ eigenstates ionize much more heavily than the $\ell=2$ eigenstates, but in both cases, the ionization rate decreases with the number of Trotter steps as the Trotter approximation approaches the true evolution.}
\end{figure}

Mathematically, the slower convergence of the ground state can be explained by the fact that both $H_1$ and $H_2$ are unbounded operators.
It is well known in mathematical physics that, for unbounded operators, the Trotter product formula can potentially converge slower (see e.g.\ Ref.~\cite{Ichinose2004,Jahnke2000}) or even not at all (see e.g.\ Refs.~\cite{Zhu1993, Arenz2018}).
In our context, slower convergence arises from the circumstance that the ground state of the hydrogen atom is not in the domain of $H_1^2$ and $H_2^2$, see Appendix~\ref{sec:hydrogen_bounds}\@.
On the contrary, the state $\Psi_{320}$ with $n=3$, $\ell=2$, and $m=0$ considered in Fig.~\ref{figNumericsn3l2State} satisfies this domain condition.
This allowed us to derive analytical bounds that scale as $N^{-1}$, which is consistent with the numerical simulation.

For the simulation of non-eigenstates, we can use Cor.~\ref{cor:superposition_states} to bound the Trotter error.
Here, the bound follows from the triangle and Cauchy-Schwarz inequalities after representing the state in the basis of eigenstates of $H_1+H_2$.
In this case, the contributing eigenstate with the slowest scaling in the superposition determines the overall error scaling for the state.
Therefore, we expect the slower scaling to be a generic behaviour even if the exact decomposition into eigenstates is unknown.
This suspicion is bolstered numerically in Fig.~\ref{figNumericsSTO-3G}, where we study the Trotter error for the state  $\Psi_{100}^\text{STO-3G}(r)=0.44\,\rme^{-0.11 r^2} + 0.53\,\rme^{-0.41 r^2} + 0.15\, \rme^{-2.23 r^2}$.
$\Psi_{100}^\text{STO-3G}$ approximates the ground state $\Psi_{100}$ of the hydrogen atom, where three primitive Gaussians are fitted via the least squares method.
This is a standard basis set truncation in computational chemistry known as STO-3G, see, e.g.\ Ref.~\cite{Young2001}\@.
Notice that $\Psi_{100}^\text{STO-3G}\not\in\mathcal{D}(H_1^2)\cap\mathcal{D}(H_2^2)$ so that Thm.~\ref{thm:trotter_thm} does not apply and we expect a slower error scaling (according to Thm.~\ref{thm:first_order_alpha} and Cor.~\ref{cor:superposition_states}).
Indeed, we find exactly the same Trotter error behavior for $\Psi_{100}^\text{STO-3G}$ as for the ground state $\Psi_{100}$ of the hydrogen atom shown in Fig.~\ref{figNumericsGroundState}\@.
That is, the error initially starts with a slower scaling and eventually goes as $N^{-1}$.
This transition shifts to larger Trotter numbers $N$ when the numerical truncation dimension is increased.
Thus, in the full infinite-dimensional system limit, the transition will not occur indicating an overall slower scaling for $\Psi_{100}^\text{STO-3G}$, see Fig.~\ref{figNumericsSTO-3G}\@.

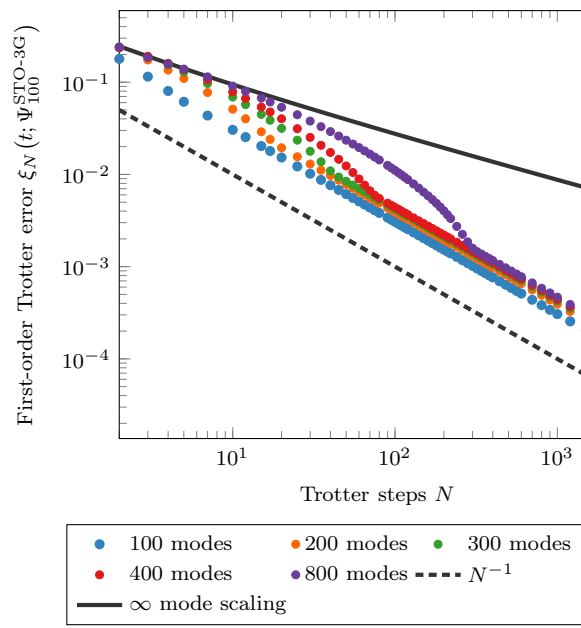
\begin{figure}
\centering
	\begin{tikzpicture}[mark size={1.5}, scale=1]
			\begin{axis}[
			xmode=log,
			ymode=log,
			xlabel={Trotter steps $N$},
			ylabel=First-order Trotter error $\xi_N\big(t;\Psi_{100}^\text{STO-3G}\big)$,
			label style={font=\footnotesize},
			tick label style={font=\footnotesize},
			x post scale=1,
			y post scale=1,
			legend style={at={(0.4,-0.2)},anchor=north},
			legend columns=3,
			legend cell align={left},
			xmin=2, xmax=3000,
			]
			\addplot[color=1, thick, mark=*, only marks] table[x=N, y=error, col sep=comma] {Trotter_Error_100_STO3G.csv};
			\addplot[color=2, mark=*, only marks] table[x=N, y=error, col sep=comma] {Trotter_Error_200_STO3G.csv};
			\addplot[color=3, mark=*, only marks] table[x=N, y=error, col sep=comma] {Trotter_Error_300_STO3G.csv};
			\addplot[color=4, mark=*, only marks] table[x=N, y=error, col sep=comma] {Trotter_Error_400_STO3G.csv};
			\addplot[color=5, mark=*, only marks] table[x=N, y=error, col sep=comma] {Trotter_Error_800_STO3G.csv};
			\addplot[color=0, ultra thick, domain=2:3000, densely dashed]{0.1*x^(-1)};
			\addplot[color=0, ultra thick, domain=2:3000]{0.13/x + 0.25/x^(1/2) + 0.004/x^(1/4)};
			\legend{
			\footnotesize $100$ modes,
			\footnotesize $200$ modes,
			\footnotesize $300$ modes,
			\footnotesize $400$ modes,
			\footnotesize $800$ modes,
			\footnotesize $N^{-1}$,
			\footnotesize $\infty$ mode scaling,
			};
			\end{axis}
		\end{tikzpicture}
		\caption{\label{figNumericsSTO-3G}The Trotter error at time $t=1$ in the Hartree atomic units $\hbar=m_\mathrm{e}=a_0=1$, as a function of the Trotter steps for an approximation $\Psi_{100}^\text{STO-3G}(r)=0.44\,\rme^{-0.11 r^2} + 0.53\,\rme^{-0.41 r^2} + 0.15\, \rme^{-2.23 r^2}$ of the ground state of the hydrogen atom in the STO-3G representation.
		The radial cutoff in the simulations is $R=40$.
		We show five different levels of discretization characterized by the number of Bessel modes.
		For reference, we show the slopes of $N^{-1}$ (grey dashed line) and $N^{-1/4}$ (grey line).
		We see that the asymptotics for any finite discretization initially start in a consistent way with the grey $N^{-1/4}$ line, and eventually go as $N^{-1}$, but $N$ at which this transition happens becomes larger with increasing number of modes.
		This provides an evidence that the scaling in the infinite-mode limit is indeed slower than $N^{-1}$.
		}
\end{figure}

\section{Second-order Trotter bounds}
\label{sec:second_order}
A more refined method to simulate quantum systems that is commonly used in both classical and quantum simulations is the second-order Trotterization.
It is obtained by symmetrizing the Trotter product as
\begin{equation}
	\mathcal{S}_N^{(2)}(t)=\left(\rme^{-\rmi \frac{t}{2N}H_1}\rme^{-\rmi \frac{t}{N}H_2}\rme^{-\rmi \frac{t}{2N}H_1}\right)^N,
	\label{eq:second_order}
\end{equation}
or alternatively,
\begin{equation}
	\tilde{\mathcal{S}}_N^{(2)}(t)=\left(\rme^{-\rmi \frac{t}{2N}H_2}\rme^{-\rmi \frac{t}{N}H_1}\rme^{-\rmi \frac{t}{2N}H_2}\right)^N.\label{eq:second_order_reversed}
\end{equation}
Through this approach, one obtains a tighter error scaling of $\mathcal{O}(t^3/N^2)$ in the case of bounded operators.
In particular~\cite[Prop.~10]{Childs2021},
\begin{align}
	\Vert\mathcal{S}_N^{(2)}(t) - \rme^{-\rmi t(H_1+H_2)}\Vert \le{}& \frac{t^3}{12N^2}\Vert [H_2, [H_2, H_1]]\Vert\nonumber\\
		&{}+\frac{t^3}{24N^2}\Vert [H_1, [H_1, H_2]]\Vert,
		\label{eq:norm_bound_2nd_order}
\end{align}
and analogously for $\tilde{\mathcal{S}}_N^{(2)}(t)$.
In the following, we will focus on the $H_1H_2H_1$ scheme given by the product formula $\mathcal{S}_N^{(2)}(t)$.
Nevertheless, we can always derive equivalent results for the $H_2H_1H_2$ scheme $\tilde{\mathcal{S}}_N^{(2)}(t)$ by exchanging $H_1\leftrightarrow H_2$.
In fact, both product formulas only differ by boundary terms.
As in the case of the first-order Trotterization, the bound in Eq.~\eqref{eq:norm_bound_2nd_order} diverges for unbounded operators such as the kinetic and potential energies of the hydrogen atom.
See also Ref.~\cite{Ichinose2004}.
Therefore, we develop explicitly state-dependent bounds for the second-order Trotterization, which enable us to study such problems.
For this, we define
\begin{equation}
	\xi_N^{(2)}(t;\varphi)\equiv \Vert [\mathcal{S}_N^{(2)}(t)-\rme^{-\rmi ht}]\varphi\Vert,
	\label{eq:second_order_error}
\end{equation}
where again $(H_1+H_2)\varphi=h\varphi$.
We then find the following state-dependent bound on the second-order Trotter product.
\begin{thm}
\label{thm:2nd_order}
Let $H_1$ and $H_2$ be self-adjoint on their respective domains $\mathcal{D}(H_1)$ and $\mathcal{D}(H_2)$, and let $(H_1+H_2)\varphi=h\varphi$.
For $g\in\mathbb{R}$, define ${H}_1(g)=H_1-g$ and ${H}_2(g)=H_2-h+g$.
Then, the state-dependent Trotter error of $\mathcal{S}_N^{(2)}(t)$ can be bounded for all $g\in\mathbb{R}$ and $t\geq 0$ by
	\begin{align*}
		\xi_N^{(2)}(t;\varphi)\leq \frac{t^3}{N^2}\,\biggl(&
		\frac{1}{24}\Vert {H}_1(g)^3\varphi\Vert + \frac{1}{8}\Vert {H}_2(g){H}_1(g)^2\varphi\Vert\nonumber\\
			&\qquad\qquad\qquad\quad\ \
			{}+\frac{1}{12}\Vert {H}_2(g)^3\varphi\Vert\biggr),
	\end{align*}
provided that $\varphi\in\mathcal{D}(H_1^3)\cap\mathcal{D}(H_2H_1^2)\cap\mathcal{D}(H_2^3)$. 
\end{thm}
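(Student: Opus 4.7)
The plan is to adapt the strategy of Theorem~\ref{thm:trotter_thm} by pushing the Taylor expansion one order higher, exploiting the cancellations that follow from $\varphi$ being an eigenstate of $H_1+H_2$. Setting $\alpha = t/(2N)$ and $\tau = t/N$, I first perform the same energy-shift trick: the operators $H_1(g)$ and $H_2(g)$ defined in the statement are self-adjoint on $\mathcal{D}(H_1)$ and $\mathcal{D}(H_2)$ respectively, and $[H_1(g)+H_2(g)]\varphi = 0$. Because $2\alpha g + \tau(h-g) = \tau h$, the scalar phase factors out cleanly, so that one Trotter step equals
\begin{equation*}
\rme^{-\rmi \alpha H_1}\rme^{-\rmi \tau H_2}\rme^{-\rmi \alpha H_1}\varphi
= \rme^{-\rmi \tau h}\,\mathcal{V}(\tau)\varphi,
\end{equation*}
with $\mathcal{V}(\tau)=\rme^{-\rmi \alpha H_1(g)}\rme^{-\rmi \tau H_2(g)}\rme^{-\rmi \alpha H_1(g)}$. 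Combining this with the telescoping identity $U^N-V^N = \sum_{k=0}^{N-1} U^k(U-V)V^{N-1-k}$ applied with $V = \rme^{-\rmi \tau h}$ (which acts as a scalar on $\varphi$) and exploiting unitarity of $\mathcal{V}$ reduces the problem to the single-step estimate
\begin{equation*}
\xi_N^{(2)}(t;\varphi) \le N\,\|[\mathcal{V}(\tau) - I]\varphi\|.
\end{equation*}

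For the single-step bound, I would use Taylor's theorem with integral remainder inside each of the three exponential factors in $\mathcal{V}(\tau)$, expanding in a nested manner so that every unbounded operator ultimately acts on $\varphi$ (or on an intermediate vector obtained from $\varphi$ by unitary factors that leave the relevant domain invariant). This yields an identity of the form
\begin{equation*}
\mathcal{V}(\tau)\varphi - \varphi = \mathcal{T}_3(\tau)\varphi + \mathcal{R}(\tau)\varphi,
\end{equation*}
where $\mathcal{T}_3(\tau)\varphi$ is the explicit order-$\tau^3$ polynomial part and $\mathcal{R}(\tau)\varphi$ collects integral remainders, each controlled in norm by $\tau^3$ times a cubic moment of $H_1(g)$ or $H_2(g)$ on $\varphi$. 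The terms of combined orders $\tau^0$, $\tau^1$, $\tau^2$ that would otherwise appear are arranged to cancel because they sum to $\varphi$, $-\rmi\tau[H_1(g)+H_2(g)]\varphi = 0$, and $-\tfrac{\tau^2}{2}[H_1(g)+H_2(g)]^2\varphi = 0$, respectively.

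The main technical obstacle is to organise the polynomial piece $\mathcal{T}_3(\tau)\varphi$ so that only the three moments in the statement appear. A direct expansion of the three exponential factors generates all seven third-order operator strings $H_1(g)^3$, $H_1(g)^2 H_2(g)$, $H_1(g) H_2(g) H_1(g)$, $H_1(g) H_2(g)^2$, $H_2(g) H_1(g)^2$, $H_2(g)^2 H_1(g)$, $H_2(g)^3$ applied to $\varphi$, with explicit rational weights stemming from $\alpha = \tau/2$ and the factorials $2!,3!$. Using the eigenstate identity $H_2(g)\varphi = -H_1(g)\varphi$ at the rightmost slot, together with its once-extended version $H_2(g)^2\varphi = -H_2(g)H_1(g)\varphi$ (valid because $H_1(g)\varphi = -H_2(g)\varphi \in \mathcal{D}(H_2)$), one converts $H_1(g)^2 H_2(g)\varphi\to -H_1(g)^3\varphi$, $H_2(g)^2 H_1(g)\varphi\to -H_2(g)^3\varphi$, and $H_1(g)H_2(g)^2\varphi\to -H_1(g)H_2(g)H_1(g)\varphi$. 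After these substitutions, the $H_1(g)H_2(g)H_1(g)\varphi$ contributions cancel and the surviving arithmetic produces exactly the coefficients $\tfrac{1}{24}$, $\tfrac{1}{8}$, $\tfrac{1}{12}$ in the statement. The triangle inequality and the $N$-fold prefactor then yield the claimed $t^3/N^2$ bound, the domain hypothesis $\varphi\in\mathcal{D}(H_1^3)\cap\mathcal{D}(H_2H_1^2)\cap\mathcal{D}(H_2^3)$ being exactly what is required for every surviving third-order string to be a finite vector and for the eigenstate reductions above to be valid.
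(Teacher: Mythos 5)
Your overall strategy is sound and is essentially the one the paper uses: shift the energy so that $[H_1(g)+H_2(g)]\varphi=0$, telescope to a single Trotter cycle, expand that cycle to third order, and let the eigenstate relation $H_1(g)\varphi=-H_2(g)\varphi$ kill the orders $\tau^0,\tau^1,\tau^2$ and collapse the surviving third-order strings onto the three moments in the statement. The paper implements the single-step expansion differently: it views the cycle as evolution under a piecewise-constant time-dependent Hamiltonian and derives a Dyson-type expansion (via iterated integration by parts of the \emph{backward} Schr\"odinger equation, Lemma~\ref{lem:error_rep_higher_order}), so that the remainder has the form $\int_0^{T_M}\rmd s\,U(T_M)U^\dag(s)H^{(2)}(s)S_2(s)\varphi$ with every unitary sitting to the \emph{left} of every Hamiltonian. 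This buys two things automatically: no unbounded operator ever has to act on a vector that has already been hit by a unitary, and the integrands $\|H_{\sigma(j)}S_2(s)\varphi\|$ contain the correctly combined polynomials whose exact integration produces $\tfrac{1}{24},\tfrac18,\tfrac1{12}$.

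Measured against that, your write-up has three soft spots. First, your parenthetical appeal to ``unitary factors that leave the relevant domain invariant'' is neither assumed nor generally true for unbounded self-adjoint operators; the paper is explicitly engineered to avoid any such invariance. (Fortunately you do not actually need it: if you truncate each exponential at second order, every integral remainder is already $\mathcal{O}(\tau^3)$ and only unitaries act on it afterwards, so it suffices to drop the parenthetical and check that this is the case.) Second, the substitution $H_1(g)H_2(g)^2\varphi\to -H_1(g)H_2(g)H_1(g)\varphi$ followed by ``the $H_1(g)H_2(g)H_1(g)\varphi$ contributions cancel'' manipulates vectors that are not guaranteed to be individually well defined: $\varphi\in\mathcal{D}(H_1H_2^2)$ and $\varphi\in\mathcal{D}(H_1H_2H_1)$ are \emph{not} among the hypotheses. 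The rigorous move, which the paper makes in computing $I_3$, is to group these contributions as $H_1(g)H_2(g)[H_1(g)+H_2(g)]\varphi=H_1(g)H_2(g)\cdot 0=0$ \emph{before} the outer $H_1(g)$ is ever applied. Third, the exact constants are not automatic: for instance the $\|H_2(g)^3\varphi\|$ coefficient $\tfrac1{12}$ only emerges if the remainder of $\rme^{-\rmi\tau H_2(g)}$ on $\varphi$ and its remainder on $-\rmi\alpha H_1(g)\varphi$ are first combined into a single function of $H_2(g)$ applied to $\varphi$ (using $H_1(g)\varphi=-H_2(g)\varphi$) and only then bounded; a stage-by-stage triangle inequality gives $\tfrac16+\tfrac14=\tfrac5{12}$ instead. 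None of these is fatal to your approach, but as written the argument asserts rather than secures the domain bookkeeping and the sharp prefactors, which is precisely what the paper's integral-action formalism is built to deliver.
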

\begin{proof}
	This is proved in Appendix~\ref{sec:higher_order_derivation}\@.
	For the proof, we again shift the energy of $H_1+H_2$ with respect to $\varphi$ to zero to turn the target evolution on $\varphi$ to the identity $I$.
	Then, we use integration by parts to rewrite the error operator $\mathcal{S}_N^{(2)}(t)-I$.
	By noticing that the boundary terms disappear, we are only left with a remainder, that admits the desired scaling of $\mathcal{O}(t^3/N^2)$.
	Explicitly bounding this remainder gives the theorem.
\end{proof}
As in the case of the first-order Trotterization, it is possible to state a no-go theorem for the existence of the second-order Trotter bounds with commutator scaling by making the same argument as in Thm.~\ref{thm:no-go_commutator_scaling}\@.
In fact, this reasoning can be extended to arbitrary higher-order product formulas.
Furthermore, the bound from Thm.~\ref{thm:2nd_order} can be extended to generic input states by considering superpositions of eigenstates.
See Cor.~\ref{cor:superposition_states}\@.

Notice that the domain conditions on the input state $\varphi$ in Thm.~\ref{thm:2nd_order} are even stronger than the ones for the first-order Trotterization in Thm.~\ref{thm:trotter_thm}\@.
For this reason, the ground state $\Psi_{100}$ of the hydrogen atom does not satisfy them and we expect a slower scaling behavior also for the second-order Trotterization.
As in the case of the first-order Trotterization (Thm.~\ref{thm:first_order_alpha}), we derive domain conditions and bounds for a slower fractional scaling.
\begin{thm}\label{thm:second_order_alpha}
Let $H_1$ be self-adjoint on $\mathcal{D}(H_1)$ and $H_2$ be self-adjoint on $\mathcal{D}(H_2)$.
Let $\varphi$ be an eigenstate of $H_1+H_2$ with eigenvalue $h$, i.e.\ $(H_1+H_2)\varphi=h\varphi$. 
Let the spectral measures $\mu_{j,\varphi}$ of $H_j$ at $\varphi$ decay as
\begin{equation}
	\mu_{j,\varphi}(\{|\lambda|\geq \Lambda\}) = \mathcal{O}\!\left(\frac{1}{\Lambda^{2\delta}}\right),\quad  j=1,2,
	\label{eq:tail_bound_2}
\end{equation}
for $\Lambda>0$ and some $\delta\in(1,2]$.
Then,
\begin{equation*}
\xi_N^{(2)}(t;\varphi)
=\begin{cases}
\medskip
\displaystyle
\mathcal{O}\!\left(\frac{t^\delta}{N^{\delta-1}}\right),
&\delta\in(1,2),\\
\displaystyle
\mathcal{O}\!\left(\frac{t^2}{N}\sqrt{\log (N/t)}\right), &\delta=2.
\end{cases}
\end{equation*}
Furthermore, if the decay~(\ref{eq:tail_bound_2}) holds for $\delta\geq 2$, $\varphi\in \mathcal{D}(H_1^2)$, and
the spectral measure $\mu_{2,H_1^2\varphi}$ of  $H_2$ at the vector $H_1^2\varphi$ decays as
\begin{equation*}
	\mu_{2,H_1^2\varphi}(\{|\lambda|\geq \Lambda\})=\mathcal{O}\!\left(\frac{1}{\Lambda^{2(\delta-2)}}\right),
\end{equation*}
for  $\Lambda>0$, then,
\begin{equation*}
\xi_N^{(2)}(t;\varphi)=\begin{cases}
\medskip
\displaystyle
\mathcal{O}\!\left(\frac{t^\delta}{N^{\delta-1}}\right),
&\delta\in[2,3),\\
\medskip
\displaystyle
\mathcal{O}\!\left(\frac{t^3}{N^2}\sqrt{\log (N/t)}\right), &\delta=3,\\
\displaystyle
\mathcal{O}\!\left(\frac{t^3}{N^2}\right), &\delta>3.
\end{cases}
\end{equation*}
\end{thm}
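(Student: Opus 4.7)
The plan is to mirror the two-regime structure of Theorem~\ref{thm:first_order_alpha}, separating the low-regularity range $\delta\in(1,2]$ from the higher-regularity range $\delta\in[2,3]$; the case $\delta>3$ is covered directly by Theorem~\ref{thm:2nd_order} since all relevant third moments are then finite.

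For $\delta\in(1,2]$, the symmetric structure of $\mathcal{S}_N^{(2)}$ offers no genuine advantage, and I would reduce the problem to the first-order case. The algebraic identity
\begin{equation*}
\mathcal{S}_N^{(2)}(t)
= V_1\,\mathcal{U}_N(t)\,V_1^{-1},
\qquad V_1=\rme^{-\rmi(t/2N)H_1},
\end{equation*}
with $\mathcal{U}_N(t)=(\rme^{-\rmi(t/N)H_2}\rme^{-\rmi(t/N)H_1})^N$, follows because $V_1^2=\rme^{-\rmi(t/N)H_1}$ absorbs the telescoping middle factors. Writing $V_1^{-1}\varphi=\varphi+(V_1^{-1}-I)\varphi$, using unitarity of $V_1$ and the commutation of $V_1$ with the scalar $\rme^{-\rmi ht}$, together with $\|(V_1^{-1}-I)\varphi\|\le(t/2N)\|H_1\varphi\|<\infty$ (valid since $\varphi\in\mathcal{D}(H_1)$), one obtains
\begin{equation*}
\xi_N^{(2)}(t;\varphi)\le\xi_N(t;\varphi)+\frac{t}{N}\|H_1\varphi\|.
\end{equation*}
Applying Theorem~\ref{thm:first_order_alpha} to $\xi_N(t;\varphi)$ then yields exactly the stated scalings, since the correction $\mathcal{O}(t/N)$ is dominated by $\mathcal{O}(t^\delta/N^{\delta-1})$ for $\delta\in(1,2)$ and by $\mathcal{O}((t^2/N)\sqrt{\log(N/t)})$ at $\delta=2$.

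For $\delta\in[2,3]$, the extra regularity $\varphi\in\mathcal{D}(H_1^2)$ together with the tail bound on $\mu_{2,H_1^2\varphi}$ permits the genuine second-order cancellation to be recovered. I would shift so that $(\bar{H}_1+\bar{H}_2)\varphi=0$ and retrace the integration-by-parts derivation of Theorem~\ref{thm:2nd_order} from Appendix~\ref{sec:higher_order_derivation}, in which the single-step error $(S_\tau-I)\varphi$ with $\tau=t/N$ is organized into three integrated remainders whose operator-norm estimates reproduce $\|H_1^3\varphi\|$, $\|H_2H_1^2\varphi\|$, and $\|H_2^3\varphi\|$. The refinement consists in replacing each such norm by a spectral-calculus estimate of the form
\begin{equation*}
\Bigl\|\Bigl(\rme^{-\rmi\tau H}-I+\rmi\tau H+\tfrac{\tau^2}{2}H^2\Bigr)\psi\Bigr\|=\mathcal{O}\!\left(\tau^{\delta'}\right),
\end{equation*}
where $\delta'$ is dictated by the tail of the spectral measure of $H$ at $\psi$. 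For the two terms built from $\varphi$ alone, the relevant measure is $\mu_{j,\varphi}$ and the exponent is $\delta$. The $H_2H_1^2\varphi$ term is precisely where the extra hypothesis on $\mu_{2,H_1^2\varphi}$ enters, the shift of the exponent to $\delta-2$ reflecting the two $H_1$-factors already absorbed against $\varphi$ before $H_2$ is applied.

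With the per-step bound $\|(S_\tau-I)\varphi\|=\mathcal{O}(\tau^\delta)$, supplemented by a $\sqrt{\log(1/\tau)}$ factor at the borderlines $\delta=2,3$, the telescoping $S_\tau^N-I=\sum_{k=0}^{N-1}S_\tau^k(S_\tau-I)$ and the unitarity of $S_\tau$ multiply the single-step error by $N$ and produce all the scalings stated in the theorem. The main obstacle I anticipate is the bookkeeping in the higher-regularity regime: the integrated remainders are sums of non-commuting compositions of $\bar{H}_1$ and $\bar{H}_2$, and one has to organize them so that every factor of $H_2$ ultimately acts either on $\varphi$ or on $H_1^2\varphi$, matching the available spectral-tail hypotheses, while simultaneously preserving the symmetry responsible for the $\tau^3$ cancellation and keeping the logarithmic borderline cases aligned with the first-order statement.
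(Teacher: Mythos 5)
Your proposal is correct, and the two halves relate to the paper's proof differently. For $\delta\in(1,2]$ you take a genuinely different and more elementary route: the conjugation identity $\mathcal{S}_N^{(2)}(t)=V_1\,\mathcal{U}_N(t)\,V_1^{-1}$ is valid, and since the scalar $\rme^{-\rmi ht}$ commutes with $V_1$ you in fact get the cleaner form $\xi_N^{(2)}(t;\varphi)=\|(\mathcal{U}_N(t)-\rme^{-\rmi ht})V_1^{-1}\varphi\|\le\xi_N(t;\varphi)+2\|(V_1^{-1}-I)\varphi\|$, after which Thm.~\ref{thm:first_order_alpha} (whose hypotheses are identical) finishes the argument. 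The paper instead stops the Dyson/integration-by-parts expansion of a single Strang cycle at order $q=1$ and bounds the resulting first-order Taylor remainders of $\rme^{-\rmi\frac{t}{2N}H_j}$ via the spectral tails [Lemma~\ref{lem:FractionalDelta}, Eq.~\eqref{eq:fractional_second_1}]. Your version makes the ``Strang $=$ Lie--Trotter $+$ boundary terms'' mechanism transparent; what the paper's version buys is a bound with no additive $\mathcal{O}(t/N)$ term, whereas your correction, while strictly subdominant as $N\to\infty$ at fixed $t$ (which is all the theorem asserts), does degrade the stated $t^\delta$ dependence in the complementary regime of small $t$ at fixed $N$. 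For $\delta\in[2,3]$ your plan is essentially the paper's: the $q=2$ Dyson remainder, reorganized using $(H_1+H_2)\varphi=0$ into pieces that are either second-order Taylor remainders of $\rme^{-\rmi\frac{t}{2N}H_j}$ at $\varphi$ (exponent $\delta$ from $\mu_{j,\varphi}$) or the cross term $\frac{t^2}{8N}\|(\rme^{-\rmi\frac{t}{N}H_2}-I)H_1^2\varphi\|$. Note that this last piece is a \emph{zeroth}-order remainder, not the second-order template you wrote down; it is exactly where $\mu_{2,H_1^2\varphi}$ with exponent $\delta-2\in(0,1)$ enters, contributing $\tau^2\cdot\tau^{\delta-2}=\tau^\delta$ per step. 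The bookkeeping you flag is real but routine once one insists that the reorganized remainder contain only full exponentials acting after operators applied to $\varphi$ or to $H_1^2\varphi$; Eq.~\eqref{eq:fractional_second_2} of Lemma~\ref{lem:FractionalDelta} carries this out explicitly.
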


\begin{proof}
The proof is similar to the proof of Thm.~\ref{thm:first_order_alpha}\@.
The case $\delta>3$ is covered by Thm.~\ref{thm:2nd_order}, since in this case we are sure that $\varphi\in\mathcal{D}(H_1^3)\cap\mathcal{D}(H_2H_1^2)\cap\mathcal{D}(H_2^3)$ and the domain condition for Thm.~\ref{thm:2nd_order} is fulfilled.
The remaining nontrivial parts of the theorem are proved by the method described in  Appendix~\ref{sec:higher_order_derivation}\@.
\end{proof}

We now apply our second-order Trotter bounds to the problem of simulating the hydrogen atom.

\section{Application of the second-order bounds to the hydrogen atom}
\label{sec:second_order_hydrogen}
Applied to the eigenfunctions $\Psi_{n\ell m}$ of the hydrogen atom, Thm.~\ref{thm:second_order_alpha} results in scalings slower than $\mathcal{O}(N^{-2})$ for all $\ell\leq 3$.
As in the case of the first-order Trotterization, the scaling is again entirely determined by the quantum number $\ell$ of the orbital angular momentum.
We summarize our findings in Table~\ref{tab:scalings_2nd_order}\@.
\begin{table}
	\caption{\label{tab:scalings_2nd_order}Scalings of the analytical second-order Trotter error bounds for the  eigenfunctions $\Psi_{n\ell m}$ of the hydrogen atom.
	The fractional scalings are derived from Thm.~\ref{thm:second_order_alpha}\@.
	We find different scalings depending on the input states, which are solely determined by the quantum number $\ell$ of the orbital angular momentum.
	In particular, the scalings for all s- to f-orbitals with $\ell\le3$ are slower than the expected $N^{-2}$ scaling.
	Interestingly, the scalings for the f-orbitals slightly differ depending on whether we use the $H_1H_2H_1$ product formula $\mathcal{S}_N^{(2)}(t)$ [see the $\xi_N^{(2)}(t;\Psi_{n\ell m})$ column] or the $H_2H_1H_2$ product formula $\tilde{\mathcal{S}}_N^{(2)}(t)$ [see the $\tilde{\xi}_N^{(2)}(t;\Psi_{n\ell m})$ column].
	}
	\begin{center}
		\begin{tabular}{|c|c|c|}
		\hline
		\begin{tabular}[c]{@{}c@{}}  \bf{Orbital angular momen-} \\ \bf{tum quantum number} $\bm{\ell}$ \end{tabular} &
		\begin{tabular}[c]{@{}c@{}} \bf{Scaling of} \\ $\bm{\xi_N^{(2)}(t;\Psi_{n\ell m})}$ \end{tabular} &
		\begin{tabular}[c]{@{}c@{}} \bf{Scaling of} \\ $\bm{\tilde{\xi}_N^{(2)}(t;\Psi_{n\ell m})}$ \end{tabular}
		\\
			\hhline{:===:}
			$\ell = 0$ (s-orbitals) & $\mathcal{O}(N^{-1/4})$ & $\mathcal{O}(N^{-1/4})$\\
			$\ell = 1$ (p-orbitals) & $\mathcal{O}(N^{-3/4})$ & $\mathcal{O}(N^{-3/4})$\\
			$\ell = 2$ (d-orbitals)& $\mathcal{O}(N^{-5/4})$ & $\mathcal{O}(N^{-5/4})$\\
			$\ell = 3$ (f-orbitals)& $\mathcal{O}(N^{-3/2})$ & $\mathcal{O}(N^{-7/4})$\\
			$\ell \geq 4$ (g-orbitals and higher)& $\mathcal{O}(N^{-2})$ & $\mathcal{O}(N^{-2})$\\
			\hline
		\end{tabular}
	\end{center}
\end{table}

We numerically investigate the ground-state Trotter error in Fig.~\ref{fig:second_order}\@.
Interestingly, we observe the same $N^{-1/4}$ scaling behavior as in the case of the first-order Trotterization.
This shows that the second-order Trotterization is not advantageous in this case.
See also Fig.~\ref{fig:order_comparison} below.
\begin{figure}
\centering
	\begin{tikzpicture}[mark size={1.5}, scale=1]
			\begin{axis}[
			xmode=log,
			ymode=log,
			xlabel={Trotter steps $N$},
			ylabel=Second-order Trotter error $\xi_N^{(2)}(t;\Psi_{100})$,
			label style={font=\footnotesize},
			tick label style={font=\footnotesize},
			x post scale=1,
			y post scale=1,
			legend style={at={(0.5,-0.2)},anchor=north},
			legend columns=3,
			legend cell align={left},
			xmin=2, xmax=3000,
			]
			\addplot[color=1, mark=*, only marks] table[x=N, y=error, col sep=comma] {Trotter_Error_2nd_order_groundstate_100.csv};
			\addplot[color=2, mark=*, only marks] table[x=N, y=error, col sep=comma] {Trotter_Error_2nd_order_groundstate_200.csv};
			\addplot[color=3, mark=*, only marks] table[x=N, y=error, col sep=comma] {Trotter_Error_2nd_order_groundstate_300.csv};
			\addplot[color=4, mark=*, only marks] table[x=N, y=error, col sep=comma] {Trotter_Error_2nd_order_groundstate_400.csv};
			\addplot[color=5, mark=*, only marks] table[x=N, y=error, col sep=comma] {Trotter_Error_2nd_order_groundstate_800.csv};
			\addplot[color=0, ultra thick, densely dashed, domain=2:3000]{0.15*x^(-2)};
			\addplot[color=6, ultra thick, domain=2:3000]{0.5*(2.04665*x^(-1/2) + 0.832107/x + 1.34365/x^(1/4))};
			\legend{
			\footnotesize $100$ modes,
			\footnotesize $200$ modes,
			\footnotesize $300$ modes,
			\footnotesize $400$ modes,
			\footnotesize $800$ modes,
			\footnotesize $N^{-2}$,
			\footnotesize analytic bound,
			};
			\end{axis}
		\end{tikzpicture}
		\caption{\label{fig:second_order}The error of the second-order Trotter product formula at time $t=1$ in the Hartree atomic units $\hbar=m_\mathrm{e}=a_0=1$, as a function of the Trotter steps for the ground state $\Psi_{100}$ of the hydrogen atom. See Eq.~\eqref{eq:second_order_error}. The radial cutoff in the simulations is $R=30$. We show five different levels of discretization characterized by the number of Bessel modes. For reference, we show the slopes of $N^{-2}$ (grey dashed line) and our analytic bound (brown solid line), which scales as $N^{-1/4}$. Notice that the second-order bound just computes to half the first-order bound. We see that the asymptotics for any finite discretization initially start in alignment with our bound, and eventually go as $N^{-2}$, but $N$ at which this transition happens becomes larger with increasing number of modes. This provides an evidence that the scaling in the infinite-mode limit is indeed slower than $N^{-2}$. In particular, we expect no improvement in the asymptotic scaling from the second-order Trotterization over the first-order Trotterization, see Fig.~\ref{figNumericsGroundState}\@.}
\end{figure}
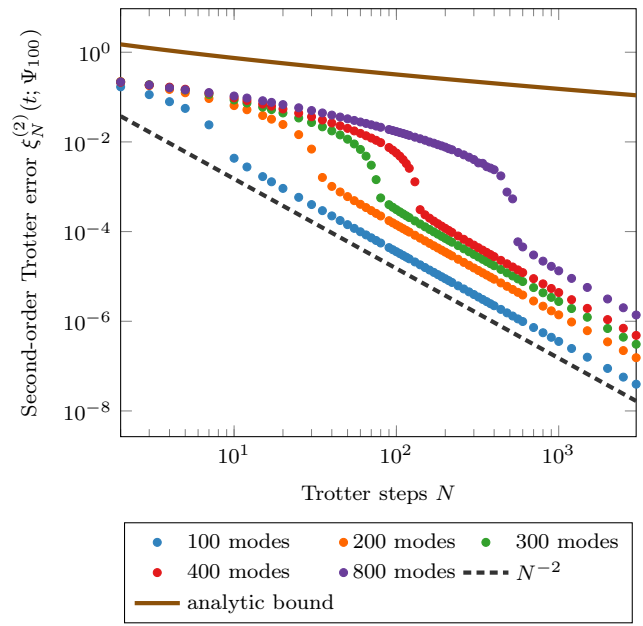

An interesting observation about the bounds in Thm.~\ref{thm:2nd_order} is that they are asymmetric under the exchange of $H_1$ and $H_2$.
This means that also the domain conditions on the state $\varphi$ change if we implement this change in the second-order Trotter product formula~\eqref{eq:second_order}.
Indeed, we find an asymmetry in the scaling for the eigenstate $\Psi_{430}$ of the hydrogen atom, see Table~\ref{tab:scalings_2nd_order}\@.
However, we do not see this effect numerically and both product formulas seem to admit the same scaling.
We simulated the Trotter error for $\Psi_{430}$ for the $H_1H_2H_1$ scheme $\mathcal{S}_N^{(2)}(t)$ and performed the same simulation for the $H_2H_1H_2$ scheme $\tilde{\mathcal{S}}_N^{(2)}(t)$.
The results are shown in Fig.~\ref{fig:second_order_n4_l3}.
Both curves are indistinguishable; nevertheless, $\Psi_{430}$ exhibits a scaling which is slower than $N^{-2}$.
However, we conjecture that there exist quantum systems, which indeed admit an asymmetric scaling for certain input states.

\begin{figure}
\centering
	\begin{tikzpicture}[mark size={1.5}, scale=1]
			\begin{axis}[
			xmode=log,
			ymode=log,
			xlabel={Trotter steps $N$},
			ylabel=Second-order Trotter error $\mathord{\mathop{\xi}\limits^{\scriptscriptstyle(\sim)}}_N^{(2)}(t;\Psi_{430})$,
			label style={font=\footnotesize},
			tick label style={font=\footnotesize},
			x post scale=1,
			y post scale=1,
			legend style={at={(0.5,-0.2)},anchor=north},
			legend columns=2,
			legend cell align={left},
			xmin=2, xmax=3000,
			]
			\addplot[color=1, mark=*, only marks] table[x=N, y=error, col sep=comma] {Trotter_Error_2nd_order_ABA_n4_l3_200.csv};
			\addplot[color=1, mark=x, only marks, mark size=4] table[x=N, y=error, col sep=comma] {Trotter_Error_2nd_order_BAB_n4_l3_200.csv};
			\addplot[color=2, mark=*, only marks] table[x=N, y=error, col sep=comma] {Trotter_Error_2nd_order_ABA_n4_l3_400.csv};
			\addplot[color=2, mark=x, only marks, mark size=4] table[x=N, y=error, col sep=comma] {Trotter_Error_2nd_order_BAB_n4_l3_400.csv};
			\addplot[color=3, mark=*, only marks] table[x=N, y=error, col sep=comma] {Trotter_Error_2nd_order_ABA_n4_l3_800.csv};
			\addplot[color=3, mark=x, only marks, mark size=4] table[x=N, y=error, col sep=comma] {Trotter_Error_2nd_order_BAB_n4_l3_800.csv};
			\addplot[color=4, mark=*, only marks] table[x=N, y=error, col sep=comma] {Trotter_Error_2nd_order_ABA_n4_l3_1200.csv};
			\addplot[color=4, mark=x, only marks, mark size=4] table[x=N, y=error, col sep=comma] {Trotter_Error_2nd_order_BAB_n4_l3_1200.csv};
			\addplot[color=0, mark=*, ultra thick, mark options={line width=0.4pt}] table[x=N, y=error, col sep=comma] {Scaling_n4_l3_Three_Half.csv};
			\addplot[color=0, mark=x, ultra thick, mark options={line width=0.4pt}, mark size=4] table[x=N, y=error, col sep=comma] {Scaling_n4_l3_Seven_Quarter.csv};
			\addplot[color=0, ultra thick, domain=2:3000, densely dashed]{0.01*x^(-2)};
			\legend{
			\footnotesize $200$ modes $\mathcal{S}_N^{(2)}$,
			\footnotesize $200$ modes $\tilde{\mathcal{S}}_N^{(2)}$,
			\footnotesize $400$ modes $\mathcal{S}_N^{(2)}$,
			\footnotesize $400$ modes $\tilde{\mathcal{S}}_N^{(2)}$,
			\footnotesize $800$ modes $\mathcal{S}_N^{(2)}$,
			\footnotesize $800$ modes $\tilde{\mathcal{S}}_N^{(2)}$,
			\footnotesize $1200$ modes $\mathcal{S}_N^{(2)}$,
			\footnotesize $1200$ modes $\tilde{\mathcal{S}}_N^{(2)}$,
			\footnotesize $N^{-3/2}$,
			\footnotesize $N^{-7/4}$,
			\footnotesize $N^{-2}$
			};
			\end{axis}
		\end{tikzpicture}
		\caption{\label{fig:second_order_n4_l3}The error of the second-order Trotter product formula at time $t=1$ in the Hartree atomic units $\hbar=m_\mathrm{e}=a_0=1$, as a function of the Trotter steps for the eigenstate $\Psi_{430}$ of the hydrogen atom.
		The radial cutoff in the simulations is $R=20$.
		We show four different levels of discretization characterized by the number of Bessel modes.
		For reference, we show the slope of $N^{-2}$ (grey dashed line).
		The results of the $H_1H_2H_1$ scheme $\mathcal{S}_N^{(2)}(t)$ of the second-order Trotterization in Eq.~\eqref{eq:second_order} are shown by dots $\bullet$.
		The results of the $H_2H_1H_2$ scheme $\tilde{\mathcal{S}}_N^{(2)}(t)$ of the second-order Trotterization in Eq.~\eqref{eq:second_order_reversed} are shown by crosses $\times$.
		Both schemes show indistinguishable results.
		However, we see again that the asymptotics for any finite discretization initially start slower, and eventually go as $N^{-2}$, but $N$ at which this transition happens becomes larger with increasing number of modes.
		This provides an evidence that the scaling in the infinite-mode limit is indeed slower than $N^{-2}$.
		For reference, we show the predicted scaling of $N^{-3/2}$ for $\mathcal{S}_N^{(2)}(t)$ (grey line with dots) and the predicted scaling of $N^{-7/4}$ for $\tilde{\mathcal{S}}_N^{(2)}(t)$ (grey line with crosses).
		}
\end{figure}
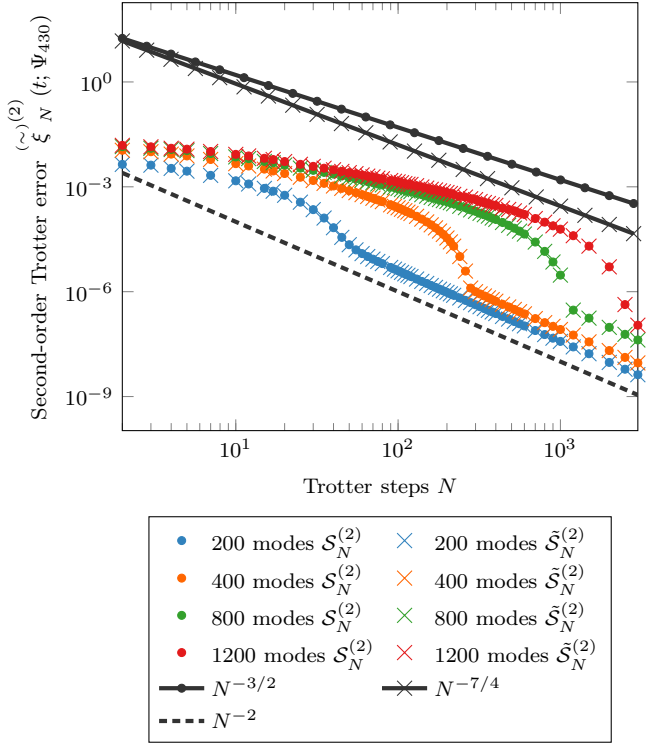

From the previous discussion, it seems like higher-order Trotter schemes are more restricted in terms of their domain conditions than the lower-order schemes.
To investigate this further, we develop a method to obtain state-dependent higher-order Trotter bounds for arbitrary order $p$.
This is done by generalizing the approach for the second-order results.
In the next section, we explain how to derive such bounds and explicitly study the example of a fourth-order Trotter product formula.

\section{Higher-order Trotterization}
\label{sec:higher_order}
A common practice in quantum simulation is to employ higher-order Trotter schemes~\cite{Childs2021}.
These are obtained by symmetrizing the Trotter product, which results in the cancellation of low-order terms in the Taylor expansion of the exponentials~\cite{Morales2022}.
For two \emph{bounded} Hamiltonians $H_1$ and $H_2$, the error of the $p$th-order Trotter product formula $\mathcal{S}^{(p)}(t)$ scales as $\mathcal{O}(t^{p+1}/N^p)$~\cite{Childs2021}.
Schemes with a higher order $p$ might have many different solutions for the short switching times between $H_1$ and $H_2$~\cite{Morales2022}.
An example of fourth-order scheme ($p=4$) is~\cite{Suzuki1991}
\begin{align}
\mathcal{S}_N^{(4)}(t)=\Bigl(&\rme^{-\rmi\frac{t}{N}\frac{\tau}{2}H_1}\rme^{-\rmi \frac{t}{N}\tau H_2}\rme^{-\rmi \frac{t}{N}\frac{1-\tau}{2}H_1}\rme^{-\rmi \frac{t}{N}(1-2\tau)H_2}\nonumber\\
		&\qquad\qquad\
		{}\times \rme^{-\rmi\frac{t}{N}\frac{1-\tau}{2}H_1}\rme^{-\rmi\frac{t}{N}\tau H_2}\rme^{-\rmi\frac{t}{N}\frac{\tau}{2}H_1}\Bigr)^N,
		\label{eq:fourth_order}
\end{align}
where $\tau=1/(2-2^{1/3})$.
If the two Hamiltonians $H_1$ and $H_2$ are bounded, the Trotter error scales as
\begin{equation*}
	\Vert\mathcal{S}_N^{(4)}(t) - \rme^{-\rmi t(H_1+H_2)}\Vert = \mathcal{O}\!\left(\frac{t^5}{N^4}\right).
\end{equation*}
Notice that the better scalings of the higher-order schemes in the number of Trotter steps $N$ simultaneously go along with having to implement more unitaries per Trotter cycle.
A recent study~\cite{Morales2022} suggests that an 8th-order expansion could lead to optimal results requiring minimal resources.

Nevertheless, the results mentioned above are usually not applicable to unbounded operators, see e.g.\ Refs.~\cite{Ichinose2004, Burgarth2023}.
As before, we need to perform a state-dependent analysis in this case.
In Sec.~\ref{sec:methods_higher_order} of the Methods section, and in Appendix~\ref{sec:higher_order_derivation} in more detail, we present a method to obtain such state-dependent bounds systematically for general $p$th-order product formulas through repeated integration by parts.
This method is inspired by Refs.~\cite{Burgarth2022, Burgarth2023}.
By generalizing these ideas, we are able to carry them over to unbounded operators provided the input state satisfies certain domain conditions.

In general, a $p$th-order Trotter product is given in the form
\begin{equation}
\mathcal{S}_N^{(p)}(t)=\left(\rme^{-\rmi\frac{t}{N}\tau_MH_{\sigma(M)}}\cdots \rme^{-\rmi\frac{t}{N}\tau_2H_{\sigma(2)}}\rme^{-\rmi\frac{t}{N}\tau_1H_{\sigma(1)}}\right)^N,
\label{eq:higher_order_evolution}
\end{equation}
where
\begin{equation}
	\sigma(j)=\begin{cases}
	\medskip
	\displaystyle
		1,&\text{if } j \text{ is odd},\\
	\displaystyle
		2,&\text{if } j \text{ is even},
	\end{cases}
\label{eq:parity}
\end{equation}
and $\sum_{i=1}^{\lceil M/2\rceil}\tau_{2i-1}=\sum_{i=1}^{\lfloor M/2\rfloor}\tau_{2i}=1$, where $\lceil x\rceil$ is the ceiling and $\lfloor x\rfloor$ is the floor function.
This product is to be compared with $\rme^{-\rmi t(H_1+H_2)}$.
We then define the $p$th-order state-dependent Trotter error by
\begin{equation}
	\xi_N^{(p)}(t;\varphi)=\Vert[\mathcal{S}_N^{(p)}(t)-\rme^{-\rmi th}]\varphi\Vert,
	\label{eq:higher_order_error}
\end{equation}
where again $(H_1+H_2)\varphi=h\varphi$.

Note that some $\tau_j$ in the product formula~(\ref{eq:higher_order_evolution}) can be negative.
Indeed, in the fourth-order Trotter product in Eq.~(\ref{eq:fourth_order}), $\tau_3=\tau_5=1-\tau$ and $\tau_4=1-2\tau$ are negative.
In fact, it has been shown by Suzuki that product formulas with $p>2$ necessarily have to involve some negative $\tau_j$ \cite[Thm.~3]{Suzuki1991}\@.
Many examples of $p$th-order product formulas can be found in the literature, see e.g.\ Refs.~\cite{Suzuki1990, Yoshida1990,Suzuki1991, Morales2022, Ostmeyer2023}\@.
For instance, in the case of Suzuki's first fractal method for generating higher-order product formulas~\cite{Suzuki1990, Suzuki1991, Morales2022}, the number of exponentials involved in the product formula of an even order $p$ is given by $M=2\cdot3^{p/2-1}+1$~\cite{Morales2022}.
In any case, our method for bounding $\xi_N^{(p)}(t;\varphi)$ works for any $p$th-order product formula.
Although our method allows us to set up and solve the equations for the switching times between $H_1$ and $H_2$ in a $p$th-order Trotter product, we assume that these are known here.

A Mathematica script that explicitly computes the Trotter bounds according to our procedure for given switching times is outlined in Appendix~\ref{sec:Mathematica}\@.
This method produces bounds with small prefactors that can be used for practical state-dependent Hamiltonian-simulation purposes.
Nevertheless, we here provide a loose state-dependent bound, which only aims to show the scaling.
\begin{thm}\label{thm:loose_bound_higher_order}
Let $H_1$ be self-adjoint on $\mathcal{D}(H_1)$ and $H_2$ be self-adjoint on $\mathcal{D}(H_2)$.
Let $\varphi$ be an eigenstate of $H_1+H_2$ with eigenvalue $h$, i.e.\ $(H_1+H_2)\varphi=h\varphi$.
For $g\in\mathbb{R}$, define ${H}_1(g)=H_1-g$ and ${H}_2(g)=H_2-h+g$.
Then, the state-dependent error of a $p$th-order Trotter product $\mathcal{S}_N^{(p)}(t)$, consisting of $M$ exponentials of $H_1$ and $H_2$ per Trotter cycle,
can be bounded for all $g\in\mathbb{R}$ and $t\geq 0$ by
\begin{equation*}
\xi_N^{(p)}(t;\varphi)\leq\frac{1}{(p+1)!}\frac{(\tau_{*}t)^{p+1}}{N^p}K_{p+1}(\varphi),
\end{equation*}
where $\tau_*=\sum_{j=1}^M|\tau_j|$ and
$$
{K}_p(\varphi)=\max_{1\leq j_1\leq\dots\leq j_p\leq M}\|H_{\sigma(j_p)}(g)\cdots H_{\sigma(j_1)}(g)\varphi\|,
$$
provided that $\varphi$ is in the domains of all the involved operator products.
\end{thm}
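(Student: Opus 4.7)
The plan is to mirror the strategy used to prove Thm.~\ref{thm:trotter_thm} and Thm.~\ref{thm:2nd_order}: first eliminate the target evolution by an energy shift on $\varphi$, then telescope the $N$-cycle Trotter product and use unitarity to reduce the claim to a single-cycle bound, and finally estimate the single-cycle error through a Taylor expansion of order $p+1$ that exploits the matching between the single-step unitary $\tilde U(s):=\prod_{j=M}^{1}\rme^{-\rmi s\tau_j\tilde H_{\sigma(j)}}$ and the true evolution up to order $p$.

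First I would perform the energy shift, setting $\tilde H_j=H_j(g)$ so that $(\tilde H_1+\tilde H_2)\varphi=0$. The normalizations $\sum_{j\text{ odd}}\tau_j=\sum_{j\text{ even}}\tau_j=1$ ensure that the scalar phases factored out of the $M$ exponentials combine over one Trotter cycle into exactly $\rme^{-\rmi (t/N)h}$, which accumulates to $\rme^{-\rmi th}$ over $N$ cycles and cancels the target phase. This reduces the claim to controlling $\|[\tilde U(t/N)^N-I]\varphi\|$. Using unitarity of $\tilde U(s)$ and the telescoping identity $\tilde U^N-I=\sum_{n=0}^{N-1}\tilde U^{n}(\tilde U-I)$ gives $\|[\tilde U(t/N)^N-I]\varphi\|\le N\,\|[\tilde U(t/N)-I]\varphi\|$, so it remains to establish the single-step bound $\|[\tilde U(s)-I]\varphi\|\le(\tau_{*}|s|)^{p+1}K_{p+1}(\varphi)/(p+1)!$. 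Setting $F(s)=\tilde U(s)\varphi$, the $p$-th order matching gives $F^{(k)}(0)=(-\rmi)^k(\tilde H_1+\tilde H_2)^k\varphi=0$ for $k=1,\ldots,p$ and $F(0)=\varphi$, so Taylor's theorem with integral remainder produces
\begin{equation*}
F(s)-\varphi=\frac{1}{p!}\int_0^{s}(s-u)^{p}\,\tilde U^{(p+1)}(u)\varphi\,\rmd u.
\end{equation*}

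The main obstacle is then the uniform estimate $\|\tilde U^{(p+1)}(u)\varphi\|\le\tau_{*}^{p+1}K_{p+1}(\varphi)$ for $u\in[0,|s|]$. Applying the Leibniz rule to the product of $M$ operator-valued exponentials and using the commutation $[\tilde H_{\sigma(j)},V_j(u)]=0$, with $V_j(u):=\rme^{-\rmi u\tau_j\tilde H_{\sigma(j)}}$, yields the multinomial decomposition
\begin{equation*}
\tilde U^{(p+1)}(u)=(-\rmi)^{p+1}\!\sum_{|\vec n|=p+1}\!\binom{p+1}{\vec n}\prod_j\tau_j^{n_j}\,\tilde H_{\sigma(M)}^{n_M}V_M(u)\cdots\tilde H_{\sigma(1)}^{n_1}V_1(u),
\end{equation*}
but the $V_j(u)$ cannot be passed through powers of $\tilde H_{\sigma(k)}$ for $k\ne j$. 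The strong domain hypothesis on $\varphi$ is precisely what resolves this: it lets $\tilde U(s)\varphi$ be treated as a convergent operator-valued analytic expansion in $s$, whose $(p+1)$-th Taylor coefficient reduces cleanly to $\sum_{|\vec n|=p+1}(\prod_j\tau_j^{n_j}/\prod_j n_j!)\,\tilde H_{\sigma(M)}^{n_M}\cdots\tilde H_{\sigma(1)}^{n_1}\varphi$, with the unitaries absorbed. Since the indices appearing right-to-left in each such product are nondecreasing ($n_1$ indices equal to $1$, followed by $n_2$ equal to $2$, and so on), each resulting norm is bounded by $K_{p+1}(\varphi)$; the multinomial identity $\sum_{|\vec n|=p+1}\binom{p+1}{\vec n}\prod_j|\tau_j|^{n_j}=\tau_{*}^{p+1}$ then gives the desired estimate, and combining with the integral remainder and the telescoping produces the theorem.
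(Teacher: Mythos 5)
Your setup (energy shift, telescoping to a single cycle, and the final multinomial count $\sum_{|\vec n|=p+1}\binom{p+1}{\vec n}\prod_j|\tau_j|^{n_j}=\tau_*^{p+1}$) matches the paper, but there is a genuine gap at exactly the point you flag and then wave away. You correctly observe that the Leibniz expansion of $\tilde U^{(p+1)}(u)$ leaves the unitaries $V_j(u)$ interleaved between powers of the unbounded operators, and that they cannot be commuted out. Your claimed resolution --- that the domain hypothesis lets $\tilde U(s)\varphi$ be ``treated as a convergent operator-valued analytic expansion'' whose $(p+1)$-th coefficient has ``the unitaries absorbed'' --- does not work. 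The hypothesis only places $\varphi$ in the domains of the finitely many \emph{ordered} products of length $p+1$; it does not make $\varphi$ an analytic vector, and it does not even guarantee that $F(u)=\tilde U(u)\varphi$ is $(p+1)$-times differentiable for $u\neq 0$, since that would require statements such as $\rme^{-\rmi u\tau_1 H_1}\varphi\in\mathcal{D}(H_2)$ which are not implied. Even granting differentiability, $\|A\,V(u)\,B\,\varphi\|\neq\|AB\varphi\|$ for unitary $V(u)$ sandwiched between unbounded $A,B$, so the interior unitaries cannot be dropped and your uniform bound $\|\tilde U^{(p+1)}(u)\varphi\|\le\tau_*^{p+1}K_{p+1}(\varphi)$ is unsupported. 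Only a unitary standing \emph{leftmost} can be removed by norm invariance.

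The paper's proof is built precisely to avoid this. Instead of Taylor-expanding the forward product in the overall scale $s$, it regards one cycle as evolution under a piecewise-constant time-dependent Hamiltonian and iterates integration by parts on $U(s)U^\dag(u)\varphi$ (equivalently, iterates the \emph{backward} Schr\"odinger equation). This yields a Dyson-type remainder
\begin{equation*}
(-\rmi)^{p+1}\int_0^{T_M}\rmd u\,U(T_M)U^\dag(u)\,H(u)\,S_p(u)\,\varphi,
\end{equation*}
in which the unitary is the leftmost factor and everything acting on $\varphi$ is an ordered product of shifted Hamiltonians --- exactly what $K_{p+1}(\varphi)$ controls. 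The paper explicitly remarks that iterating the forward equation instead would put a unitary between the Hamiltonians and $\varphi$, making the remainder intractable; your proposal is essentially that intractable variant. To repair your argument you would need to replace the Taylor-with-integral-remainder step by this backward/integration-by-parts representation (the paper's Lemma on the Dyson remainder), after which your multinomial bookkeeping goes through unchanged.
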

\begin{proof}
This is proved in Appendix~\ref{sec:higher_order_derivation}\@.
\end{proof}
The bound in Thm.~\ref{thm:loose_bound_higher_order} admits the desired scaling of
\begin{equation*}
	\xi_N^{(p)}(t;\varphi)=\mathcal{O}\!\left(\frac{t^{p+1}}{N^p}\right),
\end{equation*}
and is directly applicable to any $p$th-order Trotter scheme.
However, as pointed out before, it can be very loose.
To obtain tighter bounds, we follow the method outlined in Sec.~\ref{sec:methods_higher_order}\@, which is a generalization of the strategy we developed to obtain the second-order bounds in Sec.~\ref{sec:second_order}\@.
For the first-order Trotterization, this method leads to the same result as Thm.~\ref{thm:trotter_thm}\@.
Applied to the fourth-order Trotter product presented in Eq.~\eqref{eq:fourth_order}, it results in the following theorem.
\begin{thm}\label{thm:4th_order}
Let $H_1$ and $H_2$ be self-adjoint on their respective domains $\mathcal{D}(H_1)$ and $\mathcal{D}(H_2)$, and let $\varphi$ be an eigenstate of $H_1+H_2$ with eigenvalue $h$, i.e.\ $(H_1+H_2)\varphi=h\varphi$.
For $g\in\mathbb{R}$, define ${H}_1(g)=H_1-g$ and ${H}_2(g)=H_2-h+g$.
Then, the fourth-order state-dependent Trotter product formula $\mathcal{S}_N^{(4)}(t)$ in Eq.~(\ref{eq:fourth_order}) is bounded by
\begingroup
\allowdisplaybreaks
\begin{align}
	\xi_N^{(4)}(t;\varphi)\leq\frac{t^5}{N^4}\,\biggl(&
	\frac{a_0}{{17\,280}}\Vert{H}_1(g)^5 \varphi\Vert
		\nonumber\\
		&{}+ \frac{a_1}{1\,152}\Vert {H}_1(g)^2{H}_2(g){H}_1(g)^2\varphi\Vert
		\nonumber\\
		&{}+ \frac{a_1}{1\,728}\Vert {H}_1^2(g){H}_2(g)^3\varphi\Vert
		\nonumber\\
		&{}+ \frac{a_2}{1\,728} \Vert {H}_1(g){H}_2(g){H}_1(g)^3\varphi\Vert
		\nonumber\\
		&{}+ \frac{a_2}{576} \Vert {H}_1(g){H}_2(g)^2{H}_1(g)^2\varphi\Vert
		\nonumber\\
		&{}+ \frac{a_2}{864} \Vert {H}_1(g){H}_2(g)^4\varphi\Vert
		\nonumber\\
		&{}+ \frac{a_3}{6\,912} \Vert {H}_2(g){H}_1(g)^4\varphi\Vert
		\nonumber\\
		&{}+ \frac{a_4}{576} \Vert {H}_2(g){H}_1(g){H}_2(g){H}_1(g)^2\varphi\Vert
		\nonumber\\
		&{}+ \frac{a_4}{864} \Vert {H}_2(g){H}_1(g){H}_2(g)^3\varphi\Vert
		\nonumber\\
		&{}+ \frac{a_5}{48} \Vert {H}_2(g)^2{H}_1(g)^3\varphi \Vert
		\nonumber\\
		&{}+ \frac{a_6}{1\,728} \Vert {H}_2(g)^3{H}_1(g)^2\varphi \Vert
		\nonumber\\
		&{}+ \frac{a_7}{4\,320} \Vert {H}_2(g)^5\varphi\Vert
		\biggr),
		\label{eq:bound_4th_order}
\end{align}
where
\begin{align*}
	a_0&=23+19\cdot2^{1/3}+17\cdot 2^{2/3},\\
	a_1&=4+3\cdot2^{1/3}+2\cdot 2^{2/3},\\
	a_2&=14+11\cdot2^{1/3}+9\cdot 2^{2/3},\\
	a_3&=68+55\cdot2^{1/3}+44\cdot 2^{2/3},\\
	a_4&=18+14\cdot2^{1/3}+11\cdot 2^{2/3},\\
	a_5&=\frac{1+2^{1/3}}{(2-2^{1/3})^5},\\
	a_6&= 226+180\cdot2^{1/3}+143\cdot 2^{2/3},\\
	a_7&= 330+263\cdot2^{1/3}+209\cdot 2^{2/3},
\end{align*}
\endgroup
for all $t,g\in\mathbb{R}$, if $\varphi$ is in the domains of all the involved operator products in Eq.~(\ref{eq:bound_4th_order})\@.
\end{thm}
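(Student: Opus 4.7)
The plan is to adapt the strategy of Thm.~\ref{thm:2nd_order} to the seven-factor fourth-order scheme~\eqref{eq:fourth_order} and to delegate the resulting combinatorial bookkeeping to the symbolic procedure of Appendix~\ref{sec:Mathematica}. First I would shift the Hamiltonians to absorb the eigenvalue of $\varphi$: setting $\tilde H_1 = H_1 - g$ and $\tilde H_2 = H_2 - h + g$ gives $(\tilde H_1 + \tilde H_2)\varphi = 0$. Because the coefficients of $H_1$ in~\eqref{eq:fourth_order}, namely $\tau/2$, $(1-\tau)/2$, $(1-\tau)/2$, $\tau/2$, sum to $1$ per cycle, and those of $H_2$, namely $\tau$, $1-2\tau$, $\tau$, also sum to $1$ per cycle, replacing $H_j$ by $\tilde H_j$ throughout $\mathcal{S}_N^{(4)}(t)$ multiplies it by the global scalar phase $\rme^{\rmi h t}$. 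Consequently $[\mathcal{S}_N^{(4)}(t) - \rme^{-\rmi ht}]\varphi = \rme^{-\rmi ht}[\tilde{\mathcal{S}}_N^{(4)}(t) - I]\varphi$, and it suffices to bound $\|[\tilde{\mathcal{S}}_N^{(4)}(t) - I]\varphi\|$. Writing $V = \tilde{\mathcal{S}}_1^{(4)}(t/N)$, the telescoping identity $V^N - I = \sum_{k=0}^{N-1} V^k (V-I)$ combined with unitarity of $V$ reduces this to the single-cycle estimate $N\|(V - I)\varphi\|$, so I need the single-cycle error to be $\mathcal{O}((t/N)^5)$.

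Second, I would expand each of the seven shifted exponentials in $V$ using Taylor's theorem with integral remainder, applied directly to $\varphi$, so that each surviving monomial acts on $\varphi$ rather than as an abstract operator. This is where the domain hypothesis that $\varphi$ lies in the domain of every operator product appearing in~\eqref{eq:bound_4th_order} enters: it ensures that each term in the expansion is a well-defined vector even though $H_1$ and $H_2$ may be unbounded. The order-zero contribution $\varphi$ cancels $-I\varphi$. All contributions of cumulative order $1$, $2$, $3$, and $4$ in $t/N$ cancel identically when acting on $\varphi$, because $\mathcal{S}_N^{(4)}(t)$ is by construction a fourth-order scheme: the Suzuki choice $\tau = 1/(2 - 2^{1/3})$ together with the left--right palindromic symmetry of~\eqref{eq:fourth_order} is precisely what is needed to make the first four derivatives of $\mathcal{S}_1^{(4)}(s)\psi - \rme^{-\rmi s(H_1+H_2)}\psi$ vanish at $s=0$ on a sufficient dense domain.

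Third, what remains is an iterated integral of remainder terms, each of the form of a product of exactly five shifted Hamiltonians $\tilde H_{\sigma(j_5)}\cdots\tilde H_{\sigma(j_1)}$ applied to $\varphi$, sandwiched between partial-time unitary exponentials. Discarding those unitaries using $\|U\psi\| = \|\psi\|$ and controlling each integral by the supremum of its integrand over the relevant simplex times the simplex volume extracts the $(t/N)^5$ factor. Collecting like monomials according to the parity pattern $\sigma$ and the Taylor order assigned to each of the seven factors then yields precisely the twelve distinct ordered operator products displayed in~\eqref{eq:bound_4th_order}. The coefficients $a_0,\dots,a_7$ emerge from explicitly evaluating the associated $\tau$-weighted iterated integrals at $\tau = 1/(2-2^{1/3})$ and simplifying with $2^{2/3} = 2\cdot 2^{-1/3}$.

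The main obstacle is the sheer combinatorial bulk: seven factors, each contributing a Taylor polynomial plus integral remainder, generate a large number of intermediate monomials, and the order-one through order-four cancellations must be verified carefully. Performing this by hand is impractical and error-prone. I would therefore run the symbolic procedure of Appendix~\ref{sec:Mathematica}, which mechanizes exactly these steps: it tracks every monomial, certifies the lower-order cancellations, and returns the twelve operator-ordering types with the exact coefficients $a_0,\dots,a_7$ stated in the theorem.
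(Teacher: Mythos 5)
Your proposal is correct and matches the paper's proof in substance: shift the Hamiltonians so that $\varphi$ has eigenvalue zero, telescope to a single cycle, use the fourth-order conditions together with $(H_1+H_2)\varphi=0$ to cancel all contributions through order $(t/N)^4$, bound the order-five remainder (whose unitaries sit harmlessly to the left of the five-fold operator products, so they drop out of the norm), and delegate the bookkeeping to the Appendix~\ref{sec:Mathematica} script. The only difference is organizational — you Taylor-expand each of the seven exponentials individually, while the paper runs a single Dyson expansion of the piecewise-constant cycle via iterated integration by parts (Lemma~\ref{lem:error_rep_higher_order} and Eq.~\eqref{eqn:error_estimate_p}) — but these produce identical monomials and coefficients.
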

\begin{proof}
	This is computed by using the Mathematica script provided in Appendix~\ref{sec:Mathematica}\@.
	It follows exactly the method outlined in Sec.~\ref{sec:methods_higher_order} of the Methods section.
	For the numerical values of the coefficients, see Appendix~\ref{sec:Mathematica}\@.
\end{proof}

We also computed error bounds for a $6$th-order product formula numerically with our Mathematica script.
See Appendix~\ref{sec:Mathematica}\@.

\begin{figure}
\centering
	\begin{tikzpicture}[mark size={1.5}, scale=1]
			\begin{axis}[
			xmode=log,
			ymode=log,
			xlabel={Trotter steps $N$},
			ylabel=Fourth-order Trotter error $\xi_N^{(4)}(t;\Psi_{100})$,
			label style={font=\footnotesize},
			tick label style={font=\footnotesize},
			x post scale=1,
			y post scale=1,
			legend pos=south west,
			legend columns=1,
			legend cell align={left},
			xmin=2, xmax=2500,
			]
			\addplot[color=1, mark=*, only marks] table[x=N, y=error, col sep=comma] {Trotter_Error_4th_order_groundstate_100.csv};
			\addplot[color=2, thick, mark=*, only marks] table[x=N, y=error, col sep=comma] {Trotter_Error_4th_order_groundstate_200.csv};
			\addplot[color=3, mark=*, only marks] table[x=N, y=error, col sep=comma] {Trotter_Error_4th_order_groundstate_300.csv};
			\addplot[color=4, mark=*, only marks] table[x=N, y=error, col sep=comma] {Trotter_Error_4th_order_groundstate_400.csv};
			\addplot[color=5, mark=*, only marks] table[x=N, y=error, col sep=comma]{Trotter_Error_4th_order_groundstate_800.csv};
			\addplot[color=0, ultra thick, domain=2:2500]{2*x^(-1/4)};
			\addplot[color=0, ultra thick, domain=2:2500, densely dashed]{2*x^(-4)};
			\legend{
			\footnotesize $100$ modes,
			\footnotesize $200$ modes,
			\footnotesize $300$ modes,
			\footnotesize $400$ modes,
			\footnotesize $800$ modes,
			\footnotesize $N^{-1/4}$,
			\footnotesize $N^{-4}$,
			};
			\end{axis}
		\end{tikzpicture}
		\caption{\label{fig:fourth_order}The error of the fourth-order Trotter product formula at time $t=1$ in the Hartree atomic units $\hbar=m_\mathrm{e}=a_0=1$, as a function of the Trotter steps for the ground state $\Psi_{100}$ of the hydrogen atom.
		See Eq.~\eqref{eq:higher_order_error}.
		The radial cutoff in the simulations is $R=30$.
		We show five different levels of discretization characterized by the number of Bessel modes.
		For reference, we show the slopes for $N^{-4}$ (grey dashed line) and $N^{-1/4}$ (grey solid line).
		We see that the Trotter error for any finite discretization initially starts as $N^{-1/4}$, and eventually goes as $N^{-4}$, but $N$ at which this transition happens becomes larger with increasing number of modes.
		This provides an evidence that the scaling in the infinite-mode limit is indeed slower than $N^{-4}$.
		In particular, we expect no improvement in the asymptotic scaling from the fourth-order Trotterization over the first- or second-order Trotterization, see Figs.~\ref{figNumericsGroundState} and~\ref{fig:second_order}\@.
		}
\end{figure}
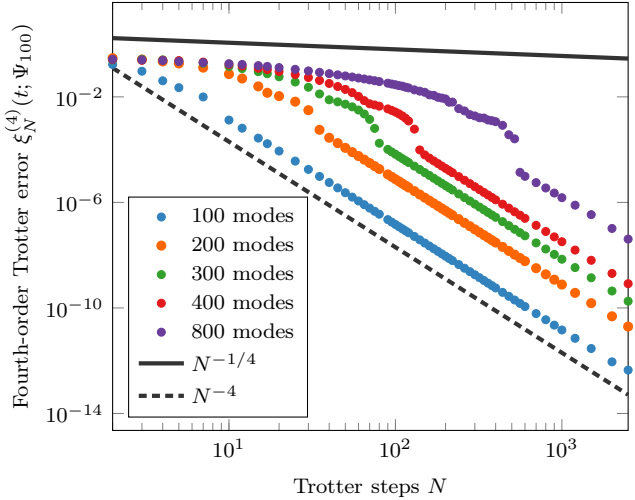

As  expected, the domain conditions in Thm.~\ref{thm:4th_order} on the input state $\varphi$ for the fourth-order Trotter product are even stronger than the conditions in Thm.~\ref{thm:2nd_order} for the second-order Trotterization.
Again, the ground state of the hydrogen atom does not satisfy these conditions and a slower Trotter scaling is expected.
We numerically investigate the fourth-order Trotterization in Fig.~\ref{fig:fourth_order}\@.
As before, we observe an error scaling of $\mathcal{O}(N^{-1/4})$ in the full system limit, which coincides with the first- and second-order Trotterizations.
In summary, we find
\begin{equation*}
	\xi^{(p)}_N(t;\Psi_{100})
	=\mathcal{O}(N^{-1/4}), \quad p=1,2,4.
\end{equation*}
This is a strong indication that the entire higher-order Trotter hierarchy breaks down for the ground state $\Psi_{100}$ of the hydrogen atom.
Indeed, we show in Appendix~\ref{sec:higher_order_derivation} how to obtain error bounds that admit fractional scalings for any $p$th-order product formulas.
This is a generalization of Thm.~\ref{thm:first_order_alpha} and Thm.~\ref{thm:second_order_alpha}\@.
We summarize the implications of our results as follows.
Assume that we are given a $p$th-order product formula $\mathcal{S}_N^{(p)}(t)$ and an input state $\varphi$, which is an eigenstate of $H_1+H_2$.
On one hand, if $\varphi$ satisfies the domain conditions of the product formula $\mathcal{S}_N^{(q)}(t)$ with $q\leq p$, then the Trotter error scales at least as $\xi_N^{(p)}(t;\varphi)=\mathcal{O}(N^{-q})$.
On the other hand, if $\varphi$ does \emph{not} satisfy the domain conditions of the product formula $\mathcal{S}_N^{(r)}(t)$ with $r\leq p$, the Trotter error $\xi_N^{(p)}(t;\varphi)$ scales \emph{slower} than $\mathcal{O}(N^{-r})$.
In total,
\begin{equation*}
\xi_N^{(p)}(t;\varphi) = \mathcal{O}(N^{-\delta}),\quad q\leq\delta<r.
\end{equation*}
Appendix~\ref{sec:higher_order_derivation} provides the proof for this statement including explicit error bounds for $q=1,2$, but the described procedure can be easily iterated to obtain error bounds for $q\ge3$.
Furthermore, the reasoning from Cor.~\ref{cor:superposition_states} to treat superpositions of eigenstates can be applied to $p$th-order product formulas as well.

Since the ground state $\Psi_{100}$ of the hydrogen atom does not even satisfy the domain conditions for the first-order Trotterization ($q=0$), any higher-order method only admits error bounds with $N^{-1/4}$ scaling.
From our numerical investigations, it seems like these error bounds reflect the true scaling.
See Fig.~\ref{fig:fourth_order}\@.
That is, the Trotter error for the ground state $\Psi_{100}$ of the hydrogen atom always scales the same and slower than $\mathcal{O}(N^{-1})$ irrespectively of the Trotter order $p$.
This makes the first-order Trotterization the most favorable one for quantum chemistry simulations, as it involves the smallest number of unitaries per Trotter cycle.
A comparison of the first-, second-, and fourth-order Trotter errors as functions of the total number of unitaries to implement can be found in Fig.~\ref{fig:order_comparison}\@.
Since the evolutions of the first- and second-order Trotterizations only differ by two boundary unitaries, they admit similar performances.
The fourth-order product formula has a worse error-to-resource ratio though.
See the white region of Fig.~\ref{fig:order_comparison}\@.
\begin{figure}
	\centering
	\begin{tikzpicture}[mark size={1.5}, scale=1]
		\begin{axis}[
			xmode=log,
			ymode=log,
			xlabel={Total number of unitaries},
			ylabel=Ground-state Trotter error $\xi_N^{(p)}(t;\Psi_{100})$,
			label style={font=\footnotesize},
			tick label style={font=\footnotesize},
			x post scale=1,
			y post scale=1,
			legend pos=south west,
			legend columns=1,
			legend cell align={left},
			xmin=4, xmax=18001,
			enlarge y limits=false,
			]
			\addplot[color=1, mark=*, only marks] table[x=N, y=error, col sep=comma] {Order_Comparison_1st_order_groundstate_800.csv};
			\addplot[color=2, thick, mark=*, only marks] table[x=N, y=error, col sep=comma] {Order_Comparison_2nd_order_groundstate_800.csv};
			\addplot[color=3, mark=*, only marks] table[x=N, y=error, col sep=comma] {Order_Comparison_4th_order_groundstate_800.csv};
			\addplot[ultra thick, samples=50, smooth, color=0, dashed] coordinates {(881,0.00000001) (881,1)};
			\addplot[fill=lightgray,draw=none] coordinates {(881,0.00000001) (18001,0.00000001) (18001,1) (881,1)} \closedcycle;
			\legend{
				\footnotesize First-order ($p=1$),
				\footnotesize Second-order ($p=2$),
				\footnotesize Fourth-order ($p=4$),
			};
		\end{axis}
	\end{tikzpicture}
	\caption{\label{fig:order_comparison}Comparison of the Trotter errors for different orders $p$ of Trotterization for the ground state $\Psi_{100}$ of the hydrogen atom.
		For the simulation, we fixed the total evolution time to $t=1$ and worked in the Hartree atomic units $\hbar=m_\mathrm{e}=a_0=1$.
		Furthermore, we truncated the potential and kinetic energies at $800$ Bessel modes and chose a radial cutoff of $R=30$.
		We then compare the Trotter errors~\eqref{eq:higher_order_error} for the first-order ($p=1$, blue dots), second-order ($p=2$, orange dots), and fourth-order ($p=4$, green dots) Trotterizations as functions of the total number of unitaries in the Trotter product.
		If boundary unitaries of two adjacent Trotter cycles are generated by the same Hamiltonian, they are combined and only counted as one unitary.
		In the regime where all Trotter errors scale as $N^{-1/4}$ (white region), the first- and second-order Trotterizations perform almost the same and are both better than the fourth-order product formula.
		Once the total number of unitaries exceeds a threshold (dashed line), the Trotter errors start to scale faster.
		In this region (light gray), product formulas with higher $p$ will eventually outperform product formulas with lower $p$.
		However, this behavior does not reflect the true error scaling, due to finite-size effects.
	}
\end{figure}
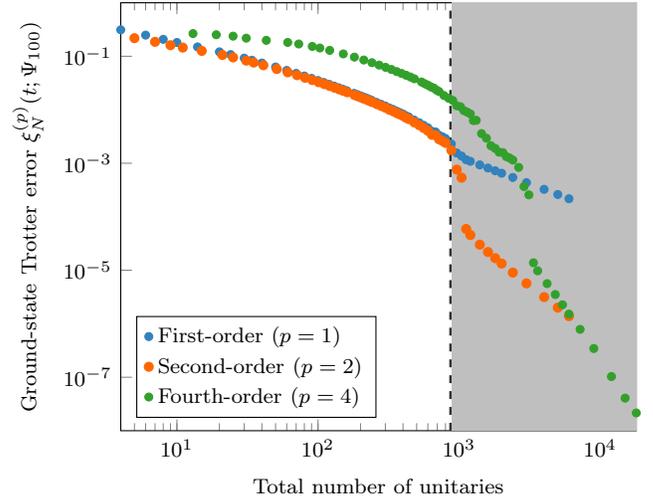

\section{Methods}
\label{sec:methods}
In this section, we describe the methods we used to obtain our results.
This includes both details on the numerical simulations and the derivations of the analytical error bounds.
The technical details and lengthy calculations can be found in the appendices.

\subsection{Numerical methods for the grid-based hydrogen simulations}
\label{sec:methods_numerical}
To perform the grid-based simulation described in Sec.~\ref{sec:bound}, we numerically solve the Schr{\"{o}}dinger equation
\begin{equation}
\rmi  \frac{\rmd \psi({\mathbf{r}},t)} {\rmd t} = H(t) \psi({\mathbf{r}},t),  \quad \psi({\mathbf{r}}, 0) = \Psi_{n\ell m}(\mathbf{r})
\label{eqSEhydrogenNaturalUnits}
\end{equation}
over some time $t$, with $H(t)$ being alternately taken as $H_1$ and $H_2$ for the Trotter step evolution, or $H_1 + H_2$ for the actual evolution.
The numerical integration of Eq.~\eqref{eqSEhydrogenNaturalUnits} was carried out using XMDS2~\cite{Dennis2013}, an open-source package for solving multidimensional partial differential equations.

The time-dependent wave functions are given by $\psi_{\text{Trot}}({\mathbf{r}}, t) = U_N(t) \psi({\mathbf{r}}, 0) $ and $\psi({\mathbf{r}}, t) = V(t) \psi({\mathbf{r}}, 0) $, with the initial state chosen to be an eigenfunction $\psi({\mathbf{r}}, 0) = \Psi_{n\ell m}(\mathbf{r})$ of the hydrogen atom.
Here, $U_N(t)$ is the dynamics given by the Trotter product and $V(t)$ is the target dynamics under $H_1+H_2$.
In terms of these wave functions, the Trotter error~\eqref{eq:definition_standard_error} at time $t$ becomes
\begin{equation}
\xi_N(t;\Psi_{n\ell m}) = \left( \int_{\mathbb{R}^3}  |\psi_{\text{Trot}}({\mathbf{r}}, t) - \psi({\mathbf{r}},t)|^2 \, \rmd^3{\mathbf{r}} \right)^{1/2}. 
\label{eq:schroedinger_integral}
\end{equation}

As is well known, Eq.~\eqref{eqSEhydrogenNaturalUnits} is separable for the hydrogen atom, and since all the terms in the Hamiltonian are spherically symmetric, the evolution conserves the angular momentum, meaning only the differential equation for the radial coordinate is relevant. 
In spherical coordinates, the Laplacian (corresponding to $H_1$) acting on an eigenstate $\Psi_{n\ell m}$ is given by
\begin{equation*}
	\Delta = \frac{\rmd^2}{\rmd r^2} + \frac{2}{r} \frac{\rmd}{\rmd r} - \frac{\ell(\ell+1)}{r^2}.
\end{equation*} 
This form of the Laplacian allows the use of the spectral methods to numerically solve Eq.~\eqref{eqSEhydrogenNaturalUnits}, with the spherical Bessel functions as the spectral basis.
The number of Bessel functions used corresponds to the number of grid points in position space as well as the number of modes in the spectral space, resulting in a discretized system suitable for numerical integration.
The utility of the spherical Bessel modes is that, in the spectral basis, the action of the Laplacian is purely multiplicative.

The simulations were carried out on a radial domain $[0, R]$, where the radial cutoff $R$ was chosen to ensure that the wave function remained within the interval in both position and spectral space.
The integration of Eq.~\eqref{eq:schroedinger_integral} itself is performed exactly through quadrature integration and uses an adaptive fourth-fifth order Runge-Kutta integrator.
By taking the zeroes of the spherical Bessel functions as mesh points, this exact integration method avoids the occurrence of a spatial error.
The numerical stability was tested by ensuring that the norm was stable, the wave function remained far from the boundary in both position and spectral bases, and the results were robust to changing the relative error tolerance of the stepper.

\subsection{Bounds on the first-order Trotter convergence for the hydrogen atom}
\label{sec:trotter_thm}
To obtain the scalings presented in Table~\ref{tab:order_results} in Sec.~\ref{sec:bound}, we apply Thm.~\ref{thm:trotter_thm}, which poses conditions on the input eigenstate.
Since not all the eigenstates of the hydrogen atom satisfy these conditions, some particular states need a special treatment.
An example of such a state is the ground state of the hydrogen atom.

Before applying Thm.~\ref{thm:trotter_thm} to the hydrogen atom, we first notice that $H_\mathrm{hydrogen}$ is self-adjoint on $\mathcal{D}(H_{1})$ and $\mathcal{D}(H_{1})\subset\mathcal{D}(H_{2})$~\cite[Chap.~18.2]{Hall2013}.
By Trotter's theorem~\cite[Thm.~1.2]{Simon2005}, we thus know that the Trotter product formula converges, i.e.\ for all $\psi\in\mathcal{H}$,
\begin{equation*}
\lim_{N\rightarrow\infty}\left[\left(\rme^{-\rmi \frac{t}{N}H_2}\rme^{-\rmi \frac{t}{N}H_1}\right)^{N}-\rme^{-\rmi t\left(H_{1}+H_{2}\right)}\right]\psi=0.
\end{equation*}
Theorem~\ref{thm:trotter_thm} does not necessarily demand Trotter convergence on all input states.
It only requires an eigenstate of interest $\varphi$ to be in the domains of $H_1^2$ and $H_2^2$.
In the case of the eigenfunctions $\Psi_{n\ell m}$ of the hydrogen atom, we find this condition to be satisfied by all states with $\ell\geq2$.
However, for the kinetic energy $H_1$, we have that all the states with the quantum number $\ell=0,1$ of the orbital angular momentum do not satisfy this condition, i.e.\  $\Psi_{n00},\Psi_{n1m}\not\in\mathcal{D}(H_1^2)$.
Furthermore, it is broken by the states with $\ell=0$ in the case of the potential energy, $\Psi_{n00}\not\in\mathcal{D}(H_2^2)$.
We show in Appendix~\ref{sec:hydrogen_bounds} how to apply Thm.~\ref{thm:trotter_thm} to the Hamiltonian of the hydrogen atom and also derive bounds for $\Psi_{n00}$ and $\Psi_{n1m}$ separately.
This is done by applying Thm.~\ref{thm:first_order_alpha}, which only requires the state to be in the domains of $H_1$ and $H_2$ rather than those of $H_1^2$ and $H_2^2$, and is therefore suitable for treating $\Psi_{n00}$ and $\Psi_{n1m}$.

Succinctly, we obtain for the $\ell=0$ eigenstates (including the ground state $n=1$),
\begin{align}
	\xi_N(t;\Psi_{n00})
	\leq{}&
	\frac{\tilde{t}^{\,5/4}}{N^{1/4}}
	\frac{4}{\sqrt{5\sqrt{\pi}\,n^3}}
	+
	\frac{\tilde{t}^{\,3/2}}{N^{1/2}}
	\sqrt{
	\frac{4\pi}{3n^3}
	}\nonumber\\
	&{}+
	\frac{\tilde{t}^{\,2}}{2N}
	\left(
	\sqrt{\frac{2}{n^7}}
	+
	\frac{1}{4n^4}
	\right),\label{eq:bound_R_n0}
\end{align}
for the $\ell=1$ states,
\begin{align}
	\xi_N(t;\Psi_{n1m})
	\leq{}&
	\frac{\tilde{t}^{\,7/4}}{N^{3/4}}
	\sqrt{
	\frac{64(n^2-1)}{189\sqrt{\pi}\,n^{5}}
	}\nonumber\\
	&{}+
	\frac{\tilde{t}^{\,2}}{2N}
	\left(
	\sqrt{
	\frac{6-4/n^2}{15n^3}
	}
	+
	\sqrt{\frac{2}{3n^7}}
	+
	\frac{1}{4n^4}
	\right),
	\label{eq:bound_R_n1}
\end{align}
and finally for the $\ell\geq2$ states,
\begin{align}
	&\xi_N(t;\Psi_{n\ell m})\nonumber\\
	&\quad
	\leq
	\frac{\tilde{t}^{\,2}}{2N}
	\,\Biggl(
	\frac{1}{8}
	\sqrt{
	\frac{(n+\ell)!\beta(\ell-3/2,\ell+11/2)}{n^7(n-\ell-1)![\Gamma(\ell+3/2)]^2}
	}
	\nonumber\\
	&\qquad\qquad\quad
	{}+
	\sqrt{
	\frac{3-\ell(\ell+1)/n^2}{2n^3\ell(\ell+1)(\ell+1/2)(\ell+3/2)(\ell-1/2)}
	}
	\nonumber\\
	&\qquad\qquad\quad
	{}+
	\frac{1}{\sqrt{n^{7}(\ell+1/2)}}
	+
	\frac{1}{4n^4}
	\Biggr),
	\label{eq:bound_R_nl}
\end{align}
where 
\begin{equation}
\tilde{t}=\frac{t}{t_0}, \quad \text{with} \quad t_0=\frac{m_\mathrm{e}a_0^2}{\hbar}	
\end{equation}
is a reduced (dimensionless) time,
$\beta(x,y)=\Gamma(x)\Gamma(y)/\Gamma(x+y)$ is the Euler beta function and $\Gamma(x)$ is the gamma function.

\subsection{Method for the higher-order Trotter bounds}
\label{sec:methods_higher_order}
Here, we present our method to obtain a bound on the state-dependent Trotter error $\xi_N^{(p)}(t;\varphi)$ defined in Eq.~(\ref{eq:higher_order_error}) for an arbitrary order $p$.
The technical details and the proofs which form the basis of this section can be found in Appendix~\ref{sec:higher_order_derivation}\@.
The idea of the method is to iterate the integration-by-part strategy developed in Ref.~\cite{Burgarth2022} to bound evolutions.

We first focus on a single Trotter cycle 
\begin{equation}
	\mathcal{S}_1^{(p)}(t/N)=\rme^{-\rmi\frac{t}{N}\tau_MH_{\sigma(M)}}\cdots \rme^{-\rmi\frac{t}{N}\tau_2H_{\sigma(2)}}\rme^{-\rmi\frac{t}{N}\tau_1H_{\sigma(1)}},
\label{eq:single_Trotter_cycle}
\end{equation}
and regard the sequence of the $M$ exponentials $\rme^{-\rmi\frac{t}{N}\tau_jH_{\sigma(j)}}$ ($j=1,\ldots,M$) as the evolution under a piecewise-constant time-dependent Hamiltonian $H^{(p)}(s)$, defined by
\begin{equation}
H^{(p)}(s)
=\tau_j H_{\sigma(j)}\ \ \mathrm{for}\ \ s\in[T_{j-1},T_j),
\label{eq:piecewise-constant}
\end{equation}
where
$T_j = j t/N$ for $j=1,\ldots,M$.
Without loss of generality, we shift the Hamiltonians $H_1$ and $H_2$ such that $(H_1+H_2)\varphi=0$, and the target evolution is the identity $I$ on the eigenstate $\varphi$. 
Then, in Appendix~\ref{sec:higher_order_derivation}, we show that the difference $\mathcal{S}_1^{(p)}(t/N)-I$
can be written, via iterating the integration by parts $q$ times, as
\begin{align}
&\mathcal{S}_1^{(p)}(t/N)-I
\nonumber\\
&\quad
=\sum_{k=1}^q(-\rmi)^kS_k(T_M)
\nonumber\\
&\qquad{}
+(-\rmi)^{q+1}\int_0^{T_M}\rmd s\,U^{(p)}(T_M,s)H^{(p)}(s)S_q(s),
\label{eq:iterated_formula}
\end{align}
where
\begin{gather}
	S_k(s)=\int_0^s\rmd s_1\cdots \int_0^{s_{k-1}}\rmd s_k\,
	 H^{(p)}(s_1)\cdots H^{(p)}(s_k),
	 \label{eq:action}
\\
U^{(p)}(t,s)=\T\exp\!\left(
-\rmi\int_s^t\rmd u\,H^{(p)}(u)
\right),\nonumber
\end{gather}
and $\T$ denotes time-ordering.
Note that $\mathcal{S}_1^{(p)}(t/N)=U^{(p)}(T_M,0)$, and Eq.~(\ref{eq:iterated_formula}) is essentially the Dyson expansion of $U^{(p)}(T_M,0)$.

A proper $p$th-order Trotter product formula should satisfy
\begin{equation}
S_k(T_M)=\frac{1}{k!}\frac{t^k}{N^k}(H_1+H_2)^k,\quad k=1,\ldots,p.
\label{eq:vanishing_actions}
\end{equation}
In other words, we can find a $p$th-order Trotter product formula, i.e.~a set of times $\{\tau_j\}$, by imposing the conditions in Eq.~(\ref{eq:vanishing_actions}).
We hence have 
\begin{equation*}
S_k(T_M)\varphi=0,\quad k=1,\ldots,p,
\end{equation*}
and the Trotter error $\xi_1^{(p)}(t/N;\varphi)=\|[\mathcal{S}_1^{(p)}(t/N)-I]\varphi\|$ on the input state $\varphi$ can be estimated by the last remainder term of the Dyson expansion~(\ref{eq:iterated_formula}), with some $q\le p$.
If $\varphi\in\mathcal{D}\bm{(}H^{(p)}(s)S_p(s)\bm{)}$ for all $s\in[0,T_M)$ and the remainder operator with $q=p$ can be applied to the input state $\varphi$, the error of the single Trotter cycle is bounded by
\begin{equation}
\xi_1^{(p)}(t/N;\varphi)
\le\int_0^{T_M}\rmd s\,\Vert H^{(p)}(s) S_p(s) \varphi\Vert,
\label{eq:higher_order_bound_starting_point_single}
\end{equation}
and the overall error $\xi_N^{(p)}(t;\varphi)$ of the $p$th-order Trotter product $\mathcal{S}_N^{(p)}(t)$ is bounded by
\begin{equation}
\xi_N^{(p)}(t;\varphi)\le N\xi_1^{(p)}(t/N;\varphi),
\label{eq:higher_order_bound_starting_point}
\end{equation}
which follows from a standard telescoping sum identity, see e.g.\ Ref.~\cite[Lemma~22]{Hahn2022}\@.

In order to bound Eq.~\eqref{eq:higher_order_bound_starting_point_single}, we need to evaluate $S_p(s)$.
This can be done by explicitly calculating the integrals in Eq.~\eqref{eq:action}.
Recall that the single Trotter cycle $\mathcal{S}_1^{(p)}(t/N)$ consists of $M$ time slots, see Eq.~\eqref{eq:single_Trotter_cycle}.
In the $j$th time slot, the Hamiltonian is constant, $H^{(p)}(s)=\tau_j H_{\sigma(j)}$, for time $t/N$, as defined in Eq.~\eqref{eq:piecewise-constant}.
For $s\in[T_{j-1},T_j)$ in the $j$th time slot, we get
\begin{align}
S_k(s)
={}&
\mathop{\sum_{\bm{\beta}\in\mathbb{N}^j}}_{|\bm{\beta}|=k} \left(
\frac{t}{N}
\right)^{k-\beta_j}
\frac{1}{\bm{\beta}!}
\left[\left(s-\frac{(j-1)t}{N}\right) \tau_j H_{\sigma(j)}\right]^{\beta_j}
\nonumber\\[-2truemm]
&\qquad\qquad\quad\ \
{}\times
(\tau_{j-1} H_{\sigma(j-1)})^{\beta_j}\cdots 
(\tau_1 H_{\sigma(1)})^{\beta_1},
\label{eq:kth_action_explicit}
\end{align}
where $\bm{\beta}=(\beta_1,\dots,\beta_j)$ is a multi-index of nonnegative integers, $|\bm{\beta}|=\beta_1+\cdots+\beta_j$, and $\bm{\beta}!=\beta_1!\cdots\beta_j!$.
At the end of the Trotter cycle $s=T_M=Mt/N$, it reads
\begin{equation}
S_k(T_M)
=
\left(
\frac{t}{N}
\right)^k
\mathop{\sum_{\bm{\beta}\in\mathbb{N}^M}}_{|\bm{\beta}|=k}
\frac{1}{\bm{\beta}!}
(\tau_M H_{\sigma(M)})^{\beta_M}
\cdots
(\tau_1H_1)^{\beta_1}.
\label{eq:kth_action_explicit_TM}
\end{equation}
Now, as we have an explicit expression of $S_p(s)$, we can estimate a bound on Eq.~\eqref{eq:higher_order_bound_starting_point}.
By dividing the total time interval $[0, Mt/N)$ into $M$ time slots and by using the triangle inequality, we get
\begin{equation}
\xi_N^{(p)}(t;\varphi)\leq N \sum_{j=1}^M \int_{(j-1)t/N}^{j t/N}\rmd s\,\Vert \tau_j H_{\sigma(j)} S_p(s)\varphi\Vert.
\label{eq:bound_higher_order_integral}
\end{equation}
Equation~\eqref{eq:bound_higher_order_integral} gives a bound on a higher-order Trotterization.
Such a bound can be further simplified by using identities due to $(H_1+H_2)\varphi=0$, i.e.
\begin{equation*}
H_1\varphi=-H_2\varphi.
\end{equation*}

We provide a Mathematica script that performs this computation for given switching times $\{\tau_j\}$ in Appendix~\ref{sec:Mathematica}\@.
Furthermore, we give an example for $p=2$ in Appendix~\ref{sec:higher_order_derivation}, where we perform the calculation analytically.
The last step is to shift back the energy of the target Hamiltonian so that indeed $(H_1+H_2)\varphi=h\varphi$.
This is trivial if the input state $\varphi$ satisfies the corresponding domain conditions.
It is done by distributing the energy $h$ to the individual Hamiltonians $H_1$ and $H_2$.
Since it is arbitrary in which way $h$ is distributed, we can replace $H_1\rightarrow {H}_1(g)=H_1-g$ and $H_2\rightarrow {H}_2(g)=H_2-h+g$ with any $g\in\mathbb{R}$.
If $\varphi$ does not satisfy the domain conditions and admits a fractional scaling $\mathcal{O}(N^{-\delta})$, the energy shift will add additional terms that scale faster than $N^{-\delta}$.
They are computed explicitly in Eq.~\eqref{eq:first_order_with shift} of Appendix~\ref{sec:hydrogen_bounds}\@.
For example, for states $\varphi$ that yield scalings slower than $\mathcal{O}(N^{-1})$ in the case of the second-order Trotterization, combining Eq.~\eqref{eq:fractional_second_1} with Eq.~\eqref{eq:first_order_with shift} for $H_1$ gives
\begin{equation*}
	\xi_N^{(2)}(t;\varphi)\leq \frac{1}{2}b_N(t;\varphi),
\end{equation*}
where $b_N(t;\varphi)$ is our bound for such states on the first-order Trotterization.
That is, $\xi_N(t;\varphi)\leq b_N(t;\varphi)$, see Eq.~\eqref{eq:first_order_fractional_scaling}.

\section{Generalizations and outlook}\label{sec:outlook}
While the methods derived in Sec.~\ref{sec:methods_higher_order} are used to compute bounds for higher-order Trotterization, they are more general than that.
For this reason, they might have further applications in various other fields.
In particular, Eq.~\eqref{eq:kth_action_explicit} provides a closed form for the Dyson expansion of any unitary $U$ generated by a \emph{piecewise-constant} time-dependent Hamiltonian.
Since the Dyson expansion arises by repeatedly reinserting a differential equation (the Schr{\"o}dinger equation) into itself, our method might be useful for solving general (piecewise-constant) linear differential equations.
Of course, in the case of non-self-adjoint generators, $U$ will not be unitary and some adjustments have to be made.
However, for applications in quantum mechanics with a piecewise-constant Hamiltonian, Eq.~\eqref{eq:kth_action_explicit} can be used directly.
For this, one only has to omit the map $\sigma$, which accounts for the periodicity of the Trotter Hamiltonian $H^{(p)}(s)$.
A potential application could be quantum control theory when considering time-independent drifts and piecewise-constant controls.

In the same manner, our methods from Sec.~\ref{sec:methods_higher_order} can be used to compute state-dependent bounds on Trotter errors for more than two operators.
For example, for a first-order Trotter product ($\tau_j=1$ for all $j=0,\ldots,L-1$) with a target Hamiltonian $H=\sum_{j=0}^{L-1} H_j$, we can compute the integral action at the $j$th time step as
\begin{equation}
	S_1(s) = \left(s-\frac{jt}{N}\right) H_{j\ \text{mod}\ L} + \frac{t}{N}\sum_{i=0}^{j-1} H_{i\ \text{mod}\ L},
	\label{eq:first_order_more_operators}
\end{equation}
for $s\in[T_{j-1},T_j)$.
Similar considerations can be carried out for higher-order schemes with $L$ summands in the target Hamiltonian as well.

For quantum simulations, there is a crucial consequence from the slower Trotter error scaling we observe for the ground state of the hydrogen atom.
One advantage of a quantum approach is the ability to use the large Hilbert space to store a big, hence very accurate, approximation of the Hamiltonian.
However, for states with low angular momenta, the higher the truncation dimension is, the more Trotter steps are necessary to arrive at the desired $N^{-1}$ scaling regime.
This shows that there is an intricate interplay between the error due to truncation and the Trotter error.
See also Ref.~\cite{Burgarth2023}.
It would be interesting to see whether there is an optimal choice of parameters for the truncation dimension and the number of Trotter steps so that the total simulation error is minimal.
Some ideas from bosonic systems might potentially help tackle this problem~\cite{Tong2022, Peng2023}.

One might suspect that the slower scaling also occurs for other atoms and molecules than the hydrogen atom.
This is suggestive since the scaling arises from the unbounded nature of the involved Hamiltonians, such as the $1/r$ singularity of the Coulomb potential.
For this reason, one would expect that states with a large electron density close to Coulomb singularities admit slower Trotter scalings.
More specifically, for the eigenstates of many-electron systems, one usually makes an ansatz through a linear combination of atomic orbitals inspired by the eigenfunctions of the hydrogen atom~\cite{Huheey1973, Helgaker2014}.
As we have already seen in Eq.~\eqref{eq:superposition_states}, such states can suffer from a slower scaling as well.
This gives rise to the hypothesis that molecular states overlapping the atomic orbitals with a high probability distribution close to one of the Coulomb singularities might reveal a reduced Trotter error scaling.
In addition, further electron-electron singularities can occur in more complicated atomic and molecular systems.
These could potentially slow the error scaling even further.
This conjecture is bolstered by the fact that some first signs of slower Trotter convergence for low-energy molecular orbitals have already been observed in Ref.~\cite{Chan2023}.
The authors speculate that this issue could potentially be resolved by investigating higher-order Trotter schemes.
However, our results suggest that this is not the case.

Furthermore, one might expect similar effects of slower algorithmic convergence to appear in other quantum chemistry methods such as qubitization and linear combination of unitaries.
For example, a recent paper by Mukhopadhyay \textit{et~al.}~\cite{Mukhopadhyay2023} compares a qubitization algorithm with a Trotter algorithm.
In both cases, the approximation error diverges in the full-system limit of no truncation.
This indicates that the effects of unbounded operators play a role in this scheme as well.
It would be interesting to further analyze how exactly the truncation error enters here.

Finally, from a mathematical perspective, it would be interesting to study the Trotter error in the $H^1$-norm, which is the natural norm in terms of the system's energy.
In addition, a generalization to arbitrary input states would extend our results to cases where the series in Eq.~\eqref{eq:superposition_states} diverges.

\section{Conclusion}
\label{sec:conclusion}
In summary, we have provided
\begin{enumerate}
	\item state-dependent bounds for the first-order Trotterization valid for unbounded operators,
	\item state-dependent bounds for the higher-order Trotterizations valid for unbounded operators, and
	\item bounds for the first- and higher-order Trotterizations, which admit slower fractional scalings when the input state does not satisfy the respective domain conditions.
\end{enumerate}
Our methods naturally generalize to Trotter problems with more than two operators.
These results are very general and might have wide-ranging applications in different fields, where the Trotter product formula is used.
This includes quantum computing, quantum control, quantum field theory, open quantum systems, quantum optics, Floquet physics, as well as quantum many-body physics~\cite{Burgarth2023}.
We discussed direct implications for quantum chemistry problems by studying the hydrogen atom.
For those applications, a state-dependent analysis is necessary as the involved Hamiltonians are unbounded.
Our results show three implications.
\begin{enumerate}
	\item There are input states that give slower scalings than the standard $N^{-1}$ scaling. For the hydrogen atom, the scaling is solely determined by the quantum number $\ell$ of the orbital angular momentum. Most importantly, the $\ell=0$ states including the ground state yield the $N^{-1/4}$ scaling. This behavior is due to the singularity of the Coulomb potential. It can be explained by the high electron density of low-angular-momentum states close to that singularity. This leads to diverging fourth moments in the kinetic and potential energies, due to which the Trotter dynamics can ionize the hydrogen atom. We conjecture that this behavior is generic to quantum chemistry as the Coulomb potential is ubiquitous there. 
	\item Input states with slower Trotter convergence do not benefit from an enhanced scaling by the higher-order Trotter-Suzuki methods. For a $p$th-order product formula to give the scaling $\mathcal{O}(t^{p+1}/N^p)$, the moments of the kinetic and potential energies of order $2(p+1)$ must be finite. This condition does not hold for states with low $\ell$, such as the ground state.
	\item For states with finite higher moments of energy, both the first- and higher-order methods might lead to the desired $N^{-p}$ scaling. However, the order of the operators in the Trotter product can determine the provable scaling behavior. This is due to an asymmetry in the conditions for the higher moments.
\end{enumerate}
Our findings imply that the first-order Trotterization is the most efficient scheme for the ground-state quantum chemistry, as it involves the least number of unitaries per Trotter cycle.
Furthermore, some previous resource and runtime estimates based on the $N^{-1}$ Trotter error scaling are too optimistic.
In practice, this means that the timescale towards useful chemistry simulations on quantum computers is longer than expected.
Nevertheless, our findings only imply a polynomially increased Trotter error and do not rule out a potential exponential quantum advantage.
In fact, the slower scaling also carries over to numerical simulations on classical computers.
Overall, we treat errors in digital simulations in an analytical way.
We hope that our analytical insights will inspire the development of more efficient (quantum) simulation algorithms and provide a methodology for the proof of exponential quantum advantage.

\acknowledgments
The authors thank Lauritz van Luijk for helpful feedback and for pointing out the argument for the non-existence of general state-dependent Trotter bounds with commutator scaling from Thm.~\ref{thm:no-go_commutator_scaling}\@.
DB and AH would like to thank Elliot Eklund and Ivan Kassal for their interesting and helpful discussions.
DB further acknowledges Reinhard F.\ Werner for a fruitful dialogue.
In addition, AH thanks Dominic Berry and Nathan K.\ Langford for stimulating conversations.

DB acknowledges funding from the Australian Research Council (project numbers FT190100106, DP210101367, CE170100009) and the Munich Quantum Valley project K8\@.
PF acknowledges financial
support from the PNRR MUR project
CN00000013 - National Centre for HPC, Big Data and
Quantum Computing, as well as partial support by Istituto Nazionale di Fisica Nucleare (INFN) through
the project ``QUANTUM'' and by the Italian National Group of Mathematical Physics (GNFM-INdAM).
AH was partially supported by the Sydney Quantum Academy.
KY acknowledges support by the Top Global University Project from the Ministry of Education, Culture, Sports, Science and Technology (MEXT), Japan, and by JSPS KAKENHI Grants No. JP18K03470, No. JP18KK0073, and No. JP24K06904 from the Japan Society for the Promotion of Science (JSPS).

\begin{widetext}
\appendix

\section{First-order Trotter bounds (Thm.~\ref{thm:trotter_thm} and Thm.~\ref{thm:first_order_alpha})}
\label{sec:proof_trotter_thm}
In this appendix, we provide the proofs of Thm.~\ref{thm:trotter_thm} and Thm.~\ref{thm:first_order_alpha} presented in Sec.~\ref{sec:first_order}\@.
We start by proving a lemma.
In the following, we denote by $I$ the identity operator.
\begin{lemma}\label{lem:bound_refined}
Let $H_{1}$ be self-adjoint on $\mathcal{D}(H_{1})$ and $H_{2}$
be self-adjoint on $\mathcal{D}(H_{2})$. Let $\varphi\in\mathcal{D}(H_{1})\cap\mathcal{D}(H_{2})$
be an eigenstate of $H_{1}+H_{2}$ with eigenvalue $0$, i.e.\ $(H_{1}+H_{2})\varphi=0$.
Then, for all $t\in\mathbb{R}$,
\begin{equation}
\xi_N(t;\varphi)
\le N\left\Vert \left(\rme^{-\rmi \frac{t}{N}H_{1}}-I+\rmi\frac{t}{N}H_{1}\right)\varphi\right\Vert
		+N \left\Vert \left(\rme^{\rmi\frac{t}{N}H_{2}}-I-\rmi\frac{t}{N}H_{2}\right)\varphi\right\Vert .
\label{eq:lemma_bound1}
\end{equation}
Moreover, if $\varphi\in\mathcal{D}(H_{1}^2)\cap\mathcal{D}(H_{2}^2)$, it is further bounded by
\begin{equation}
\xi_N(t;\varphi)
\le\frac{t^2}{2N}\,\Bigl(
\Vert H_1^2\varphi\Vert
+
\Vert H_2^2\varphi\Vert
\Bigr).
\label{eq:lemma_bound2}
\end{equation}
\end{lemma}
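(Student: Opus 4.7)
My plan is to reduce the $N$-step Trotter error to a single-step error via telescoping, then exploit unitarity and the eigenvalue condition $(H_1+H_2)\varphi=0$ to rewrite the single-step error in a form that naturally splits into the two terms on the right-hand side of \eqref{eq:lemma_bound1}.

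For the first bound, set $U=\rme^{-\rmi\frac{t}{N}H_2}\rme^{-\rmi\frac{t}{N}H_1}$. The standard telescoping identity
\begin{equation*}
U^N-I=\sum_{k=0}^{N-1}U^{N-1-k}(U-I),
\end{equation*}
combined with the unitarity of $U$, yields $\xi_N(t;\varphi)\le N\|(U-I)\varphi\|$. To evaluate $(U-I)\varphi$ I would multiply by the unitary $\rme^{\rmi\frac{t}{N}H_2}$, which preserves the norm, and obtain
\begin{equation*}
\|(U-I)\varphi\|=\|\rme^{-\rmi\frac{t}{N}H_1}\varphi-\rme^{\rmi\frac{t}{N}H_2}\varphi\|.
\end{equation*}
Adding and subtracting the linear-in-$t/N$ terms of each exponential gives
\begin{equation*}
\rme^{-\rmi\frac{t}{N}H_1}\varphi-\rme^{\rmi\frac{t}{N}H_2}\varphi
=\bigl(\rme^{-\rmi\frac{t}{N}H_1}-I+\rmi\tfrac{t}{N}H_1\bigr)\varphi
-\bigl(\rme^{\rmi\frac{t}{N}H_2}-I-\rmi\tfrac{t}{N}H_2\bigr)\varphi
-\rmi\tfrac{t}{N}(H_1+H_2)\varphi,
\end{equation*}
and the last term vanishes by hypothesis. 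Applying the triangle inequality then gives \eqref{eq:lemma_bound1}. The key trick here is recognizing that the ``wrong-sign'' factor $\rme^{+\rmi\frac{t}{N}H_2}$ in \eqref{eq:lemma_bound1} is precisely what arises from rotating $(U-I)\varphi$ by the unitary $\rme^{\rmi\frac{t}{N}H_2}$, which makes the two linear counterterms combine into $(H_1+H_2)\varphi$.

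For the refined bound \eqref{eq:lemma_bound2}, I would apply Taylor's theorem with integral remainder to each exponential. Since $\varphi\in\mathcal{D}(H_j^2)$, twice differentiating $\rme^{-\rmi s H_j}\varphi$ gives $-H_j^2\rme^{-\rmi s H_j}\varphi$, and the Lagrange-type integral representation gives
\begin{equation*}
\bigl(\rme^{-\rmi\tau H_j}-I+\rmi\tau H_j\bigr)\varphi
=-\int_0^\tau(\tau-s)\,H_j^2\rme^{-\rmi s H_j}\varphi\,\rmd s,
\end{equation*}
so that commuting $H_j^2$ past the unitary and bounding the integrand yields $\|(\rme^{-\rmi\tau H_j}-I+\rmi\tau H_j)\varphi\|\le\tfrac{\tau^2}{2}\|H_j^2\varphi\|$, and similarly for the $+\rmi\tau H_2$ variant. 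Substituting $\tau=t/N$ into \eqref{eq:lemma_bound1} produces the factor $N\cdot\tfrac{t^2}{2N^2}=\tfrac{t^2}{2N}$ and yields \eqref{eq:lemma_bound2}.

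The only subtle step is the first one: one must justify that $(U-I)\varphi$ lies in the range of both unitaries and that the added/subtracted linear terms are genuinely well defined, which follows immediately from $\varphi\in\mathcal{D}(H_1)\cap\mathcal{D}(H_2)$. Everything else is routine Taylor analysis, functional calculus, and triangle/telescoping estimates. Theorem~\ref{thm:trotter_thm} then follows by first shifting $H_1\to H_1-g$ and $H_2\to H_2-h+g$ so that $(H_1+H_2)\varphi=0$ (which only modifies global phases in the exponentials that cancel in $\xi_N$), and applying \eqref{eq:lemma_bound2}.
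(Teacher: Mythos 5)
Your proof is correct and follows essentially the same route as the paper's: telescoping to reduce to a single step, unitary invariance to rewrite $(U-I)\varphi$ as $(\rme^{-\rmi\frac{t}{N}H_1}-\rme^{\rmi\frac{t}{N}H_2})\varphi$, insertion of the linear counterterms so that $(H_1+H_2)\varphi=0$ cancels the cross term, and the second-order Taylor remainder with integral form for the refined bound. No gaps.
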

\begin{proof}
The definition of $\xi_N(t;\varphi)$ is given in Eq.~(\ref{eq:definition_standard_error}).
Under the assumption $(H_1+H_2)\varphi=0$, it reads
	\begin{equation*}
		\xi_N(t;\varphi)=\left\Vert\left[\left(\rme^{-\rmi \frac{t}{N}H_2}\rme^{-\rmi \frac{t}{N}H_1}\right)^{N}-I\right]\varphi\right\Vert.
	\end{equation*}
By telescoping the sum, we have
\begin{equation*}
D_N(t)\equiv\left(\rme^{-\rmi \frac{t}{N}H_2}\rme^{-\rmi \frac{t}{N}H_1}\right)^{N}-I
=\sum_{\ell=1}^{N}\left(\rme^{-\rmi \frac{t}{N}H_2}\rme^{-\rmi \frac{t}{N}H_1}\right)^{N-\ell}\left(\rme^{-\rmi \frac{t}{N}H_2}\rme^{-\rmi \frac{t}{N}H_1}-I\right).
\end{equation*}
Then, using the invariance of the norm under unitaries and the assumption $(H_{1}+H_{2})\varphi=0$, we get
\begin{align}
\xi_N(t;\varphi)
=
\Vert D_N(t)\varphi\Vert
&\le
N\left\Vert \left(\rme^{-\rmi \frac{t}{N}H_2}\rme^{-\rmi \frac{t}{N}H_1}-I\right)\varphi\right\Vert
\nonumber\\
&=N\left\Vert \left(\rme^{-\rmi \frac{t}{N}H_{1}}-\rme^{\rmi \frac{t}{N}H_{2}}\right)\varphi\right\Vert
\nonumber\\
&=N\left\Vert \left(\rme^{-\rmi \frac{t}{N}H_{1}}-I+\rmi\frac{t}{N}H_{1}-\rme^{\rmi\frac{t}{N}H_{2}}+I+\rmi\frac{t}{N}H_{2}\right)\varphi\right\Vert \nonumber\\
& \le N\left\Vert \left(\rme^{-\rmi \frac{t}{N}H_{1}}-I+\rmi\frac{t}{N}H_{1}\right)\varphi\right\Vert +N \left\Vert \left(\rme^{\rmi\frac{t}{N}H_{2}}-I-\rmi\frac{t}{N}H_{2}\right)\varphi\right\Vert,
\label{eq:first_order_bound}
\end{align}
which proves the first bound~(\ref{eq:lemma_bound1}).
Under the additional domain condition $\varphi\in\mathcal{D}(H_{1}^2)\cap\mathcal{D}(H_{2}^2)$, we can further bound it by Eq.~(\ref{eq:lemma_bound2}), using
\begin{equation}
(\rme^{-\rmi sH_j}-I+\rmi sH_j)\varphi
=
-\int_0^s\rmd u\,\rme^{-\rmi (s-u)H_j}uH_j^2 \varphi,
\label{eq:Taylor_first}
\end{equation}
and hence
\begin{equation}
\|(\rme^{-\rmi sH_j}-I+\rmi sH_j)\varphi\|
\le 
\frac{1}{2}s^2\|H_j^2\varphi\|,
\label{eq:Taylor2}
\end{equation}
for all $s\in\mathbb{R}$ and $j=1,2$.
\end{proof}

\subsection{Proof of Thm.~\ref{thm:trotter_thm}}
We are now ready to prove Thm.~\ref{thm:trotter_thm}\@.
\begin{proof}[Proof of Thm.~\ref{thm:trotter_thm}]
We shift the eigenvalue of the target Hamiltonian $H_1+H_2$ with respect to $\varphi$ to zero and apply the second bound~(\ref{eq:lemma_bound2}) of Lemma~\ref{lem:bound_refined}:
\begingroup
\allowdisplaybreaks
\begin{align*}
\xi_N(t;\varphi)
={}&\left\Vert\left[\left(\rme^{-\rmi \frac{t}{N}H_2}\rme^{-\rmi \frac{t}{N}H_1}\right)^{N}-\rme^{-\rmi th}\right]\varphi\right\Vert 
\nonumber\\
={}&\left\Vert\left[\left(\rme^{-\rmi \frac{t}{N}(H_2+g-h)}\rme^{-\rmi \frac{t}{N}(H_1-g)}\right)^{N}-I\right]\varphi\right\Vert 
\nonumber\\
\le{}&
N\left\Vert \left(\rme^{-\rmi \frac{t}{N} (H_1-g)}-I+\rmi\frac{t}{N} (H_1-g)\right)\varphi\right\Vert
		+N \left\Vert \left(\rme^{\rmi\frac{t}{N}(H_2+g-h)}-I-\rmi\frac{t}{N}(H_2+g-h)\right)\varphi\right\Vert
		\nonumber\\
\le{}&\frac{t^{2}}{2N}\,\Bigl(\Vert (H_1-g)^{2}\varphi\Vert+\Vert (H_2+g-h)^{2}\varphi\Vert\Bigr).
\end{align*}
\endgroup
This proves Thm.~\ref{thm:trotter_thm}\@.
\end{proof}

\subsection{Proof of Thm.~\ref{thm:first_order_alpha}}
If the domain condition $\varphi\in\mathcal{D}(H_{1}^2)\cap\mathcal{D}(H_{2}^2)$ is not fulfilled, the bound~(\ref{eq:Taylor2}) is not useful and the second bound~(\ref{eq:lemma_bound2}) of Lemma~\ref{lem:bound_refined} does not hold.
In such a case, we need to perform a refined analysis to estimate the scaling of the first bound~(\ref{eq:lemma_bound1}) of Lemma~\ref{lem:bound_refined}\@.
Let us look at
\begin{equation}
\Delta_j(s)
=\|(\rme^{-\rmi sH_j}-I+\rmi sH_j)\varphi\|^2
=\int_{\mathbb{R}}|\rme^{-\rmi s\lambda}-1+\rmi s\lambda|^2
\,\rmd\mu_{j,\varphi}(\lambda),\quad j=1,2,
\label{eq:Delta1_def}
\end{equation}
for $s\in\mathbb{R}$, where 
\begin{equation*}
	\mu_{j,\varphi}(\Omega)= \| \mathbf{1}_{\Omega}(H_j)\,\varphi\|^2,
\end{equation*}
for a measurable $\Omega\subset \mathbb{R}$, is the spectral measure of $H_j$ at the state $\varphi$, with $\mathbf{1}_{\Omega}(x)$ being the characteristic function of $\Omega$, which equals 1 for $x\in\Omega$ and 0 for $x\notin\Omega$.
Note that the convergence of the integral in Eq.~\eqref{eq:Delta1_def} only requires $\varphi\in\mathcal{D}(H_j)$.
Since $f(x)= |\rme^{-\rmi x}-1+\rmi x|^2= f(-x)$, it suffices to study
\begin{equation*}
\Delta_j(s)
=\int_{\mathbb{R}}f(s\lambda)\,\rmd\mu_{j,\varphi}(\lambda)
=\int_{\mathbb{R}_+} f(s\lambda)\,\rmd\nu_{j,\varphi}(\lambda),
\end{equation*}
where $\mathbb{R}_+=[0,+\infty)$, and
\begin{equation*}
	\nu_{j,\varphi}(\Omega) = \mu_{j,\varphi}\bm{(}\Omega \cup (-\Omega) \bm{)},\quad \Omega \subset\mathbb{R}_+.
\end{equation*}
The asymptotic behavior of $\Delta_j(s)$ for small $s$ is ruled by the high-energy tail of the measure $\nu_{j,\varphi}$.
The following Lemma provides a bound on $\Delta_j(s)$, when we assume a suitable decay of $\nu_{j,\varphi}\bm{(}(\lambda,+\infty)\bm{)}$.
\begin{lemma}\label{lem:FractionalScaling}
Let $\mu_\varphi(\Omega)$ be the spectral measure of a self-adjoint operator $H$ at the vector $\varphi$. Assume that
\begin{equation*}
	 \mu_{\varphi}(\{|x|\geq \lambda \}) = \nu_{\varphi}\bm{(}(\lambda,+\infty)\bm{)} \leq \left(\frac{\lambda_0}{\lambda}\right)^{2\delta},
\end{equation*}
for some $\lambda_0>0$ and $\delta>1$, and for all $\lambda>0$.
Then, 
\begin{equation}
\Delta(s)
=\int_{\mathbb{R}_+}|\rme^{-\rmi s\lambda}-1+\rmi s\lambda|^2
\,\rmd\nu_{\varphi}(\lambda)
\label{eq:Delta(s)def}
\end{equation} 
is bounded by
\begin{equation}
\Delta(s)
\le
\begin{cases}
\medskip
\displaystyle
\frac{\pi}{\Gamma(2\delta-1)\sin[(\delta-1)\pi]}(\lambda_0 s)^{2\delta},
\quad&1<\delta<2,\\
\medskip
\displaystyle
\frac{1}{4}\Lambda^2\|H\varphi\|^2s^4
+\frac{\lambda_0^4}{\Lambda^4}
g(\Lambda s),
\quad&\delta=2,\\
\displaystyle
\frac{1}{4}\|H^2\varphi\|^2s^4,&
\delta>2,
\end{cases}
\label{eqn:fractional_bound}
\end{equation}
for $s>0$, where $\Lambda>0$ is an arbitrary energy scale, and
\begin{equation*}
g(x)
=
-x^4\Ci(x)+\left(1+\frac{1}{2}x^2\right)[(1-\cos x)^2+(x-\sin x)^2]-2x^3(x-\sin x)+\frac{3}{2}x^4,
\end{equation*}
with
$\Ci(x)=-\int_x^\infty\frac{\cos t}{t}\,\rmd t$.
Note that 
\begin{equation*}
g(x)=-x^4\log x+\left(\frac{7}{4}-\gamma\right)x^4+\mathcal{O}(x^6),
\end{equation*}
as $x\downarrow 0$, with Euler's constant $\gamma=\Gamma'(1)\approx 0.577$. 
In particular, one gets 
\begin{equation}
g(x) \leq -x^4\log x+ 1.2003 x^4,
\label{eq:g(x)smallx}	
\end{equation}
for $0\leq x\leq 1$.
\end{lemma}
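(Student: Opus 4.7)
The plan is to reduce everything to pointwise estimates on the integrand $f(x)=|\rme^{-\rmi x}-1+\rmi x|^{2}=(1-\cos x)^{2}+(x-\sin x)^{2}$ and a single integration by parts against the spectral tail. A direct computation gives $f'(x)=2x(1-\cos x)\ge 0$, so $f$ is increasing on $\mathbb{R}_{+}$ with $f(0)=0$. Using $\rme^{-\rmi x}-1+\rmi x=\int_{0}^{x}(1-\rme^{-\rmi y})\rmi\,\rmd y$, we obtain the elementary pointwise bound $|\rme^{-\rmi x}-1+\rmi x|\le\int_{0}^{|x|}|1-\rme^{-\rmi y}|\,\rmd y\le x^{2}/2$, whence $f(x)\le x^{4}/4$ for all $x\in\mathbb{R}$. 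This single bound settles the easy case $\delta>2$: we get $\Delta(s)\le (s^{4}/4)\int_{\mathbb{R}_{+}}\lambda^{4}\,\rmd\nu_{\varphi}(\lambda)=(s^{4}/4)\|H^{2}\varphi\|^{2}$, using the tail condition to guarantee $\varphi\in\mathcal{D}(H^{2})$.

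For the case $1<\delta<2$, I would integrate by parts. Writing $F(\lambda)=\nu_{\varphi}\bm{(}(\lambda,+\infty)\bm{)}$, so that $\rmd\nu_{\varphi}(\lambda)=-\rmd F(\lambda)$, the hypothesis $F(\lambda)\le(\lambda_{0}/\lambda)^{2\delta}$ with $\delta>1$ ensures the boundary term at infinity vanishes because $f(s\lambda)F(\lambda)=O(\lambda^{2-2\delta})$; the boundary term at $0$ vanishes because $f(0)=0$. This yields
\begin{equation*}
\Delta(s)=\int_{0}^{\infty} s f'(s\lambda)F(\lambda)\,\rmd\lambda
=\int_{0}^{\infty}2s^{2}\lambda(1-\cos(s\lambda))F(\lambda)\,\rmd\lambda
\le 2\lambda_{0}^{2\delta}s^{2}\int_{0}^{\infty}\lambda^{1-2\delta}(1-\cos(s\lambda))\,\rmd\lambda.
\end{equation*}
After the substitution $u=s\lambda$, this becomes $2\lambda_{0}^{2\delta}s^{2\delta}\int_{0}^{\infty}u^{1-2\delta}(1-\cos u)\,\rmd u$. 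The remaining integral is a classical Mellin-type integral, convergent precisely for $1<\delta<2$, and equal to $-\Gamma(2-2\delta)\cos[(1-\delta)\pi]$. Applying the reflection formula $\Gamma(2-2\delta)\Gamma(2\delta-1)\sin[(2-2\delta)\pi]=\pi$ and the identity $\sin(2\pi\delta)=2\sin(\pi\delta)\cos(\pi\delta)$ together with $\sin[(\delta-1)\pi]=-\sin(\pi\delta)$, the integral simplifies to $\pi/\{2\Gamma(2\delta-1)\sin[(\delta-1)\pi]\}$, giving precisely the claimed bound.

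For the borderline case $\delta=2$, the above Mellin integral diverges logarithmically, so I would split $\Delta(s)$ at an arbitrary scale $\Lambda$. On the low-energy part $\lambda\le\Lambda$, use $f(s\lambda)\le(s\lambda)^{4}/4$ together with $\lambda^{4}\le\Lambda^{2}\lambda^{2}$ to bound this contribution by $(s^{4}\Lambda^{2}/4)\int_{0}^{\Lambda}\lambda^{2}\,\rmd\nu_{\varphi}(\lambda)\le(s^{4}\Lambda^{2}/4)\|H\varphi\|^{2}$, which produces the first term of the bound. On the high-energy part $\lambda>\Lambda$, perform the same integration by parts as above, bound $F(\lambda)$ by $\lambda_{0}^{4}/\lambda^{4}$, substitute $u=\Lambda s$, and carry out the resulting elementary but lengthy integrations involving $\int\lambda^{-3}(1-\cos(s\lambda))\,\rmd\lambda$ and $\int\lambda^{-3}\sin(s\lambda)\,\rmd\lambda$. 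These are expressible in terms of the cosine integral $\Ci$ and polynomial/trigonometric pieces; collecting the boundary contribution $f(s\Lambda)F(\Lambda)$ together with the bulk integral gives precisely the factor $(\lambda_{0}^{4}/\Lambda^{4})\,g(\Lambda s)$ with $g$ as stated. The small-$x$ expansion $g(x)=-x^{4}\log x+(7/4-\gamma)x^{4}+\mathcal{O}(x^{6})$ follows from the known expansion $\Ci(x)=\gamma+\log x-x^{2}/4+\mathcal{O}(x^{4})$, and the explicit bound $g(x)\le -x^{4}\log x+1.2003\,x^{4}$ on $[0,1]$ is then a monotonicity/concavity check on the remainder.

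The main technical obstacle is the $\delta=2$ case: confirming that the high-energy integration by parts, after inserting the tail $(\lambda_{0}/\lambda)^{4}$, reassembles into exactly the function $g$ displayed in the statement (including the delicate $\Ci$ term) requires a careful book-keeping of several boundary contributions. The $1<\delta<2$ case is cleaner provided one correctly identifies the reflection-formula manipulation, and the $\delta>2$ case is essentially immediate from the universal quartic bound on $f$.
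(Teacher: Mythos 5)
Your proposal is correct and follows essentially the same route as the paper's proof: the universal bound $f(x)\le x^4/4$ for $\delta>2$, a single integration by parts against the spectral tail followed by evaluation of the Mellin integral $\int_0^\infty f'(x)x^{-2\delta}\,\rmd x$ via the reflection formula for $1<\delta<2$, and a split at the scale $\Lambda$ with the quartic bound below and the tail-integrated $\Ci$-computation above for $\delta=2$. The only cosmetic difference is that in the $\delta=2$ high-energy piece the paper first recombines the boundary term and the bulk integral into $4\lambda_0^4\int_\Lambda^\infty f(s\lambda)\lambda^{-5}\,\rmd\lambda$ before evaluating, whereas you would evaluate the two pieces separately; also note that your intermediate form $-\Gamma(2-2\delta)\cos[(1-\delta)\pi]$ has a removable singularity at $\delta=3/2$ that disappears in the final reflection-formula expression.
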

\begin{proof}
For $1<\delta<2$, we use integration by parts.
Let 
\begin{equation*}
f(x)
=|\rme^{-\rmi x}-1+\rmi x|^2
=(1-\cos x)^2+(x-\sin x)^2 .
\end{equation*}
Note that $f\in L^1(\mathbb{R}_+,\rmd\nu_{\varphi})$, $f'(\lambda)/\lambda^{2\delta}\in L^1(\mathrm{R}_+)$, $f(0)=0$, and $f(\lambda)=o(\lambda^{2\delta})$ as $\lambda\to+\infty$.
Since $\rmd\nu_{\varphi}(\lambda)= - \rmd[\nu_{\varphi}\bm{(}(\lambda,+\infty)\bm{)}]$, we get
\begin{align*}
	\Delta(s)
	&=
	-\int_{\mathbb{R}_+} \rmd[f(s\lambda)\nu_{\varphi}\bm{(}(\lambda,+\infty)\bm{)}]
	+\int_0^{+\infty}sf'(s\lambda)\nu_{\varphi}\bm{(}(\lambda,+\infty)\bm{)}\,\rmd\lambda\\
	&=
	f(0) \nu_{\varphi}\bm{(}(0,+\infty)\bm{)} 
	- \lim_{\lambda\rightarrow+\infty} f(s\lambda)\nu_{\varphi}\bm{(}(\lambda,+\infty)\bm{)}
	+ \int_0^{+\infty}f'(x)\nu_{\varphi}\bm{(}(x/s,+\infty)\bm{)}\,\rmd x.
\end{align*}
Therefore,
\begingroup
\allowdisplaybreaks
\begin{align*}
	\Delta(s)
	={}&
	\int_0^{+\infty}f'(x)\nu_{\varphi}\bm{(}(x/s,+\infty)\bm{)}\,\rmd x\\
	\leq{}& (\lambda_0 s)^{2\delta} \int_0^{+\infty}\frac{f'(x)}{x^{2\delta}}\,\rmd x\\
	={}&
	\frac{\pi}{\Gamma(2\delta-1)\sin[(\delta-1)\pi]}(\lambda_0 s)^{2\delta}.
\end{align*}
\endgroup
For $\delta=2$, we fix $\Lambda>0$ and split the integration range as
\begin{equation*}
\Delta(s)
=\int_{[0,\Lambda]} f(s\lambda)\,\rmd\nu_{\varphi}(\lambda)
+\int_{(\Lambda,+\infty)} f(s\lambda)\,\rmd\nu_{\varphi}(\lambda)
=\Delta_j^{<}(s)+\Delta_j^{>}(s).
\end{equation*}
Since $f(x)\le x^4/4$, the first term is estimated as
\begingroup
\allowdisplaybreaks
\begin{align*}
\Delta^{<}(s)
={}&
\int_{[0,\Lambda]} f(s\lambda)\,\rmd\nu_{\varphi}(\lambda)
\nonumber\\
\le{}&
\frac{1}{4}s^4
\int_{[0,\Lambda]} \lambda^4\,\rmd\nu_{\varphi}(\lambda)
\nonumber\\
\leq{}&
\frac{1}{4}\Lambda^2 s^4
\int_{[0,\Lambda]}\lambda^2\,\rmd\nu_{\varphi}(\lambda)
\nonumber\\
\le{}&
\frac{1}{4}\Lambda^2 s^4
\int_{\mathbb{R}_+}\lambda^2\,\rmd\nu_{\varphi}(\lambda)
\nonumber\\
={}&
\frac{1}{4}\Lambda^2\|H\varphi\|^2 s^4.
\intertext{Since
$f'(x)= 2x (1-\cos x) \geq 0$ for $x\geq0$,
the second term is bounded by}
\Delta^{>}(s)
={}&
\int_{(\Lambda,+\infty)} f(s\lambda)\,\rmd\nu_{\varphi}(\lambda)
\nonumber\\
={}& {-}\int_{(\Lambda,+\infty)}f(s\lambda)\,\rmd[\nu_{\varphi}\bm{(}(\lambda,+\infty)\bm{)}]
\nonumber\\
={}& f(s\Lambda)\nu_{\varphi}\bm{(}(\Lambda,+\infty)\bm{)} + s \int_{\Lambda}^{+\infty} f'(s\lambda)\nu_{\varphi}\bm{(}(\lambda,+\infty)\bm{)}\,\rmd\lambda
\nonumber\\
\leq{}& f(s\Lambda)\frac{\lambda_0^4}{\Lambda^4} + s \int_{\Lambda}^{+\infty} f'(s\lambda)\frac{\lambda_0^4}{\lambda^4}\,\rmd\lambda
\nonumber\\
={}& 4 \lambda_0^4 \int_{\Lambda}^{+\infty} \frac{f(s\lambda)}{\lambda^5}\,\rmd\lambda
\nonumber\\
={}&		
4 (\lambda_0 s)^4\int_{s\Lambda}^{+\infty}\frac{f(x)}{x^5}\,\rmd x
\nonumber
\\
={}&
\frac{\lambda_0^4}{\Lambda^4}g(\Lambda s).
\end{align*}
\endgroup
Finally, if $\delta>2$ then $\varphi\in \mathcal{D}(H^2)$, and the bound is given by Eq.~(\ref{eq:Taylor2}) obtained in the proof of Lemma~\ref{lem:bound_refined}\@.
\end{proof}
Let us collect some comments on this lemma in the following remark.
\begin{remark}\label{rem:tail_bound}
	\begin{enumerate}[label=(\roman*)]
	\item $\lambda_0$ and $\Lambda$ have the dimension of energy.
			$\Lambda$ is an arbitrary energy scale, which can be tuned to optimize the bound for $\delta=2$.
	The optimal $\Lambda$ is the solution to the equation
	$$
	\frac{f(s\Lambda)}{(s\Lambda)^6}=\frac{\|H\varphi\|^2}{8\lambda_0^4} \frac{1}{s^2}.
	$$
	Since $f(x)\sim x^4/4$ as $x\to 0$, one gets that for $s\downarrow0$ the optimal value approaches $\Lambda \to \sqrt{2}\, \lambda_0^2 / \|H\varphi\|$, whence a quasi-optimal bound for small $s$ is given by
	$$
	\Delta(s) \leq \frac{1}{2}(\lambda_0 s)^4 + \frac{ \|H\varphi\|^4}{4 \lambda_0^4} g\!\left(\frac{\sqrt{2}\,\lambda_0}{\|H\varphi\|} \lambda_0 s\right).
	$$
	By using the upper bound~\eqref{eq:g(x)smallx}, one gets for  $s\leq \|H\varphi\|/ (\sqrt{2}\,\lambda_0^2)$
	\begin{equation*}
		\Delta(s) \leq (\lambda_0 s)^4 \left(-\log(\lambda_0 s) + 1.7003-\log\frac{\sqrt{2}\,\lambda_0}{\|H\varphi\|}\right) .
	\end{equation*}
		\item The lemma can be extended to the range $\delta\leq 1$. Indeed, if one only assumes that $\varphi\in\mathcal{D}(H)$, so that $\Delta(s)$ in Eq.~\eqref{eq:Delta(s)def} is finite, without any further assumptions on the decay of the spectral measure, one easily gets from Eq.~\eqref{eq:Delta(s)def}
	\begin{equation*}
\Delta(s)
=s^2 \int_{\mathbb{R}_+}|F(s\lambda)|^2
\lambda^2\,\rmd\nu_{\varphi}(\lambda),
\qquad F(x)= \rme^{\rmi x/2} \sinc(x/2)-1.
	\end{equation*}
Since $F(x)$ is bounded and $F(x)\to 0$ as $x\to 0$, by dominated convergence one has that 
\begin{equation*}
	\Delta(s) =s^2 \|F(s H)H\varphi\|^2 = o(s^2). 
\end{equation*}
Notice, however, that in such a case we have no control over the rate of convergence of $\Delta(s)/s^2$, which can be arbitrarily slow.
		\item	If $\delta>2$, then $\varphi\in\mathcal{D}(H^2)$, and the bound from Thm.~\ref{thm:trotter_thm} applies.
			However, in this case, there might be a possibility of gaining a better scaling by a higher-order product formula.
	\end{enumerate}
\end{remark}

\bigskip
Theorem~\ref{thm:first_order_alpha} is immediately proven, using the scaling analysis of Lemma~\ref{lem:FractionalScaling}\@.
\begin{proof}[Proof of Thm.~\ref{thm:first_order_alpha}]
By the same replacements as in the proof of Thm.~\ref{thm:trotter_thm} and using the first bound~(\ref{eq:lemma_bound1}) of Lemma~\ref{lem:bound_refined}, we have
\begin{equation}
\xi_N(t;\varphi)
\le
N\,\Bigl(
\sqrt{\Delta_1(t/N)}
+
\sqrt{\Delta_2(t/N)}
\Bigr).
\label{eq:first_order_fractional_scaling}
\end{equation}
Under the conditions for Thm.~\ref{thm:first_order_alpha}, bounds on $\Delta_j(t/N)$ scale as Eq.~(\ref{eqn:fractional_bound}) of Lemma~\ref{lem:FractionalScaling}, i.e.,
\begin{equation*}
\Delta_j(t/N)
=
\begin{cases}
\medskip
\displaystyle
\mathcal{O}\bm{(}(t/N)^{2\delta}\bm{)},
\quad&1<\delta<2,\\
\displaystyle
\mathcal{O}\bm{(}(t/N)^4\log(N/t)\bm{)},
\quad&\delta=2,
\end{cases}
\end{equation*}
and we get the scalings of $\xi_N(t;\varphi)$ in Thm.~\ref{thm:first_order_alpha}\@.
\end{proof}

\section{Analytical Trotter bounds for the hydrogen atom}
\label{sec:hydrogen_bounds}
In this section, we apply the Trotter error bounds from Thm.~\ref{thm:trotter_thm} and Lemma~\ref{lem:FractionalScaling} (Thm.~\ref{thm:first_order_alpha}) to the Hamiltonian of the hydrogen atom, where we Trotterize between the kinetic energy (temporarily reinstating the Planck constant $\hbar$)
\begin{equation*}
H_1=-\frac{\hbar^2}{2m_\mathrm{e}}\Delta
\end{equation*}
and the potential energy
\begin{equation*}
H_2=-\frac{e^2}{4\pi\varepsilon_0 r}.
\end{equation*}
This results in the explicit bounds in Eqs.~\eqref{eq:bound_R_n0}--\eqref{eq:bound_R_nl} presented in Sec.~\ref{sec:trotter_thm} of the Methods section. 
Recall that the Bohr radius $a_0$ is given by
\begin{equation*}
	a_0=\frac{4\pi\varepsilon_0\hbar^2}{m_\mathrm{e}e^2}.
\end{equation*}
Furthermore, the hydrogen atom $H_1+H_2$ has eigenenergies
\begin{equation*}
	E_n =-\frac{\hbar^2}{2m_\mathrm{e}a_0^2n^2},\quad n=1,2,\ldots,
\end{equation*}
corresponding to the radial hydrogen wave functions
$R_{n\ell}(r)=a_0^{-3/2}\mathcal{R}_{n\ell}(r/a_0)$ ($\ell=0,\ldots,n-1$) with
\begin{equation}
\mathcal{R}_{n\ell}(u)
=
\left(\frac{2}{n}\right)^{3/2}
\sqrt{
\frac{(n-\ell-1)!}{2n(n+\ell)!}
}
\rme^{-u/n}
(2u/n)^\ell
L_{n-\ell-1}^{(2\ell+1)}(2u/n),
\label{eq:hydrogen_eigenfunctions}
\end{equation}
where
\begin{equation*}
	L_{n}^{(\alpha)}(x)=\sum_{i=0}^{n}\frac{(-1)^{i}}{i!}\binom{n+\alpha}{n-i}x^{i}
\end{equation*}
are the generalized Laguerre polynomials~\cite[Chap.~22]{Abramowitz1972}. 
The full eigenstates $\Psi_{n\ell m}$ of the hydrogen atom including the degeneracies (characterized by the magnetic quantum number $m=-\ell,-\ell+1,\ldots,\ell-1,\ell$) are given by
\begin{equation*}
	\Psi_{n\ell m}(r, \theta, \phi) = R_{n\ell}(r) Y_{\ell m}(\theta, \phi),
\end{equation*}
where the spherical harmonics $Y_{\ell m}(\theta, \phi)$ are defined by
\begin{equation*}
	Y_{\ell m}(\theta, \phi) = \sqrt{\frac{2\ell +1}{4\pi}\frac{(\ell-m)!}{(\ell+m)!}} \rme^{\rmi m\phi} P_{\ell m}(\cos\theta),
\end{equation*}
with the associated Legendre polynomials
\begin{equation*}
	P_{\ell m}(x) = \frac{(-1)^m}{2^\ell \ell!} (1-x^2)^{m/2} \frac{\rmd^{\ell+m}}{\rmd x^{\ell+m}}(x^2-1)^\ell.
\end{equation*}
However, the spherical part $Y_{\ell m}$ only accounts for the degeneracies in the spectrum of the hydrogen atom.
Different values of $m$ for fixed $(n,\ell)$ lead to the same Trotter dynamics.
For this reason, it suffices to bound the Trotter convergence speed for the radial part $R_{n\ell}(r)$.
The momentum distributions of the eigenstates of the hydrogen atom are given by $(a_0/\hbar)\Xi_{n\ell}(a_0p/\hbar)$ with~\cite{PodolskyPauling}
\begin{equation}
\Xi_{n\ell}(u)
=
\frac{4^{2\ell+3}n^2(\ell!)^2(n-\ell-1)!}{2\pi(n+\ell)!}
\frac{(n^2u^2)^{\ell+1}}{(n^2u^2+1)^{2\ell+4}}
\left[
C_{n-\ell-1}^{(\ell+1)}\!\left(
\frac{n^2u^2-1}{n^2u^2+1}
\right)
\right]^2,
\label{eq:momentum_distribution}
\end{equation}
where $C_n^{(\alpha)}(x)$ is the Gegenbauer polynomial~\cite[Chap.~22]{Abramowitz1972}.
The radial wave functions and the momentum distributions are normalized as
\begin{equation*}
\int_0^\infty\rmd u\,u^2[\mathcal{R}_{n\ell}(u)]^2=1,\qquad
\int_0^\infty\rmd u\,\Xi_{n\ell}(u)=1.
\end{equation*}
We will use a bound on the generalized Laguerre polynomial~\cite{Koornwinder1977}
\begin{equation}
\rme^{-\frac{1}{2}x}|L_n^{(\alpha)}(x)|
\le L_n^{(\alpha)}(0)
=\frac{\Gamma(\alpha+1+n)}{\Gamma(\alpha+1)\Gamma(n+1)}\qquad
(x\ge0,\ \alpha\ge0),
\label{eqn:LaguerreBound}
\end{equation}
and a bound on the Gegenbauer polynomial~\cite[Thm.~7.33.1]{Szegoe1975}
\begin{equation}
|C_n^{(\alpha)}(x)|
\le C_n^{(\alpha)}(1)
=\frac{\Gamma(2\alpha+n)}{\Gamma(2\alpha)\Gamma(n+1)}\qquad
(-1\le x\le1,\ \alpha>0).
\label{eqn:GegenbauerBound}
\end{equation}

In this section, we set $g=0$ in Thm.~\ref{thm:trotter_thm} and define 
\begin{equation*}
\xi_{N,j}(t;\Psi_{n\ell m}) = N\left\|\left(\rme^{\frac{\rmi}{\hbar}\frac{t}{N}H_j} -I - \frac{\rmi}{\hbar}\frac{t}{N}H_j\right)\Psi_{n\ell m}\right\|
\le\frac{t^2}{2\hbar^2N}\|H_j^2 \Psi_{n\ell m}\|,\quad j=1,2,
\end{equation*}
for the kinetic and potential energies.
By computing these individually, we retrieve the Trotter bounds shown in Sec.~\ref{sec:trotter_thm}\@. 
Notice, however, that we shift the potential energy by the respective eigenvalue $h=E_n$ in $\xi_{N,2}(t;\Psi_{n\ell m})$ as we did in the proof of Thm.~\ref{thm:trotter_thm} in Appendix~\ref{sec:proof_trotter_thm}\@. 
We also define a characteristic energy scale and a time scale
\begin{equation}
\Lambda=\frac{\hbar^2}{m_\mathrm{e}a_0^2}, \qquad t_0=\frac{\hbar}{\Lambda}=\frac{m_\mathrm{e}a_0^2}{\hbar},
\label{eq:EnergyScale}
\end{equation}
and a reduced (dimensionless) time
\begin{equation*}
\tilde{t}=\frac{t}{t_0}.
\end{equation*}

\subsection{Kinetic term for $\ell\ge 2$}
For $\ell\ge2$, we have $\Psi_{n\ell m}\in\mathcal{D}(H_1^2)$ and we can just use Thm.~\ref{thm:trotter_thm}\@. 
That is, by setting $g=0$, we just have to bound $\|H_1^2 \Psi_{n\ell m}\|$ for
\begingroup
\allowdisplaybreaks
\begin{align*}
[\xi_{N,1}(t;\Psi_{n\ell m})]^2
&\le
\frac{t^4}{4\hbar^4N^2}\|H_1^2 \mathcal{R}_{n\ell}\|^2
=
\frac{\tilde{t}^{\,4}}{64N^2}
\int_0^\infty\rmd u\,
u^8
\Xi_{n\ell}(u),
\quad\ell\ge2.
\intertext{By bounding the Gegenbauer polynomial using Eq.~(\ref{eqn:GegenbauerBound}), we get}
[\xi_{N,1}(t;\Psi_{n\ell m})]^2
&\le{}
\frac{\tilde{t}^{\,4}}{64N^2}
\frac{4^{2\ell+3}n^2(\ell!)^2(n+\ell)!}{2\pi[(2\ell+1)!]^2(n-\ell-1)!}
\int_0^\infty\rmd u\,
u^8
\frac{(n^2u^2)^{\ell+1}}{(n^2u^2+1)^{2\ell+4}}
\nonumber\\
&={}
\frac{\tilde{t}^{\,4}}{16N^2}
\frac{(n+\ell)!\beta(\ell-3/2,\ell+11/2)}{n^7(n-\ell-1)![\Gamma(\ell+3/2)]^2},
\quad\ell\ge2,
\end{align*}
\endgroup
where $\beta(x,y)=\Gamma(x)\Gamma(y)/\Gamma(x+y)$ is the beta function.

\subsection{Kinetic term for $\ell=0,1$}
For $\ell=0,1$, we have diverging fourth moment $\|H_1^2 \Psi_{n\ell m}\|=\infty$, and $\Psi_{n\ell m}\not\in\mathcal{D}(H_1^2)$.
We hence need to estimate the finer bound, following Lemma~\ref{lem:FractionalScaling}\@. 
The spectral measure $\nu_{1,\Psi_{n\ell m}}$ of the kinetic-energy operator $H_1$ at $\Psi_{n\ell m}$ is absolutely continuous, 
$\rmd \nu_{1,\Psi_{n\ell m}}(\lambda)=\rho_{1,\Psi_{n\ell m}}(\lambda)\,\rmd\lambda$,
and is obtained from the momentum distribution of the eigenstate of the hydrogen atom~\eqref{eq:momentum_distribution} as
\begin{equation*}
\rho_{1,\Psi_{n\ell m}}(\lambda)\,\rmd\lambda
=
\Xi_{n\ell}(u)\,\rmd u
=
\frac{4^{2\ell+3}n^2(\ell!)^2(n-\ell-1)!}{2\pi(n+\ell)!}
\frac{(2n^2\lambda/\Lambda)^{\ell+1}}{(2n^2\lambda/\Lambda+1)^{2\ell+4}}
\left[
C_{n-\ell-1}^{\ell+1}\!\left(
\frac{2n^2\lambda/\Lambda-1}{2n^2\lambda/\Lambda+1}
\right)
\right]^2
\frac{\rmd\lambda}{\Lambda(2\lambda/\Lambda)^{1/2}},
\end{equation*}
where $\lambda=(\hbar^2/2m_\mathrm{e}a_0^2)u^2$ and
$\Lambda$ is the energy scale defined in Eq.~(\ref{eq:EnergyScale}).
By Eq.~\eqref{eqn:GegenbauerBound} and
$$
(2n^2\lambda/\Lambda)^{\ell+1}/(2n^2\lambda/\Lambda+1)^{2\ell+4}\le(2n^2\lambda/\Lambda)^{-\ell-3},
$$
we find that
\begin{equation}
\rho_{1,\Psi_{n\ell m}}(\lambda)
\le
\frac{2^{3\ell+3/2}(\ell!)^2(n+\ell)!}{\pi n^{2\ell+4}[(2\ell+1)!]^2(n-\ell-1)!}
\frac{\Lambda^{\ell+5/2}}{\lambda^{\ell+7/2}}.
\label{eqn:Tail1}
\end{equation}
This implies the decay condition of Lemma~\ref{lem:FractionalScaling} as
\begin{equation}
	\nu_{1,\Psi_{n\ell m}}\bm{(}(\lambda,+\infty)\bm{)} = \int_\lambda^\infty \rho_{1,\Psi_{n\ell m}}(x)\,\rmd x \leq \left(\frac{\lambda_1}{\lambda}\right)^{2\delta_1},
\label{eqn:lambda1}
\end{equation}
with
\begin{equation*}
\delta_1
=\frac{1}{2}\ell+\frac{5}{4}
=\begin{cases}
\medskip
\displaystyle
\frac{5}{4}, \quad&\ell=0,\\
\displaystyle
\frac{7}{4}, \quad&\ell=1.
\end{cases}
\end{equation*}
The bound~\eqref{eqn:fractional_bound} from Lemma~\ref{lem:FractionalScaling} then  yields
\begin{equation}
\Delta_1(t/N)
\le
\frac{2^{3\ell+1}(\ell!)^2(n+\ell)!}{n^{2\ell+4}[(2\ell+1)!]^2(n-\ell-1)!}
\frac{2\ell+3}{\Gamma(\ell+7/2)}
\left(
\frac{\tilde{t}}{N}
\right)^{\ell+5/2},
\label{eqn:HydrogenBound1}
\end{equation}
so that
\begin{equation*}
	\xi_{N,1}(t;\Psi_{n\ell m})
	\leq
	\frac{\tilde{t}^{\,\ell/2+5/4}}{N^{\ell/2+1/4}}
	\sqrt{
	\frac{2^{3\ell+1}(\ell!)^2(n+\ell)!}{n^{2\ell+4}[(2\ell+1)!]^2(n-\ell-1)!}
	\frac{2\ell+3}{\Gamma(\ell+7/2)}
	},
	\quad\ell=0,1.
\end{equation*}

\subsection{Potential term for $\ell\geq 1$}
For $\ell\geq 1$, we have $\Psi_{n\ell m}\in\mathcal{D}(H_2^2)$ and can therefore apply Thm.~\ref{thm:trotter_thm}\@.
We know from Ref.~\cite{Pauling1985} that for $\ell\ge0$
\begingroup
\allowdisplaybreaks
\begin{align*}
\langle\Psi_{n\ell m}|\frac{1}{r}|\Psi_{n\ell m}\rangle & =\frac{1}{a_{0}n^{2}},\\
\langle\Psi_{n\ell m}|\frac{1}{r^{2}}|\Psi_{n\ell m}\rangle & =\frac{1}{a_{0}^{2}(\ell+1/2)n^{3}},
\intertext{and for $\ell\geq1$,}
\langle\Psi_{n\ell m}|\frac{1}{r^{3}}|\Psi_{n\ell m}\rangle & =\frac{1}{a_{0}^{3}\ell(\ell+1)(\ell+1/2)n^{3}},\\
\langle\Psi_{n\ell m}|\frac{1}{r^{4}}|\Psi_{n\ell m}\rangle& =\frac{3n^2-\ell(\ell+1)}{2a_{0}^{4}\ell(\ell+1)(\ell+1/2)(\ell-1/2)(\ell+3/2)n^5}.
\end{align*}
\endgroup
It is worth remarking that the last two moments diverge when $\ell=0$.
This is because s-orbitals have high probabilities for the electron
to be close to the center, where the potential diverges. 
Using these moments, we can compute for $\ell\geq1$
\begin{align*}
\|(H_{2}-E_{n})^{2}\Psi_{n\ell m}\|^{2}
={}&\langle\Psi_{n\ell m}|(H_{2}-E_{n})^{4}|\Psi_{n\ell m}\rangle\\
={}&\langle\Psi_{n\ell m}|H_{2}^{4}|\Psi_{n\ell m}\rangle
-4E_{n}\langle\Psi_{n\ell m}|H_{2}^{3}|\Psi_{n\ell m}\rangle
\\
&{}
+6E_{n}^{2}\langle\Psi_{n\ell m}|H_{2}^{2}|\Psi_{n\ell m}\rangle
-4E_{n}^{3}\langle\Psi_{n\ell m}|H_{1}|\Psi_{n\ell m}\rangle+E_{n}^{4}
\end{align*}
to get
\begin{align*}
& \xi_{N,2}(t;\Psi_{n\ell m})\\
&\quad\le
\frac{\tilde{t}^{\,2}}{2N}\sqrt{\frac{16n^5-32(\ell-1/2)(\ell+3/2)n^3+24\ell(\ell+1)(\ell-1/2)(\ell+3/2) n-7\ell(\ell+1)(\ell+1/2)(\ell-1/2)(\ell+3/2)}{16\ell(\ell+1)(\ell+1/2)(\ell-1/2)(\ell+3/2)n^8}}.
\end{align*}

\subsection{Potential term for $\ell=0$}
For $\ell=0$, we have a diverging fourth moment $\|H_2^2\Psi_{n00}\|=\infty$, and hence $\Psi_{n00}\not\in\mathcal{D}(H_2^2)$.
Therefore, we need to estimate the finer bound from Lemma~\ref{lem:FractionalScaling}\@.
The spectral measure $\nu_{2,\Psi_{n\ell m}}$ of the potential-energy operator $H_2$ at $\Psi_{n\ell m}$ is absolutely continuous, 
$\rmd \nu_{2,\Psi_{n\ell m}}(\lambda)=\rho_{2,\Psi_{n\ell m}}(\lambda)\,\rmd\lambda$, and is obtained from the radial wave function~(\ref{eq:hydrogen_eigenfunctions}) as
\begin{equation*}
\rho_{2,\Psi_{n\ell m}}(\lambda)\,\rmd\lambda
=
[\mathcal{R}_{n\ell}(u)]^2u^2\,\rmd u
=
\left(\frac{2}{n}\right)^3
\frac{(n-\ell-1)!}{2n(n+\ell)!}
\rme^{-2\Lambda/n\lambda}
(2\Lambda/n\lambda)^{2\ell}
[L_{n-\ell-1}^{(2\ell+1)}(2\Lambda/n\lambda)]^2
(\Lambda/\lambda)^4\,\frac{\rmd\lambda}{\Lambda},
\end{equation*}
where $\lambda=\hbar^2/(m_\mathrm{e}a_0^2u)$, and $\Lambda$ is the energy scale defined in Eq.~(\ref{eq:EnergyScale}).
By bounding the Laguerre polynomial using Eq.~(\ref{eqn:LaguerreBound}),
we find
\begin{equation}
\rho_{2,\Psi_{n\ell m}}(\lambda)
\le
\frac{2^{2\ell+2}(n+\ell)!}{n^{2\ell+4}[(2\ell+1)!]^2(n-\ell-1)!}
\frac{\Lambda^{2\ell+3}}{\lambda^{2\ell+4}}.
\label{eqn:Tail2}
\end{equation}
This implies the decay condition of Lemma~\ref{lem:FractionalScaling} as 
\begin{equation}
	\nu_{2,\Psi_{n\ell m}}\bm{(}(\lambda,+\infty)\bm{)} = \int_\lambda^\infty \rho_{2,\Psi_{n\ell m}}(x)\,\rmd x \leq \left(\frac{\lambda_2}{\lambda}\right)^{2\delta_2},
\label{eqn:lambda2}
\end{equation}
with
\begin{equation*}
\delta_2
=\ell+\frac{3}{2}
=
\frac{3}{2},\quad\ell=0.
\end{equation*}
The bound~\eqref{eqn:fractional_bound} of Lemma~\ref{lem:FractionalScaling} then yields
\begin{equation}
\Delta_2(t/N)
\le
\frac{4\pi}{3n^3}
\left(
\frac{\tilde{t}}{N}
\right)^3,\quad\ell=0.
\label{eqn:HydrogenBound2}
\end{equation}
Thus, $\xi_{N,2}(t,\Psi_{n00})$ is bounded by
\begin{align*}
\xi_{N,2}(t;\Psi_{n00})
\le{}&
\sqrt{\frac{4\pi}{3n^3}} \frac{\tilde{t}^{\,3/2}}{N^{1/2}},\quad\ell=0.
\end{align*}

\subsection{Potential term with a shift}
According to Thm.~\ref{thm:trotter_thm}, we need to shift the potential Hamiltonian $H_2$ by the eigenvalue $E_n$ of the respective input eigenstate $\Psi_{n\ell m}$ to get the bound on the Trotter error.
This can be done as follows
\begingroup
\allowdisplaybreaks
\begin{align}
\xi_{N,j}(t;\Psi_{n\ell m})
={}& 
N\left\|\left(\rme^{\frac{\rmi}{\hbar}\frac{t}{N}(H_j-g_j)} -I - \frac{\rmi}{\hbar}\frac{t}{N}(H_j-g_j)\right)\Psi_{n\ell m}\right\|
\nonumber\\
={}& 
N\left\|
\left[
\left(
\rme^{\frac{\rmi}{\hbar}\frac{t}{N}H_j} -I 
-\frac{\rmi}{\hbar}\frac{t}{N}H_j
\right)
+
\left(
\rme^{\frac{\rmi}{\hbar}\frac{t}{N}(H_j-g_j)}-\rme^{\frac{\rmi}{\hbar}\frac{t}{N}H_j}
+
\frac{\rmi}{\hbar}\frac{t}{N}g_j
\right)
\right]
\Psi_{n\ell m}\right\|
\nonumber\\
={}&
N\left\|
\left[
\left(
\rme^{\frac{\rmi}{\hbar}\frac{t}{N}H_j} -I 
-\frac{\rmi}{\hbar}\frac{t}{N}H_j
\right)
+
\left(
\rme^{\frac{\rmi}{\hbar}\frac{t}{N}H_j}
(
\rme^{-\frac{\rmi}{\hbar}\frac{t}{N}g_j}
-1
)
+
\frac{\rmi}{\hbar}\frac{t}{N}g_j
\right)
\right]
\Psi_{n\ell m}\right\|
\nonumber\\
={}&
N\,\biggl\|
\left(
\rme^{\frac{\rmi}{\hbar}\frac{t}{N}H_j} -I 
-\frac{\rmi}{\hbar}\frac{t}{N}H_j
\right)
\Psi_{n\ell m}
{}+
\left[
\rme^{\frac{\rmi}{\hbar}\frac{t}{N}H_j}
\left(
\rme^{-\frac{\rmi}{\hbar}\frac{t}{N}g_j}
-1
+
\frac{\rmi}{\hbar}\frac{t}{N}g_j
\right)
-
\frac{\rmi}{\hbar}\frac{t}{N}g_j
(
\rme^{\frac{\rmi}{\hbar}\frac{t}{N}H_j}
-1
)
\right]
\Psi_{n\ell m}
\biggr\|
\nonumber\\
\le{}&
N\left\|
\left(
\rme^{\frac{\rmi}{\hbar}\frac{t}{N}H_j} -I 
-\frac{\rmi}{\hbar}\frac{t}{N}H_j
\right)
\Psi_{n\ell m}
\right\|
+
N
\left|
\rme^{-\frac{\rmi}{\hbar}\frac{t}{N}g_j}
-1
+
\frac{\rmi}{\hbar}\frac{t}{N}g_j
\right|
+
\frac{t}{\hbar}|g_j|
\|
(
\rme^{\frac{\rmi}{\hbar}\frac{t}{N}H_j}
-I
)
\Psi_{n\ell m}
\|
\nonumber\\
\le{}&
N\left\|
\left(
\rme^{\frac{\rmi}{\hbar}\frac{t}{N}H_j} -I 
-\frac{\rmi}{\hbar}\frac{t}{N}H_j
\right)
\Psi_{n\ell m}
\right\|
+
\frac{g_j^2t^2}{2N\hbar^2}
+
\frac{|g_j|t^2}{N\hbar^2}\|H_j\Psi_{n\ell m}\|.
\label{eq:first_order_with shift}
\end{align}
\endgroup
For instance, for the potential term $H_2$ with the shift $g_2=E_n=-\hbar^2/2m_\mathrm{e}a_0^2n^2$, we get
\begin{align*}
\xi_{N,2}(t;\Psi_{n\ell m})
={}&
N\left\|
\left(
\rme^{\frac{\rmi}{\hbar}\frac{t}{N}(H_2-E_n)} -I 
-\frac{\rmi}{\hbar}\frac{t}{N}(H_2-E_n)
\right)
\Psi_{n\ell m}
\right\|
\nonumber\\
\le{}&
\begin{cases}
\medskip
\displaystyle
\frac{\tilde{t}^{\,3/2}}{N^{1/2}}
\sqrt{
\frac{4\pi}{3n^3}
}
+
\frac{\tilde{t}^{\,2}}{2N}
\left(
\sqrt{\frac{2}{n^{7/2}}}
+
\frac{1}{4n^4}
\right),
&\ell=0,\\
\displaystyle
\frac{\tilde{t}^{\,2}}{2N}
\left(
\sqrt{
\frac{3-\ell(\ell+1)/n^2}{2n^3\ell(\ell+1)(\ell+1/2)(\ell+3/2)(\ell-1/2)}
}
+
\frac{1}{n^{7/2}\sqrt{\ell+1/2}}
+
\frac{1}{4n^4}
\right),
&\ell\ge1.
\end{cases}
\end{align*}
These yield the bounds~\eqref{eq:bound_R_n0}--\eqref{eq:bound_R_nl} presented in Sec.~\ref{sec:trotter_thm} of the Methods section.

\subsection{Hydrogen ground state}
A particularly interesting case is the ground state of the hydrogen atom, $n=1$, $\ell=0$, for which we have
\begin{equation}
	\xi_{N,1}(t;\Psi_{100})\le \frac{\tilde{t}^{\,5/4}}{N^{1/4}}
\frac{4}{\sqrt{5\sqrt{\pi}}},
\label{eqErrorBoundhydrogenGround1}
\end{equation}
and
\begin{equation}
	\xi_{N,2}(t;\Psi_{100})\le 
	\frac{\tilde{t}^{\,3/2}}{N^{1/2}}
	\sqrt{
\frac{4\pi}{3}
}
+
\frac{\tilde{t}^{\,2}}{2N}
\left(
\sqrt{2}
+
\frac{1}{4}
\right).
\label{eqErrorBoundhydrogenGround2}
\end{equation}
Notice that the total error is the sum of both. As we can see, the dominating part in the bound is of the order $\mathcal{O}(N^{-1/4})$.

\subsection{Remark on the tightness of the bounds obtained from Lemma~\ref{lem:FractionalScaling}}
For these hydrogen examples, the bounds~\eqref{eqn:HydrogenBound1} and~\eqref{eqn:HydrogenBound2} on $\Delta_j(s)$ are asymptotically tight for $s\to0$.
Indeed, the uniform bounds~\eqref{eqn:Tail1} and~\eqref{eqn:Tail2} on the spectral densities $\rho_{j,\Psi_{n\ell m}}(\lambda)$ allow us to apply Lebesgue's dominated convergence theorem to $\Delta_j(s)/s^{2\delta_j}$ as
\begin{align*}
\lim_{s\to0}\frac{1}{s^{2\delta_j}}\Delta_j(s)
&=
\lim_{s\to0}
\int_0^\infty f(x)
\frac{\rho_{j,\varphi}(x/s)}{s^{2\delta_j+1}} 
\,\rmd x
\nonumber\\
&=
\int_0^\infty f(x)
\lim_{s\to0}
\frac{\rho_{j,\varphi}(x/s)}{s^{2\delta_j+1}} 
\,\rmd x
\nonumber\\
&=
2\delta_j (\lambda_j)^{2\delta_j}\int_0^\infty 
\frac{f(x)}{x^{2\delta_j+1}} 
\,\rmd x
\nonumber\\
&=
\frac{\pi(\lambda_j)^{2\delta_j}}{\Gamma(2\delta_j-1)\sin[(\delta_j-1)\pi]},
\end{align*}
where $\lambda_j$ and $\delta_j$ ($j=1,2$) are introduced in Eqs.~(\ref{eqn:lambda1}) and~(\ref{eqn:lambda2}).
This shows that $\Delta_j(s)$ asymptotically approach the bounds.
Nevertheless, this does not necessarily lead to a tight bound on the convergence speed of the corresponding Trotter formula.

\section{Higher-order Trotter bounds and proofs of Thm.~\ref{thm:2nd_order}, Thm.~\ref{thm:second_order_alpha}, and Thm.~\ref{thm:loose_bound_higher_order}}
\label{sec:higher_order_derivation}
In this appendix, we show how to obtain state-dependent bounds for arbitrary higher-order Trotter product formulas.
We develop a general formalism, which we then apply to the first- and second-order Trotterizations (Thm.~\ref{thm:trotter_thm} and Thm.~\ref{thm:2nd_order}).
Furthermore, we show how to obtain a loose state-dependent error bound for a general $p$th-order product formula (Thm.~\ref{thm:loose_bound_higher_order}).
This follows the proofs of the error bounds with fractional scalings (Thm.~\ref{thm:first_order_alpha} and Thm.~\ref{thm:second_order_alpha}).
These proofs are based on the formalism presented before, and our proof method generalizes iteratively to the $p$th-order product formulas with errors scaling between $N^{-q+1}$ and $N^{-q}$, $q\in(1,p)$.
We explicitly carry out this computation for the cases $q=1$ and $q=2$.

\subsection{General formalism}
A general $p$th-order Trotter product, which aims to approximate the evolution $\rme^{-\rmi t(H_1+H_2)}$, is given in the form
\begin{equation*}
	\mathcal{S}_N^{(p)}(t)=\left(\rme^{-\rmi\frac{t}{N}\tau_MH_{\sigma(M)}}\cdots \rme^{-\rmi\frac{t}{N}\tau_2H_{\sigma(2)}}\rme^{-\rmi\frac{t}{N}\tau_1H_{\sigma(1)}}\right)^N,
\end{equation*}
with $\sum_{i=1}^{\lceil M/2\rceil}\tau_{2i-1}=\sum_{i=1}^{\lfloor M/2\rfloor}\tau_{2i}=1$ and $\sigma(j)$ defined in Eq.~(\ref{eq:parity}).
We wish to estimate the error of this approximation on an input eigenstate $\varphi$ of $H_1+H_2$.
Without loss of generality, we shift the Hamiltonians $H_1$ and $H_2$ so that we have $(H_1+H_2)\varphi=0$.
Then, the Trotter product $\mathcal{S}_N^{(p)}(t)$ is to be compared with the identity $I$ on the input eigenstate $\varphi$.
As in the proofs of Thm.~\ref{thm:trotter_thm} and Thm.~\ref{thm:first_order_alpha}, it suffices to study a single Trotter cycle, 
\begin{equation}
	\mathcal{S}_1^{(p)}(t/N)=\rme^{-\rmi\frac{t}{N}\tau_MH_{\sigma(M)}}\cdots \rme^{-\rmi\frac{t}{N}\tau_2H_{\sigma(2)}}\rme^{-\rmi\frac{t}{N}\tau_1H_{\sigma(1)}},
	\label{eq:1_trotter_cycle_appendix}
\end{equation}
instead of the full Trotter evolution.
This is because we can use a telescoping sum to obtain a bound on $\mathcal{S}_N^{(p)}(t)$ from a bound on $\mathcal{S}_1^{(p)}(t/N)$, that is
\begin{equation*}
	\mathcal{S}_N^{(p)}(t) - I =
	[\mathcal{S}_1^{(p)}(t/N)]^N - I = \sum_{k=0}^{N-1} [\mathcal{S}_1^{(p)}(t/N)]^k [\mathcal{S}_1^{(p)}(t/N) - I] ,
\end{equation*}
whence
\begin{equation}
\xi_N^{(p)}(t)=\Vert[\mathcal{S}_N^{(p)}(t)-I] \varphi\Vert \le N\Vert[\mathcal{S}_1^{(p)}(t/N)-I] \varphi\Vert
=N\xi_1^{(p)}(t/N)
	\label{eq:telescope_higher_order}
\end{equation}
follows from the triangle inequality and the unitary equivalence of the Euclidean norm.

The idea of our approach is to treat the Trotterized evolution $\mathcal{S}_1^{(p)}(t/N)$ as a unitary evolution
\begin{equation*}
	U^{(p)}(s)=\T\exp\!\left(
	-\rmi\int_0^s\rmd u\,H^{(p)}(u)
	\right)
\end{equation*}
generated by a piecewise-constant time-dependent Hamiltonian $H^{(p)}(s)$~\cite{Burgarth2022}.
For instance, in the case of the first-order Trotter product, we regard it as a unitary evolution generated by a time-dependent Hamiltonian $H^{(1)}(s)$ in which the Hamiltonian is switched as~\cite{Burgarth2023}
\begin{equation*}
	H^{(1)}(s)=\begin{cases}
	\medskip
	\displaystyle
		H_1,&s\in\left[0,\frac{t}{N}\right),\\
	\displaystyle
		H_2,&s\in\left[\frac{t}{N},\frac{2t}{N}\right).
	\end{cases}
\end{equation*}
In the case of the second-order Trotter product, 
\begin{equation*}
	H^{(2)}(s)=\begin{cases}
	\medskip
	\displaystyle
		\frac{1}{2} H_1,&s\in\left[0,\frac{t}{N}\right),\\
	\medskip
	\displaystyle
		H_2,&s\in\left[\frac{t}{N},\frac{2t}{N}\right),\\
	\displaystyle
		\frac{1}{2} H_1,&s\in\left[\frac{2t}{N},\frac{3t}{N}\right).
	\end{cases}
\end{equation*}
In the case of a general $p$th-order Trotter product, some of the $\tau_j$ can be negative.
In fact, Suzuki showed that every $p$th-order product formula of order $p\geq3$ has to have at least one negative $\tau_j$~\cite[Thm.~3]{Suzuki1991}.
If $\tau_j$ is negative, the Hamiltonian $H^{(p)}(s)$ in the $j$th time slot is
$\tau_j H_{\sigma(j)}= -|\tau_j| H_{\sigma(j)}$ for time duration
$t/N$, and we have
\begin{equation}
	H^{(p)}(s)=\begin{cases}
	\medskip
	\displaystyle
		\tau_1 H_{\sigma(1)},&s\in\left[0,\frac{t}{N}\right),\\
	\medskip
	\displaystyle
		\cdots&\cdots\\
	\medskip
	\displaystyle
		\tau_j H_{\sigma(j)},&s\in\left[\frac{(j-1)t}{N},\frac{jt}{N}\right),\\
	\medskip
	\displaystyle
		\cdots&\cdots\\
	\displaystyle
		\tau_M H_{\sigma(M)},&s\in\left[\frac{(M-1)t}{N},\frac{Mt}{N}\right).
	\end{cases}
\label{eq:piecesise-constant_Hamiltonian}
\end{equation}
Then, the evolution $U^{(p)}(s)$ generated by this piecewise-constant time-dependent Hamiltonian $H^{(p)}(s)$ yields  
$U^{(p)}(M t/N)=\mathcal{S}_1^{(p)}(t/N)$ at time $s=T_M=M t/N$.
The Hamiltonian $H^{(p)}(s)$ is extended periodically from one Trotter cycle to the full Trotter product by $H^{(p)}(s+T_M)=H^{(p)}(s)$.

To bound the state-dependent Trotter error $\xi_1^{(p)}(t/N)$, we use the following Dyson expansion up to the $q$th order.
\begin{lemma}[Remainder of Dyson expansion]\label{lem:error_rep_higher_order}
	Let  $t \mapsto H(t)$ be a piecewise-constant Hamiltonian in $[0,s]$.
	That is, there exists a partition $0=t_0<t_1<\dots<t_L=s$
of $[0,s]$, such that $H(t)=h_\ell$ for $t\in [t_{\ell-1},t_\ell)$, for $\ell=1,\ldots,L$ [and $H(s)=h_L$], with $h_\ell=h_\ell^\dag$. Let 
\begin{equation*}
U(u)=\T\exp\!\left(-\rmi\int_0^u\rmd s_1\,H(s_1)\right),
\end{equation*}
for $u\in[0,s]$, be the unitary propagator generated by $H(t)$. 

Let $q$ be a positive integer, and $\varphi\in\mathcal{D}(h_{\ell_1}h_{\ell_2} \cdots  h_{\ell_q} h_{\ell_{q+1}})$ for all $L\geq \ell_1\geq \ell_2 \geq \cdots \geq \ell_{q+1} \geq 1$.
Define the $k$th-order integral action on $\varphi$,
\begin{equation}
	S_k(u) \varphi =\int_0^u\rmd s_1\cdots\int_0^{s_{k-1}}\rmd s_k\, H(s_1)\cdots H(s_k) \varphi,\quad k=1,\ldots,q,
	\label{eq:action_appendix}
\end{equation}
for all $u\in[0,s]$.
Then, one has
\begin{align}
U(s)\varphi=\varphi+\sum_{k=1}^q(-\rmi)^kS_k(s)\varphi+(-\rmi)^{q+1}\int_0^s\rmd u\,U(s)U^\dag(u)H(u)S_q(u)\varphi.
\label{eq:lemma_error_operator}
\end{align}
\end{lemma}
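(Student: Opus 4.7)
My plan is to prove the identity by induction on $q$, with the single inductive mechanism being an integration by parts. Define the two-time propagator $U(s,u):=U(s)U^\dag(u)$ and differentiate $U(u)U^\dag(u)=I$ to obtain the kinematic identity
\begin{equation*}
\frac{\rmd}{\rmd u}U(s,u) = \rmi\, U(s,u) H(u).
\end{equation*}
The base case $q=0$ (interpreting $S_0=I$) is then the integrated form of this equation applied to $\varphi$, namely $U(s)\varphi-\varphi=-\rmi\int_0^s U(s,u)H(u)\varphi\,\rmd u$, which follows at once from $U(s,s)-U(s,0)=I-U(s)$.

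For the inductive step I would combine the above identity with $\frac{\rmd}{\rmd u}S_{q+1}(u)=H(u)S_q(u)$, which is immediate from the nested-integral definition~\eqref{eq:action_appendix}. Writing $U(s,u)H(u)S_q(u)\varphi=U(s,u)\frac{\rmd}{\rmd u}[S_{q+1}(u)\varphi]$ and integrating by parts gives
\begin{equation*}
\int_0^s U(s,u)H(u)S_q(u)\varphi\,\rmd u = S_{q+1}(s)\varphi - \rmi\int_0^s U(s,u)H(u)S_{q+1}(u)\varphi\,\rmd u,
\end{equation*}
where the boundary terms collapse to $S_{q+1}(s)\varphi$ using $U(s,s)=I$ and $S_{q+1}(0)=0$. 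Multiplying through by $(-\rmi)^{q+1}$ converts the order-$q$ remainder into the $(q+1)$th term of the expansion plus an order-$(q+1)$ remainder of exactly the asserted form, closing the induction.

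The main obstacle will be the domain bookkeeping, since the $h_\ell$ are generically unbounded. The piecewise-constant structure of $H$ is what makes this tractable: on each partition subinterval, $S_k(u)\varphi$ reduces to a finite sum of explicit operator products $h_{\ell_1}\cdots h_{\ell_k}\varphi$, with the decreasing ordering $\ell_1\geq\cdots\geq\ell_k$ forced by the time ordering $s_1\geq\cdots\geq s_k$ together with the monotonicity of the partition, weighted by polynomial coefficients in $u$ and the breakpoints. The stated hypothesis on $\varphi$ is then exactly what is needed to interpret $H(u)S_q(u)\varphi$ pointwise as a vector, to guarantee strong Bochner integrability on $[0,s]$, and to justify the integration by parts within each subinterval; continuity of $u\mapsto U(s,u)S_{q+1}(u)\varphi$ at the interior breakpoints further ensures that no spurious boundary contributions appear from the finite partition. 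The unitarity of $U(s,u)$ keeps everything bounded from the outside, so no further domain condition is required.
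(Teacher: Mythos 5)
Your proposal is correct and follows essentially the same route as the paper's first proof: iterated integration by parts on the remainder term, driven by the backward-equation identity $\frac{\rmd}{\rmd u}\bigl[U(s)U^\dag(u)\bigr]=\rmi\,U(s)U^\dag(u)H(u)$ together with $\frac{\rmd}{\rmd u}S_{k+1}(u)\varphi=H(u)S_k(u)\varphi$, with the same domain bookkeeping justified by the piecewise-constant structure. Phrasing it as an induction on $q$ rather than "iterate $q$ times" is only a cosmetic difference.
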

\begin{remark}[Remainder of Taylor expansion]\label{rmk:TaylorExpansion}
For a constant Hamiltonian $H$, this lemma provides a remainder term of the Taylor expansion of the evolution operator, for $\varphi\in \mathcal{D}(H^{q+1})$:
\begin{equation}
\rme^{-\rmi sH}\varphi
=\varphi+\sum_{k=1}^q\frac{1}{k!}(-\rmi sH)^k\varphi
+(-\rmi)^{q+1}\int_0^s\rmd u\,\rme^{-\rmi(s-u)H}\frac{1}{q!}u^qH^{q+1}\varphi.
\label{eq:TaylorExpansion}
\end{equation}
It is already used in Eq.~(\ref{eq:Taylor_first}) and will be used later.
\end{remark}
We give two different derivations of the Dyson expansion~(\ref{eq:lemma_error_operator}).
The first is a generalization of the integration-by-part strategy developed in Ref.~\cite{Burgarth2022}.
\begin{proof}[Proof 1 of Lemma~\ref{lem:error_rep_higher_order}]
Notice first that, by assumption, $H(s_1)\cdots H(s_k) \varphi$ is well defined and piecewise continuous in all variables for all $k=1,\ldots,q$, whence the integral actions $S_k(u)\varphi$ are piecewise differentiable. Moreover, $\varphi \in \mathcal{D}\bm{(}H(u) S_k(u)\bm{)}$ for all $u\in[0,s]$.
The difference between $U(s)$ and the identity $I$ on $\varphi$ can be estimated by~\cite{Burgarth2022}
\begin{equation*}
[U(s)-I]\varphi
=-\biggl[
U(s)U^\dag(u)
\biggr]_{u=0}^{u=s}
\varphi
=-\int_0^s\rmd u\,
\frac{\partial}{\partial u}[
U(s)U^\dag(u)
]
\varphi
=-\rmi\int_0^s\rmd u\,
U(s)U^\dag(u)H(u)\varphi,
\end{equation*}
where we have used that $U^\dag(u)\varphi$ is piecewise continuously differentiable with  
\begin{equation*}
\frac{\rmd}{\rmd u}U^\dag(u)\varphi=\rmi U^\dag(u)H(u)\varphi.
\end{equation*}
Since $H(u)\varphi =\dot{S}_1(u)\varphi$ except at $u=t_k$, by integrating by parts one gets
\begin{align*}
[U(s)-I]\varphi
&=-\rmi\int_0^s\rmd u\,
U(s)U^\dag(u)\dot{S}_1(u)\varphi
\nonumber\\
&=-\rmi S_1(s)\varphi
+(-\rmi)^2\int_0^s\rmd u\,
U(s)U^\dag(u)H(u)S_1(u)\varphi.
\nonumber
\intertext{By iterating the integration by parts $(q-1)$ more times, we get}
[U(s)-I]\varphi
&=
-\rmi S_1(s)\varphi
+(-\rmi)^2\int_0^s\rmd u\,
U(s)U^\dag(u)\dot{S}_2(u)\varphi
\nonumber\\
&=
-\rmi S_1(s)\varphi
+(-\rmi)^2S_2(s)\varphi
+(-\rmi)^3\int_0^s\rmd u\,
U(s)U^\dag(u)H(u)S_2(u)\varphi
\nonumber\\
&=\cdots
\nonumber\\
&=
-\rmi S_1(s)\varphi
+\cdots
+(-\rmi)^qS_q(s)\varphi
+(-\rmi)^{q+1}\int_0^s\rmd u\,
U(s)U^\dag(u)H(u)S_q(u)\varphi.
\nonumber
\end{align*}
This proves the lemma.
\end{proof}
An alternate derivation of Lemma~\ref{lem:error_rep_higher_order} is to solve the backward Schr\"odinger equation
\begin{equation}
\frac{\partial}{\partial u}U(s,u) \varphi =\rmi U(s,u)H(u)\varphi,\qquad
U(s,s)=I,
\label{eq:BackwardSchrodingerEq}
\end{equation}
 for $U(s,u)=U(s)U^\dag(u)$, iteratively.
\begin{proof}[Proof 2 of Lemma~\ref{lem:error_rep_higher_order}]
The derivative of the unitary propagator on $\varphi$ in Eq.~\eqref{eq:BackwardSchrodingerEq} is piecewise continuous. By integrating the Schr\"odinger equation~\eqref{eq:BackwardSchrodingerEq}, one has
\begin{equation*}
U(s,u)\varphi=\varphi-\rmi\int_u^s\rmd u_1\,U(s,u_1)H(u_1)\varphi.
\end{equation*}
By iteration,
\begin{align*}
U(s,u)\varphi
={}&\varphi
-\rmi\int_u^s\rmd u_1\,H(u_1)\varphi
+(-\rmi)^2\int_u^s\rmd u_1\int_{u_1}^s\rmd u_2\,U(s,u_2)H(u_2)H(u_1)\varphi\\
={}&\cdots\\
={}&\varphi
+\sum_{k=1}^q
(-\rmi)^k\int_u^s\rmd u_1\cdots\int_{u_{k-1}}^s\rmd u_k\,H(u_k)\cdots H(u_1)\varphi
\nonumber\\
&\hphantom{\varphi}
{}+(-\rmi)^{q+1}\int_u^s\rmd u_1\cdots\int_{u_q}^s\rmd u_{q+1}\,U(s,u_{q+1})H(u_{q+1})\cdots H(u_1)\varphi.
\intertext{By reverting the order of the integrations,}
U(s,u)\varphi
={}&\varphi
+\sum_{k=1}^q
(-\rmi)^k\int_u^s\rmd u_k\int_u^{u_k}\rmd u_{k-1}\cdots\int_u^{u_2}\rmd u_1\,H(u_k)\cdots H(u_1)\varphi
\nonumber\\
&\hphantom{\varphi}
{}+(-\rmi)^{q+1}\int_u^s\rmd u_{q+1}\int_u^{u_{q+1}}\rmd u_q\cdots\int_u^{u_2}\rmd u_1\,U(s,u_{q+1})H(u_{q+1})\cdots H(u_1)\varphi.
\end{align*}
By setting $u=0$, we get Eq.~(\ref{eq:lemma_error_operator}).
\end{proof}
Notice that the use of the backward Schr\"odinger equation is essential in our argument. As a matter of fact, even the forward Schr\"odinger equation itself 
\begin{align*}
\frac{\rmd}{\rmd s}U(s)\varphi=-\rmi H(s)U(s)\varphi	
\end{align*} 
does not hold, unless one requires additional strong conditions on the operators $h_k$: that they have a common domain $D$, which is left invariant under their own unitary groups, and that $\varphi\in D$. 
Moreover, in order to iterate it one needs even stronger assumptions. In any case, one would finally obtain by iteration
\begin{align*}
U(s)\varphi={}&
\varphi-\rmi\int_0^s\rmd s_1\,H(s_1)U(s_1)\varphi
\\
={}&
\varphi
+\sum_{k=1}^q(-\rmi)^kS_k(s)\varphi
+(-\rmi)^{q+1}\int_0^s\rmd s_1\cdots\int_0^{s_q}\rmd s_{q+1}\,H(s_1)\cdots H(s_{q+1})U(s_{q+1})\varphi.
\end{align*}
This leads to a different expression for the remainder term, which is not convenient for our purpose.
Indeed, the presence of the unitary $U(s_{q+1})$ on $\varphi$ before the action of the Hamiltonians makes it  difficult to bound the remainder term.
On the other hand, if the unitary comes last in the remainder term as in the expansion~\eqref{eq:lemma_error_operator}, it is harmless in bounding the remainder term, as we will see below.

We use the Dyson expansion of Lemma~\ref{lem:error_rep_higher_order} for the Trotter evolution $U^{(p)}(s)$ generated by the piecewise-constant Hamiltonian $H^{(p)}(s)$ in Eq.~(\ref{eq:piecesise-constant_Hamiltonian}), and estimate the Trotter error $\xi_1^{(p)}(t/N)=\|[U^{(p)}(T_M)-I]\varphi\|$ for the single Trotter cycle.
\begin{lemma}\label{lem:bound_pth_order}
Let $U^{(p)}(s)$ be the unitary propagator generated by the piecewise-constant Hamiltonian $H^{(p)}(s)$ in Eq.~(\ref{eq:piecesise-constant_Hamiltonian}) representing a $p$th-order Trotterized evolution with $M$ time slots in a single Trotter cycle.
The unitary reaches
$U^{(p)}(Mt/N)=\mathcal{S}_1^{(p)}(t/N)$ in Eq.~\eqref{eq:1_trotter_cycle_appendix} at time $s=T_M=Mt/N$.

Let $q$ be a positive integer $q\le p$. Let the input state $\varphi$ satisfy $(H_1+H_2)\varphi=0$, and 
$\varphi\in\mathcal{D}\bm{(}H^{(p)}(s_1)H^{(p)}(s_2)\cdots H^{(p)}(s_{q+1})\bm{)}$ for all $T_M> s_1 \geq s_2 \geq \cdots \geq s_{q+1}\geq 0$.
Then, one has
\begin{equation}
[U^{(p)}(T_M)-I]\varphi
=R^{(p)}_q \varphi
=(-\rmi)^{q+1}\int_0^{T_M}\rmd s\,U^{(p)}(T_M)U^{(p)\dag}(s)H^{(p)}(s)S_q(s)\varphi,
\label{eqn:Dyson_error_q}
\end{equation}
where $S_q(s)\varphi$ is the $q$th-order integral action of the piecewise-constant Hamiltonian $H^{(p)}(s)$
 defined in Eq.~(\ref{eq:action_appendix}).

If $q=p$, the error $\xi_N^{(p)}(t)$ of the overall $p$th-order Trotter product $\mathcal{S}_N^{(p)}(t)$ on the input state $\varphi$ is bounded by
\begin{equation}
\xi_N^{(p)}(t)
\le
N\|[U^{(p)}(T_M)-I]\varphi\|
\le 
N\sum_{j=1}^M \int_{(j-1)t/N}^{jt/N}\rmd s\,\Vert \tau_j H_{\sigma(j)}S_p(s)\varphi\Vert,
\label{eqn:error_estimate_p}
\end{equation}
which is $\mathcal{O}(t^{p+1}/N^p)$.
\end{lemma}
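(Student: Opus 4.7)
The plan is to apply Lemma~\ref{lem:error_rep_higher_order} to the Trotter propagator $U^{(p)}(s)$ with the piecewise-constant Hamiltonian $H^{(p)}(s)$ of Eq.~\eqref{eq:piecesise-constant_Hamiltonian}, and then use the defining property of a $p$th-order product formula, Eq.~\eqref{eq:vanishing_actions}, to make the explicit Dyson terms disappear on the input state $\varphi$. First I would invoke Lemma~\ref{lem:error_rep_higher_order} on the interval $[0,T_M]$ with $q\leq p$: the domain hypothesis on $\varphi$ is exactly the one required there, so the expansion
\[
U^{(p)}(T_M)\varphi=\varphi+\sum_{k=1}^{q}(-\rmi)^{k}S_{k}(T_M)\varphi+(-\rmi)^{q+1}\!\int_{0}^{T_M}\!\!\rmd s\,U^{(p)}(T_M)U^{(p)\dag}(s)H^{(p)}(s)S_{q}(s)\varphi
\]
holds. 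Then I would use Eq.~\eqref{eq:vanishing_actions}, which for a proper $p$th-order product formula gives $S_{k}(T_M)=\frac{1}{k!}(t/N)^{k}(H_1+H_2)^{k}$ for $k\leq p$. Since $q\leq p$ and $(H_1+H_2)\varphi=0$ by hypothesis, every term in the sum vanishes on $\varphi$, leaving exactly the remainder identity~\eqref{eqn:Dyson_error_q}.

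For the error estimate~\eqref{eqn:error_estimate_p}, I would specialize to $q=p$ and combine the remainder identity with the telescoping bound~\eqref{eq:telescope_higher_order}, giving $\xi_N^{(p)}(t)\leq N\,\|R^{(p)}_p\varphi\|$. Pulling the norm inside the integral (triangle/Bochner inequality) and using that $U^{(p)}(T_M)U^{(p)\dag}(s)$ is unitary and hence norm-preserving yields
\[
\|R^{(p)}_p\varphi\|\leq\int_{0}^{T_M}\rmd s\,\|H^{(p)}(s)S_{p}(s)\varphi\|.
\]
Finally, splitting the interval $[0,T_M)$ into the $M$ time slots $[(j-1)t/N,jt/N)$ on which $H^{(p)}(s)=\tau_j H_{\sigma(j)}$ is constant produces the explicit form in~\eqref{eqn:error_estimate_p}.

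The scaling $\mathcal{O}(t^{p+1}/N^{p})$ then follows by inspection of the closed form of $S_p(s)$ given in Eq.~\eqref{eq:kth_action_explicit}: within the $j$th slot, $S_p(s)$ is a finite linear combination of $p$-fold operator products, each accompanied by a prefactor of order $(t/N)^{p}$ (using $s-(j-1)t/N\leq t/N$). Multiplying by the slot length $t/N$ from the integration, summing over the $M$ slots, and multiplying by the outer $N$ from the telescoping gives a bound of order $N\cdot M\cdot(t/N)^{p+1}=\mathcal{O}(t^{p+1}/N^{p})$, as claimed.

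The main obstacle is not algebraic but rather domain-theoretic: one must verify that $S_p(s)\varphi$ lies in $\mathcal{D}(H^{(p)}(s))$ for (almost) every $s\in[0,T_M)$, so that the integrand $H^{(p)}(s)S_p(s)\varphi$ is piecewise continuous and the integral is a well-defined Bochner integral; this is precisely why the hypothesis is phrased as the nested inclusion $\varphi\in\mathcal{D}(H^{(p)}(s_1)\cdots H^{(p)}(s_{q+1}))$ for all decreasing sequences $s_1\geq\cdots\geq s_{q+1}$, and it is also what is needed to justify the successive integrations by parts in Lemma~\ref{lem:error_rep_higher_order}. A secondary subtlety is that in Eq.~\eqref{eq:vanishing_actions} the right-hand side $(H_1+H_2)^k$ must act on $\varphi$, which one must check is consistent with the domain chain supplied by the hypothesis; this is the reason one works with $(H_1+H_2)\varphi=0$ directly on $\varphi$ rather than identifying $(H_1+H_2)^k\varphi=0$ as an operator statement.
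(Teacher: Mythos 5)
Your proposal is correct and follows essentially the same route as the paper: invoke the Dyson-remainder Lemma~\ref{lem:error_rep_higher_order} for the piecewise-constant Trotter Hamiltonian, kill the explicit terms $S_k(T_M)\varphi$ for $k\le q$ via the order conditions $S_k(T_M)=\frac{1}{k!}(t/N)^k(H_1+H_2)^k$ together with $(H_1+H_2)\varphi=0$, and then bound the remainder for $q=p$ by unitarity, the triangle inequality, the slot decomposition, and the telescoping estimate~\eqref{eq:telescope_higher_order}. Your added remarks on the Bochner integrability of $H^{(p)}(s)S_p(s)\varphi$ and on the operator ordering in $(H_1+H_2)^k\varphi$ are sound refinements of points the paper treats more briefly, not a different argument.
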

\begin{proof}
Set $T_j=jt/N$. Notice first that the integral action $S_k(T_M)\varphi$ in Eq.~(\ref{eq:action_appendix}) for the piecewise-constant Hamiltonian $H^{(p)}(s)$ in Eq.~(\ref{eq:piecesise-constant_Hamiltonian}) is proportional to $t^k/N^k$.
Since a $p$th-order Trotter product $\mathcal{S}_1^{(p)}(t/N)=U^{(p)}(T_M)$ aims to approximate $\rme^{-\rmi\frac{t}{N}(H_1+H_2)}$ at $\mathcal{O}(t^{p+1}/N^{p+1})$, it formally requires that
\begin{equation}
S_k(T_M)=\frac{1}{k!}\frac{t^k}{N^k}(H_1+H_2)^k,\quad
k=1,\ldots,p.
\label{eq:action_zero}
\end{equation}
We hence have
\begin{equation*}
S_k(T_M)\varphi=0,\quad
k=1,\ldots,q,
\end{equation*}
and the Dyson expansion of Lemma~\ref{lem:error_rep_higher_order} with $q\le p$ yields Eq.~(\ref{eqn:Dyson_error_q}).
If $q=p$, then $\varphi\in\mathcal{D}\bm{(}H^{(p)}(s)S_p(s)\bm{)}$ for all $s\in[0,T_M)$, and the single-cycle Trotter error $\xi_1^{(p)}(t/N)$ is bounded by
\begin{equation*}
\xi_1^{(p)}(t/N)
=
\|[U^{(p)}(T_M)-I]\varphi\|
\le\int_0^{T_M}\rmd s\,\|H^{(p)}(s)S_p(s)\varphi\|
=
\sum_{j=1}^M \int_{T_{j-1}}^{T_j}\rmd s\, \Vert \tau_j H_{\sigma(j)}S_p(s)\varphi\Vert.
\end{equation*}
Thus, Eq.~\eqref{eq:telescope_higher_order} gives the bound~(\ref{eqn:error_estimate_p}).
\end{proof}

In the case of unbounded operators,
Eq.~(\ref{eqn:Dyson_error_q}) might only be well defined for some $q<p$.
In such a case, we have to perform a more sophisticated estimation of the error, instead of simply bounding Eq.~(\ref{eqn:Dyson_error_q}) by triangle inequality as in Eq.~(\ref{eqn:error_estimate_p}).
See the following subsections of this appendix.

In any case, to proceed, we need to explicitly compute the integral actions $S_k(s)$ defined in Eq.~\eqref{eq:action_appendix} for the piecewise-constant Hamiltonian $H^{(p)}(s)$ in Eq.~(\ref{eq:piecesise-constant_Hamiltonian}) for a $p$th-order Trotter product. We proceed formally, by temporarily assuming that all operators are bounded. For unbounded operators the expressions obtained will be valid when applied to a vector belonging to a suitable domain.

For $s\in[(j-1)t/N,jt/N)$ in the $j$th time slot, we have
	\begin{align*}
		S_k(s)
		&
		=\int_0^s\rmd s_1\cdots\int_0^{s_{k-1}}\rmd s_k\,
		H^{(p)}(s_1)\cdots H^{(p)}(s_k)
		\nonumber\\
		&
		=
		\sum_{j\geq j_1 \geq \dots \geq j_k \geq 1 }
		\int_{(j_1-1)t/N}^{\tilde{T}_{j_1}}\rmd s_1\cdots\int_{(j_k-1)t/N}^{\tilde{T}_{j_k}}\rmd s_k\,
		\tau_{j_1}H_{\sigma(j_1)}\cdots
		\tau_{j_k} H_{\sigma(j_k)},\quad
		s\in\left[\frac{(j-1)t}{N},\frac{j t}{N}\right),
\intertext{where $\tilde{T}_{j_\ell}=j_\ell t/N$ if $j_{\ell-1}>j_{\ell}$,
or $\tilde{T}_{j_\ell}=s_{\ell-1}$ if $j_{\ell-1}=j_\ell$,
 for $\ell=2,\ldots,k$, and $\tilde{T}_{j_1}=j_1 t/N$ if $j>j_1$, or $\tilde{T}_{j_1}=s$ if $j_1=j$.
By shifting the time integration in each slot, we get}
		S_k(s)
		&=
		\sum_{j\geq j_1 \geq \dots \geq j_k \geq 1 }
		\int_0^{s_{j_1}}\rmd s_1\cdots\int_0^{s_{j_k}}\rmd s_k\,
		\tau_{j_1} H_{\sigma(j_1)}\cdots
		\tau_{j_k} H_{\sigma(j_k)},\quad
		s\in\left[\frac{(j-1)t}{N},\frac{j t}{N}\right),
	\end{align*}
	where
	\begin{equation*}
		s_{j_\ell}
		=
		\begin{cases}
			\displaystyle
			\medskip
			s-\frac{(j-1)t}{N}, &\mathrm{if}\ \ell=1\ \mathrm{and}\ j_1=j,
			\\
			\displaystyle
			\medskip
			s_{\ell-1},&\mathrm{if}\ \ell\in\{2,\ldots,k\}\ \mathrm{and}\ j_\ell=j_{\ell-1},
			\\
			\displaystyle
			\frac{t}{N}, &\mathrm{otherwise}.
		\end{cases}
	\end{equation*}
	This yields
\begin{equation*}
	S_k(s)
		= \left(\frac{t}{N}\right)^k
		\sum_{j\geq j_1 \geq \dots \geq j_k \geq 1 } \frac{1}{m(j_1,\dots,j_k)}
				(\tilde{\tau}_{j_1}H_{\sigma(j_1)})\cdots
		(\tilde{\tau}_{j_k} H_{\sigma(j_k)}),\quad
		s\in\left[\frac{(j-1)t}{N},\frac{j t}{N}\right),
\end{equation*}
where $\tilde{\tau}_{i}=\tau_i$ if $i< j$ and $\tilde{\tau}_{j}=\tau_j [s-(j-1)t/N]/{(t/N)}$, and $m(j_1,\dots,j_k)$ is a multiplicity factor which for each group of equal indices $j_\ell=j_{\ell+1}=\dots=j_{\ell+\beta-1}$ counts the factorial of its cardinality $\beta!$. By gathering together each group of equal indices, we finally get
\begin{equation}	
		S_k(s)= \left(
		\frac{t}{N}
		\right)^k
		\mathop{\sum_{\bm{\beta}\in\mathbb{N}^j}}_{|\bm{\beta}|=k}
		\frac{1}{\bm{\beta}!}
		\left(\frac{s-(j-1)t/N}{t/N}\right)^{\beta_j}
		(\tau_{j}H_{\sigma(j)})^{\beta_j}
		\cdots
		(\tau_1H_1)^{\beta_1},
		\quad
		s\in\left[\frac{(j-1)t}{N},\frac{j t}{N}\right),
\label{eq:kth_action_explicit_expanded0}
\end{equation}
with $\bm{\beta}=(\beta_1,\dots,\beta_j)$ being a multi-index of nonnegative integers, $|\bm{\beta}|=\beta_1+\cdots+\beta_j$, and $\bm{\beta}!=\beta_1!\cdots\beta_j!$. This is Eq.~\eqref{eq:kth_action_explicit}. For $s=T_M$, we obtain
\begin{equation*}
S_k(T_M)
=
\left(
\frac{t}{N}
\right)^k
\mathop{\mathop{\sum}_{\bm{\beta}\in \mathbb{N}^M}}_{|\bm{\beta}|=k}
\frac{1}{\bm{\beta}!}
(\tau_M H_{\sigma(M)})^{\beta_M}
\cdots
(\tau_1H_1)^{\beta_1},
\end{equation*}
which is Eq.~(\ref{eq:kth_action_explicit_TM}).

\begin{remark}
	The usual strategy in the literature to obtain higher-order product formulas is by demanding that all terms up to order $p$ in the Taylor expansion of $U^{(p)}(T_M)$ vanish, see e.g.\ Refs.~\cite{Morales2022, Yoshida1990}.
	That is, one requires
	\begin{equation}
		U^{(p)}(T_M) = I + \sum_{k=1}^p \frac{(-\rmi)^k}{k!}\frac{t^k}{N^k}(H_1+H_2)^k +\mathcal{R}_{p+1}(T_M),
		\label{eq:Taylor_expansion}
	\end{equation}
	where the remainder $\mathcal{R}_{p+1}(T_M)$ is of order $\mathcal{O}(t^{p+1}/N^{p+1})$.
	This requirement gives a system of equations for the switching coefficients $\tau_k$ to give rise to a $p$th-order product formula.
	However, these equations are equivalent to the set of equations obtained by our approach through Eq.~\eqref{eq:action_zero}\@.
	This can be shown very easily.
	It is obvious that the requirement in Eq.~\eqref{eq:action_zero} gives a $p$th-order product formula.
	In fact, we showed that the overall Trotter error is of order $\mathcal{O}(t^{p+1}/N^p)$ under this assumption.
	Thus, it remains to show that Eq.~\eqref{eq:Taylor_expansion} implies Eq.~\eqref{eq:action_zero}\@.
	Combining Eq.~\eqref{eq:Taylor_expansion} with Lemma~\ref{lem:error_rep_higher_order} gives
	\begin{equation*}
		\sum_{k=1}^p(-\rmi)^kS_k(T_M)+(-\rmi)^{p+1}\int_0^{T_M}\rmd u\,U(s)U^\dag(u)H(u)S_p(u)
		=
		\sum_{k=1}^p \frac{(-\rmi)^k}{k!}\frac{t^k}{N^k}(H_1+H_2)^k +\mathcal{R}_{p+1}(T_M).
	\end{equation*}
	However, on the left-hand side, each $S_k(T_M)=\mathcal{O}(t^{k}/N^k)$.
	At the same time, each term $\frac{(-\rmi)^k}{k!}\frac{t^k}{N^k}(H_1+H_2)^k=\mathcal{O}(t^{k}/N^k)$ on the right-hand side.
	Therefore, these terms have to agree with each other order by order and we retrieve Eq.~\eqref{eq:action_zero}\@.
	Nevertheless, as opposed to a Taylor expansion, our approach allows us to explicitly compute state-dependent error bounds.
	This also allows us to compare how well different product formulas of the same order perform.
\end{remark}

In Appendix~\ref{sec:Mathematica}, we provide a Mathematica script that automatically computes the bound in Eq.~\eqref{eqn:error_estimate_p}, given the switching times $\tau_j$.
The fourth-order Trotter bound presented in Thm.~\ref{thm:4th_order} is computed by this script.

\subsection{First-order Trotter bound (Thm.~\ref{thm:trotter_thm})}
Let us see how the general formalism presented above works for the first-order Trotterization.
Without loss of generality, we shift the Hamiltonians $H_1$ and $H_2$ so that $(H_1+H_2)\varphi=0$.
The switching times for the first-order Trotterization are given by $\tau_1=1$ and $\tau_2=1$, and hence, $T_1=t/N$ and $T_2=2t/N$.
In this case, the formula~\eqref{eq:kth_action_explicit_expanded0} for the integral action gives
\begin{equation}
S_1(s) \varphi
=\begin{cases}
\medskip
\displaystyle
sH_1 \varphi,&s\in[0,\frac{t}{N}),\\
\displaystyle
\left[\left(s-\frac{t}{N}\right)H_2
+\frac{t}{N}H_1 \right]\varphi ,&
s\in[\frac{t}{N},\frac{2t}{N}).
\end{cases}
\label{eq:S1_first_order}
\end{equation}
It is easy to verify that we have
\begin{equation*}
S_1(2t/N) \varphi=\frac{t}{N}(H_1+H_2) \varphi =0.
\end{equation*}
Therefore, 
assuming that $\varphi \in\mathcal{D}(H_1^2)\cap \mathcal{D}(H_2^2)$,
the error of the first-order Trotterization is bounded by Eq.~(\ref{eqn:error_estimate_p}),
\begin{equation*}
\xi_N^{(1)}(t;\varphi)\leq N (I_1+I_2),
\end{equation*}
with $I_j$ ($j=1,2$) the integral over the $j$th time slot.
For the first time slot $s\in[0,\frac{t}{N})$, we have
\begin{equation*}
	I_1=\int_0^{t/N}\rmd s\,s\Vert H_1^2\varphi\Vert =\frac{t^2}{2N^2}\Vert H_1^2\varphi\Vert.
\end{equation*}
For the second time slot $s\in[\frac{t}{N},\frac{2t}{N})$,
\begin{equation*}
	I_2
	= \int_{t/N}^{2t/N}\rmd s\left\Vert\left[
	\left(s-\frac{t}{N}\right)H_2^2 
	+\frac{t}{N}H_2H_1
	\right]\varphi\right\Vert
	=\frac{t^2}{2N^2}\|H_2^2\varphi\|,
\end{equation*}
where the second step follows by the fact that $H_2H_1\varphi = - H_2^2\varphi$, which also shows that for a zero-energy eigenstate of $H_1+H_2$, the domain condition $\varphi \in\mathcal{D}(H_2 H_1)$ is equivalent to $\varphi\in \mathcal{D}(H_2^2)$.
Summing up these two terms reproduces the bound~(\ref{eq:lemma_bound2}) of Lemma~\ref{lem:bound_refined}\@.

\subsection{Second-order Trotter bound (Thm.~\ref{thm:2nd_order})}
\label{app:second_order_derivation}
The error bound on the second-order Trotterization in Thm.~\ref{thm:2nd_order} is obtained as an application of Lemma~\ref{lem:bound_pth_order}.
\begin{proof}[Proof of Thm.~\ref{thm:2nd_order}]
We shift the Hamiltonians $H_1$ and $H_2$ so that $(H_1+H_2)\varphi=0$, and apply the general formalism for bounding the error of a higher-order Trotterization described above.
The switching times for the second-order Trotterization are given by $\tau_1=1/2$, $\tau_2=1$, and $\tau_3=1/2$.
In this case, the formula~\eqref{eq:kth_action_explicit_expanded0} for the integral action gives
\begingroup
\allowdisplaybreaks
\begin{gather}
S_1(s)
=\begin{cases}
\medskip
\displaystyle
\frac{1}{2} s H_1,&s\in[0,\frac{t}{N}),\\
\medskip
\displaystyle
\left(s-\frac{t}{N}\right)H_2+\frac{t}{2N}H_1,&
s\in[\frac{t}{N},\frac{2t}{N}),\\
\displaystyle
\frac{1}{2}\left(s-\frac{t}{N}\right)H_1
+\frac{t}{N}H_2,&
s\in[\frac{2t}{N},\frac{3t}{N}),
\end{cases}
\label{eq:S1_second_order}
\\
S_2(s)
=\begin{cases}
\medskip
\displaystyle
\frac{1}{8}s^2H_1^2,&s\in[0,\frac{t}{N}),\\
\medskip
\displaystyle
\frac{1}{2}\left(s-\frac{t}{N}\right)^2H_2^2+\left(s-\frac{t}{N}\right)\frac{t}{2N}H_2H_1+\frac{t^2}{8N^2}H_1^2,&
s\in[\frac{t}{N},\frac{2t}{N}),\\
\displaystyle
\frac{1}{8}\left(s-\frac{t}{N}\right)^2H_1^2
+\left(s-\frac{2t}{N}\right)\frac{t}{2N}H_1H_2
+\frac{t^2}{2N^2}H_2^2+\frac{t^2}{2N^2}H_2H_1,&
s\in[\frac{2t}{N},\frac{3t}{N}).
\end{cases}
\label{eq:S2_second_order}
\end{gather}
\endgroup
It is easy to verify that we have
\begin{equation*}
S_1(3t/N)=\frac{t}{N}(H_1+H_2),\qquad
S_2(3t/N)=\frac{1}{2}\frac{t^2}{N^2}(H_1+H_2)^2,
\end{equation*}
and thus $S_1(3t/N)\varphi=0$ and $S_2(3t/N)\varphi=0$ for the eigenstate satisfying $(H_1+H_2)\varphi=0$.

Therefore, by assuming that 
\begin{equation*}
\varphi\in\mathcal{D}(H_1^3)\cap\mathcal{D}(H_2 H_1^2)\cap\mathcal{D}(H_2^3),	
\end{equation*}
the error of the second-order Trotterization is bounded by Eq.~(\ref{eqn:error_estimate_p}),
\begin{equation*}
\xi_N^{(2)}(t;\varphi)\leq N (I_1+I_2+I_3),
\end{equation*}
with $I_j$ ($j=1,2,3$) the integral over the $j$th time slot.
For the first time slot $s\in[0,\frac{t}{N})$, we have
\begin{equation*}
	I_1=\int_0^{t/N}\rmd s\,\frac{1}{16}s^2\Vert H_1^3\varphi\Vert  =\frac{t^3}{48N^3} \Vert H_1^3\varphi\Vert.
\end{equation*}
For the second time slot $s\in[\frac{t}{N},\frac{2t}{N})$,
\begin{align*}
	I_2&= \int_{t/N}^{2t/N}\rmd s\left\Vert\left[
	\frac{1}{2}\left(s-\frac{t}{N}\right)^2 H_2^3 
	+\left(s-\frac{t}{N}\right)\frac{t}{2N}H_2^2H_1
	+\frac{t^2}{8N^2}H_2H_1^2
	\right]\varphi\right\Vert\\
	&\leq \frac{t^3}{12N^3}\Vert H_2^3\varphi\Vert + \frac{t^3}{8N^3}\Vert H_2H_1^2\varphi\Vert,
\end{align*}
where the second step follows by the fact that $H_2^2H_1\varphi = - H_2^3\varphi$ [so that $\varphi\in\mathcal{D}(H_2^3)$ is equivalent to $\varphi\in\mathcal{D}(H_2^2 H_1)$] and the triangle inequality.
For the last time slot $s\in[\frac{2t}{N}, \frac{3t}{N})$, we get
\begin{align*}
I_3 &= \int_{2t/N}^{3t/N}\rmd s \left\Vert\left[
\frac{1}{16}\left(s-\frac{t}{N}\right)^2H_1^3
+\left(s-\frac{2t}{N}\right)\frac{t}{4N}H_1^2H_2 
+\frac{t^2}{4N^2}H_1H_2(H_1+H_2)
\right] \varphi\right\Vert\\
	&=\frac{t^3}{48N^3}\Vert H_1^3\varphi\Vert,
\end{align*}
where we have used $H_1^2H_2\varphi=-H_1^3\varphi$, so that $\varphi\in\mathcal{D}(H_1^3)$ is equivalent to $\varphi\in\mathcal{D}(H_1^2 H_2)$, and $\varphi\in\mathcal{D}\bm{(}H_1H_2 (H_1+H_2)\bm{)}$ with $H_1H_2(H_1+H_2)\varphi=0$.
Summing up these three terms results in the second-order Trotter bound presented in Thm.~\ref{thm:2nd_order}\@.
\end{proof}

\subsection{A general bound on the $p$th-order Trotter error (Thm.~\ref{thm:loose_bound_higher_order})}
\label{app:ProofThm5}
By bounding each term appearing in Eq.~\eqref{eqn:error_estimate_p}, we get Thm.~\ref{thm:loose_bound_higher_order}\@.
\begin{proof}[Proof of Thm.~\ref{thm:loose_bound_higher_order}]
We shift the Hamiltonians $H_1$ and $H_2$ so that $(H_1+H_2)\varphi=0$, and apply the general formalism for bonding the error of a higher-order Trotterization described above.
Bounding Eq.~(\ref{eqn:error_estimate_p}) with Eq.~(\ref{eq:kth_action_explicit_expanded0}) by triangle inequality, we have
\begingroup
\allowdisplaybreaks
\begin{align*}
\xi_N^{(p)}(t)
\le{}& N\sum_{j=1}^M
\mathop{\sum_{\bm{\beta}\in\mathbb{N}^j}}_{|\bm{\beta}|=p} \frac{1}{\bm{\beta}!} 
\left(\frac{t}{N}\right)^{p-\beta_j}
\int_0^{\frac{t}{N}}\rmd s\,
		s^{\beta_j} |\tau_{j}|^{\beta_{j}+1}
		{|\tau_{j-1}|^{\beta_{j-1}}\cdots|\tau_1|^{\beta_1}}
		\|H_{\sigma(j)}^{\beta_j+1}
		H_{\sigma(j-1)}^{\beta_{j-1}}
		\cdots
		H_{\sigma(1)}^{\beta_1}\varphi\|
\nonumber\\
={}&N\sum_{j=1}^M
\mathop{\sum_{\bm{\beta}\in\mathbb{N}^j}}_{|\bm{\beta}|=p} \frac{1}{\bm{\beta}!} 
\left(\frac{t}{N}\right)^{p+1}
		\frac{|\tau_{j}|^{\beta_{j}+1}|\tau_{j-1}|^{\beta_{j-1}}\cdots|\tau_1|^{\beta_1}}{\beta_j+1}
		\|H_{\sigma(j)}^{\beta_j+1}
		H_{\sigma(j-1)}^{\beta_{j-1}}
		\cdots
		H_{\sigma(1)}^{\beta_1}\varphi\|
\nonumber\\
={}&N
\mathop{\sum_{\bm{\beta}\in\mathbb{N}^M}}_{|\bm{\beta}|=p+1} \frac{1}{\bm{\beta}!} 
\left(\frac{t}{N}\right)^{p+1}
		{|\tau_M|^{\beta_M}\cdots|\tau_1|^{\beta_1}}
		\|H_{\sigma(M)}^{\beta_M}
		\cdots
		H_{\sigma(1)}^{\beta_1}\varphi\|
\nonumber\\
\le{}&N
\left(
\frac{t}{N}
\right)^{p+1}
\frac{1}{(p+1)!}
\mathop{\sum_{\bm{\beta}\in\mathbb{N}^M}}_{|\bm{\beta}|=p+1} \frac{|\bm{\beta}|!}{\bm{\beta}!} {|\tau_1|^{\beta_1}\cdots|\tau_M|^{\beta_M}}
{K}_{p+1}(\varphi)
\nonumber\\
={}&N
\left(
\frac{t}{N}
\right)^{p+1}
\frac{(|\tau_1|+\cdots+|\tau_M|)^{p+1}}{(p+1)!}
{K}_{p+1}(\varphi),
\nonumber\\
={}&N
\left(
\frac{\tau_*t}{N}
\right)^{p+1}
\frac{1}{(p+1)!}
K_{p+1}(\varphi),
\end{align*}
\endgroup
where $\tau_*=\sum_{j=1}^M|\tau_j|$ and ${K}_p(\varphi)=\max_{1\leq j_1\leq\dots\leq j_p\leq M}\|H_{\sigma(j_p)}\cdots H_{\sigma(j_1)}\varphi\|$.
The second last equality is the multinomial theorem, with ${|\bm{\beta}|!}/{\bm{\beta}!}$ being the multinomial coefficient.
\end{proof}

\subsection{Fractional scaling of the first-order Trotter bound (Thm.~\ref{thm:first_order_alpha})}
Theorem~\ref{thm:trotter_thm} requires $\varphi\in\mathcal{D}(H_1^2)\cap\mathcal{D}(H_2^2)$ and Thm.~\ref{thm:2nd_order} requires $\varphi\in\mathcal{D}(H_1^3)\cap\mathcal{D}(H_2H_1^2)\cap\mathcal{D}(H_2^3)$.
If these conditions are not met, we lose the $N^{-1}$ scaling of the first-order Trotter bound and the $N^{-2}$ scaling of the second-order Trotter bound.
To estimate the scalings in such cases, we perform refined analyses as done in Thm.~\ref{thm:first_order_alpha} for the first-order Trotterization.
The general formalism described above enables us to do it systematically for any general $p$th-order Trotterization.

Let us first look at the first-order Trotterization. In this case $\tau_1=\tau_2=1$,
and we have to compute the right-hand side of Eq.~(\ref{eqn:Dyson_error_q}), $R^{(p)}_q \varphi$, for $p=q=1$, where
\begin{equation*}
	R^{(1)}_1= (-\rmi)^{2}\int_0^{2t/N}\rmd s\,U^{(1)}(2t/N)U^{(1)\dag}(s)H^{(1)}(s)S_1(s).
\end{equation*}
Here, the integral action $S_1(s)$ is given by Eq.~(\ref{eq:S1_first_order}), and
the first-order Trotter evolution reads 
\begin{equation*}
U^{(1)}(2t/N)U^{(1)\dag}(s)
= \rme^{-\rmi\frac{t}{N}H_2}
\begin{cases}
\medskip
\displaystyle
\rme^{-\rmi(\frac{t}{N}-s)H_1},&s\in\left[0,\frac{t}{N}\right),\\
\displaystyle
\rme^{\rmi(s-\frac{t}{N})H_2},&
s\in\left[\frac{t}{N},\frac{2t}{N}\right).
\end{cases}
\end{equation*}
Thus, one has
\begin{equation*}
R^{(1)}_1
=
{-}\int_0^{t/N}\rmd s\,\rme^{-\rmi\frac{t}{N}H_2}\rme^{-\rmi(\frac{t}{N}-s)H_1}sH_1^2 
-\int_{t/N}^{2t/N}\rmd s\,\rme^{-\rmi(\frac{2t}{N}-s)H_2}\left[
\left(s-\frac{t}{N}\right)H_2^2+\frac{t}{N}H_2H_1
\right],
\end{equation*}
which can be integrated and gives
\begin{equation*}
	R^{(1)}_1 =
	\rme^{-\rmi\frac{t}{N}H_2}\left(
	\rme^{-\rmi\frac{t}{N}H_1}-I+\rmi\frac{t}{N}H_1
	\right)
		-
	\rme^{-\rmi\frac{t}{N}H_2}
	\left(
	\rme^{\rmi\frac{t}{N}H_2}-I-\rmi\frac{t}{N}H_2
	\right)
	+\rmi\frac{t}{N}\left(I-\rme^{-\rmi\frac{t}{N}H_2}\right)(H_1+H_2).
\end{equation*}
Therefore, for $\varphi\in\mathcal{D}(H_1)\cap\mathcal{D}(H_2)$ with $(H_1+H_2)\varphi=0$,
one finally gets 
\begin{equation*}
	[U^{(1)}(2t/N)-I]\varphi
	=R^{(1)}_1\varphi =
	\rme^{-\rmi\frac{t}{N}H_2}\left(
	\rme^{-\rmi\frac{t}{N}H_1}-I+\rmi\frac{t}{N}H_1
	\right)
	\varphi
	-
	\rme^{-\rmi\frac{t}{N}H_2}
	\left(
	\rme^{\rmi\frac{t}{N}H_2}-I-\rmi\frac{t}{N}H_2
	\right)
	\varphi.
\end{equation*}
This reproduces the bound~(\ref{eq:first_order_bound}) on the first-order Trotter error, which is estimated as done in Appendix~\ref{sec:proof_trotter_thm}, even when the domain condition $\varphi\in\mathcal{D}(H_1^2)\cap\mathcal{D}(H_2^2)$ for Thm.~\ref{thm:trotter_thm} is not fulfilled.

\subsection{Fractional scalings of the second-order Trotter bounds (Thm.~\ref{thm:second_order_alpha})}
Let us now estimate the second-order Trotter error when the domain condition $\varphi\in\mathcal{D}(H_1^3)\cap\mathcal{D}(H_2H_1^2)\cap\mathcal{D}(H_2^3)$ for Thm.~\ref{thm:2nd_order} is not fulfilled.
For the second-order Trotterization, the integral actions $S_1(s)$ and $S_2(s)$ are given by Eqs.~(\ref{eq:S1_second_order}) and~(\ref{eq:S2_second_order}), respectively.
Notice here that, for the second-order Trotter evolution,
\begin{equation*}
U^{(2)\dag}(s)
=\begin{cases}
\medskip
\displaystyle
\rme^{\rmi \frac{s}{2} H_1},&s\in\left[0,\frac{t}{N}\right),\\
\medskip
\displaystyle
\rme^{\rmi\frac{t}{2N}H_1}
\rme^{\rmi(s-\frac{t}{N})H_2},
&s\in\left[\frac{t}{N},\frac{2t}{N}\right),\\
\displaystyle
\rme^{\rmi\frac{t}{2N}H_1}\rme^{\rmi\frac{t}{N}H_2}\rme^{\rmi \frac{1}{2}(s-\frac{2t}{N})H_1},&
s\in\left[\frac{2t}{N},\frac{3t}{N}\right).
\end{cases}
\end{equation*}
Then, we have the following lemma, which replaces the bound~(\ref{eq:lemma_bound1}) of Lemma~\ref{lem:bound_refined}, and can be used to estimate the second-order Trotter bound for $\varphi\not\in\mathcal{D}(H_1^3)\cap\mathcal{D}(H_2H_1^2)\cap\mathcal{D}(H_2^3)$.

\begin{lemma}\label{lem:FractionalDelta}
The second-order Trotter bound can be estimated by
\begingroup
\allowdisplaybreaks
\begin{align}
\xi_N^{(2)}(t;\varphi)
\le{}&
2N\left\|
\left(
\rme^{-\rmi\frac{t}{2N}H_1}
-I
+\rmi\frac{t}{2N}H_1
+\frac{t^2}{8N^2}H_1^2
\right)
\varphi
\right\|
\nonumber\\
&{}
+
N\left\|
\left[
\rme^{-\rmi\frac{t}{2N}H_2}
\left(
I+\rmi\frac{t}{2N}H_2
\right)
-
\rme^{\rmi\frac{t}{2N}H_2}
\left(
I
-\rmi\frac{t}{2N}H_2
\right)
\right]
\varphi
\right\|
+
\frac{t^2}{8N}
\left\|
\left(
\rme^{-\rmi\frac{t}{N}H_2}
-I
\right)
H_1^2
\varphi
\right\|,
\label{eq:fractional_second_2}
\intertext{for $\varphi\in\mathcal{D}(H_1^2)\cap\mathcal{D}(H_2)$, and by}
\xi_N^{(2)}(t;\varphi)
\le{}&
2N\left\|
\left(
\rme^{-\rmi\frac{t}{2N}H_1}
-I
+\rmi\frac{t}{2N}H_1
\right)
\varphi
\right\|
+
N\left\|
\left[
\rme^{-\rmi\frac{t}{2N}H_2}
\left(
I+\rmi\frac{t}{2N}H_2
\right)
-
\rme^{\rmi\frac{t}{2N}H_2}
\left(
I
-\rmi\frac{t}{2N}H_2
\right)
\right]
\varphi
\right\|,
\label{eq:fractional_second_1}
\end{align}
\endgroup
for $\varphi\in\mathcal{D}(H_1)\cap\mathcal{D}(H_2)$.
\end{lemma}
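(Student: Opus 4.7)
The plan is to view one second-order Trotter cycle as a product of two first-order ``half-cycles,'' apply the first-order identity of Lemma~\ref{lem:bound_refined} to each, and then recombine the resulting Taylor remainders so that the two $H_2$-pieces merge into the symmetric combination $E_2$ appearing in the statement. Throughout I set $\alpha=t/(2N)$, shift the Hamiltonians so that $(H_1+H_2)\varphi=0$, and use the telescoping inequality~\eqref{eq:telescope_higher_order} to reduce the bound to a single Trotter cycle. Writing $\rme^{-\rmi 2\alpha H_2}=\rme^{-\rmi\alpha H_2}\rme^{-\rmi\alpha H_2}$ allows me to factor $\mathcal{S}_1^{(2)}(t/N)=U_-U_+$ with $U_+=\rme^{-\rmi\alpha H_2}\rme^{-\rmi\alpha H_1}$ and $U_-=\rme^{-\rmi\alpha H_1}\rme^{-\rmi\alpha H_2}$.

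Next, mimicking the derivation of Lemma~\ref{lem:bound_refined} on each half, I insert $I=\rme^{-\rmi\alpha H_2}\rme^{\rmi\alpha H_2}$ inside $U_+\varphi$ and use $H_2\varphi=-H_1\varphi$ to obtain $(U_+-I)\varphi=\rme^{-\rmi\alpha H_2}[\tilde A^{+}-\tilde B^{-}]\varphi$, where $\tilde A^{\pm}\varphi=(\rme^{\mp \rmi\alpha H_1}-I\pm \rmi\alpha H_1)\varphi$ and $\tilde B^{\pm}\varphi=(\rme^{\mp \rmi\alpha H_2}-I\pm \rmi\alpha H_2)\varphi$ are first-order Taylor remainders whose evaluation only requires $\varphi\in\mathcal{D}(H_1)\cap\mathcal{D}(H_2)$. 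Symmetrically, $(U_--I)\varphi=\rme^{-\rmi\alpha H_1}[\tilde B^{+}-\tilde A^{-}]\varphi$. Combining these via $[\mathcal{S}_1^{(2)}(t/N)-I]\varphi=U_-(U_+-I)\varphi+(U_--I)\varphi$ and factoring out a common $\rme^{-\rmi\alpha H_1}$ yields the master identity
\begin{equation*}
[\mathcal{S}_1^{(2)}(t/N)-I]\varphi=\rme^{-\rmi\alpha H_1}\bigl[\rme^{-\rmi 2\alpha H_2}\tilde A^{+}-\tilde A^{-}\bigr]\varphi+\rme^{-\rmi\alpha H_1}\bigl[\tilde B^{+}-\rme^{-\rmi 2\alpha H_2}\tilde B^{-}\bigr]\varphi.
\end{equation*}

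For~\eqref{eq:fractional_second_1}, I apply the triangle inequality to the first bracket, noting that $\|\tilde A^{+}\varphi\|=\|\tilde A^{-}\varphi\|$ by the spectral theorem (the two integrands are complex conjugates), while for the second bracket a direct computation (multiplying inside by the unitary $\rme^{\rmi\alpha H_2}$) shows that its norm equals $\|E_2\varphi\|$ exactly. For the sharper bound~\eqref{eq:fractional_second_2}, I expand each $\tilde A^{\pm}\varphi$ one order further, $\tilde A^{\pm}\varphi=(\rme^{\mp \rmi\alpha H_1}-I\pm \rmi\alpha H_1+\tfrac{1}{2}\alpha^2H_1^2)\varphi-\tfrac{1}{2}\alpha^2H_1^2\varphi$, and substitute: the two $-\tfrac{1}{2}\alpha^2H_1^2\varphi$ contributions combine into $-\tfrac{1}{2}\alpha^2(\rme^{-\rmi 2\alpha H_2}-I)H_1^2\varphi$, producing exactly the extra term in~\eqref{eq:fractional_second_2}. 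The main obstacle is the choice of grouping: a naive triangle inequality on the master identity loses the cancellation that combines $\tilde B^{+}$ and $\tilde B^{-}$ into $E_2$ and yields only the strictly weaker estimate $2N\|\tilde A\varphi\|+2N\|\tilde B\varphi\|$. Recognizing that $\tilde B^{+}\varphi-\rme^{-\rmi 2\alpha H_2}\tilde B^{-}\varphi$ is unitarily equivalent to $E_2\varphi$ is the key step that makes the bound tight; the remaining manipulations are routine algebra with exponentials and Taylor remainders.
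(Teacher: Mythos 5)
Your proof is correct and reaches the bounds by a genuinely different route. The paper derives Lemma~\ref{lem:FractionalDelta} from the general Dyson-remainder formalism of Appendix~\ref{sec:higher_order_derivation}: it writes the single cycle as $U^{(2)}(T_M)-I=R^{(2)}_q\varphi$ with $q=2$ (resp.\ $q=1$), inserts the explicit integral actions $S_2(s)$ (resp.\ $S_1(s)$), and integrates the remainder term slot by slot before applying the triangle inequality. You instead exploit the special structure of the Strang splitting, factoring the cycle as $\mathcal{S}_1^{(2)}(t/N)=U_-U_+$ with two first-order half-cycles, telescoping, and reducing everything to elementary Taylor-remainder algebra plus the identity $H_1\varphi=-H_2\varphi$. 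I checked your master identity and it coincides exactly with the paper's integrated expression for $R^{(2)}_1\varphi$ (your second bracket is the paper's middle term conjugated by $\rme^{\rmi\frac{t}{2N}H_2}$), your norm equality $\|\tilde A^+\varphi\|=\|\tilde A^-\varphi\|$ via conjugate integrands is valid by the spectral theorem, the unitary equivalence of $[\tilde B^+-\rme^{-\rmi\frac{t}{N}H_2}\tilde B^-]\varphi$ with $E_2\varphi$ is correct, and the further expansion producing the $\frac{t^2}{8N}\|(\rme^{-\rmi\frac{t}{N}H_2}-I)H_1^2\varphi\|$ term matches, with the domain conditions tracked correctly. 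You also correctly identify that merging the two $H_2$-remainders into the symmetric combination $E_2$ is the essential step (it is what lets the $H_2$ contribution be controlled by $\tilde\Delta^{(2)}$ rather than $\Delta^{(1)}$ in the proof of Thm.~\ref{thm:second_order_alpha}). What your approach buys is transparency and economy for this particular symmetric formula; what the paper's approach buys is systematics: the integration-by-parts machinery applies verbatim to arbitrary $p$th-order products with negative switching times, where no analogous factorization into first-order half-cycles is available.
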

\begin{proof}
The second-order integral action $S_2(s)$ is given by Eq.~(\ref{eq:S2_second_order}), and the right-hand side of Eq.~(\ref{eqn:Dyson_error_q}), $R^{(p)}_q \varphi$, for $p=q=2$ reads
\begingroup
\allowdisplaybreaks
\begin{align*}
R^{(2)}_2 ={}&
\rmi\int_0^{3t/N}\rmd s\,U^{(2)}(3t/N)U^{(2)\dag}(s)H^{(2)}(s)S_2(s)\nonumber\\
={}&
\rmi\int_0^{t/N}\rmd s\,\rme^{-\rmi\frac{t}{2N}H_1}\rme^{-\rmi\frac{t}{N}H_2}\rme^{-\rmi \frac{1}{2}(\frac{t}{N}-s)H_1}\frac{1}{16}s^2H_1^3
\nonumber\\
&{}
+\rmi\int_{t/N}^{2t/N}\rmd s\,\rme^{-\rmi\frac{t}{2N}H_1}\rme^{-\rmi(\frac{2t}{N}-s)H_2}\left[
\frac{1}{2}\left(s-\frac{t}{N}\right)^2H_2^3
+\left(s-\frac{t}{N}\right)\frac{t}{2N}H_2^2H_1
+\frac{t^2}{8N^2}H_2H_1^2
\right]
\nonumber\\
&{}
+\rmi\int_{2t/N}^{3t/N}\rmd s\,\rme^{-\rmi \frac{1}{2}(\frac{3t}{N}-s)H_1}\left[
\frac{1}{16}\left(s-\frac{t}{N}\right)^2H_1^3
+\left(s-\frac{2t}{N}\right)\frac{t}{4N}H_1^2H_2
+\frac{t^2}{4N^2}H_1H_2^2
+\frac{t^2}{4N^2}H_1H_2H_1
\right].
\end{align*}
By integrating, one gets
\begin{align*}
R^{(2)}_2
={}&
\rme^{-\rmi\frac{t}{2N}H_1}\rme^{-\rmi\frac{t}{N}H_2}
\left(
\rme^{-\rmi\frac{t}{2N}H_1}
-I
+\rmi\frac{t}{2N}H_1
+\frac{t^2}{8N^2}H_1^2
\right)
\nonumber\\
&{}
+\rme^{-\rmi\frac{t}{2N}H_1}
\rme^{-\rmi\frac{t}{2N}H_2}
\left[
\rme^{-\rmi\frac{t}{2N}H_2}
\left(
I
+
\rmi\frac{t}{2N}
H_2
\right)
-
\rme^{\rmi\frac{t}{2N}H_2}
\left(
I
-
\rmi\frac{t}{2N}H_2
\right)
\right]
\nonumber\\
&{}
-\frac{t^2}{8N^2}
\rme^{-\rmi\frac{t}{2N}H_1}
\left(
\rme^{-\rmi\frac{t}{N}H_2}
-I
\right)
H_1^2
\nonumber\\
&{}
-\rmi\frac{t}{2N}
\rme^{-\rmi\frac{t}{2N}H_1}
\left(
\rme^{-\rmi\frac{t}{N}H_2}
-I
+\rmi\frac{t}{N}H_2
\right)
(H_1+H_2)
\nonumber\\
&{}
-\rme^{-\rmi\frac{t}{2N}H_1}
\left(
\rme^{\rmi\frac{t}{2N}H_1}
-I
-
\rmi\frac{t}{2N}H_1
+
\frac{t^2}{8N^2}H_1^2
\right)
\nonumber\\
&{}
-\rmi\frac{t}{N}
\left(
\rme^{-\rmi\frac{t}{2N}H_1}
-I
+
\rmi\frac{t}{2N}H_1
\right)
(H_1+H_2)
\nonumber\\
&{}
-
\frac{t^2}{2N^2}
\left(
\rme^{-\rmi\frac{t}{2N}H_1}
-I
\right)
H_2(H_1+H_2).
\end{align*}
\endgroup
Therefore, for $\varphi\in\mathcal{D}(H_1^2)\cap\mathcal{D}(H_2)$ with $(H_1+H_2)\varphi=0$, one automatically has that $\varphi\in \mathcal{D}\bm{(}X(H_1+H_2)\bm{)}$ for every operator $X$, and
\begin{align}
	[U^{(2)}(3t/N)-I]\varphi
	={}&R_2^{(2)}\varphi
	\nonumber\\
={}&
\rme^{-\rmi\frac{t}{2N}H_1}\rme^{-\rmi\frac{t}{N}H_2}
\left(
\rme^{-\rmi\frac{t}{2N}H_1}
-I
+\rmi\frac{t}{2N}H_1
+\frac{t^2}{8N^2}H_1^2
\right)\varphi
\nonumber\\
&{}
+\rme^{-\rmi\frac{t}{2N}H_1}
\rme^{-\rmi\frac{t}{2N}H_2}
\left[
\rme^{-\rmi\frac{t}{2N}H_2}
\left(
I
+
\rmi\frac{t}{2N}
H_2
\right)
-
\rme^{\rmi\frac{t}{2N}H_2}
\left(
I
-
\rmi\frac{t}{2N}H_2
\right)
\right]\varphi
\nonumber\\
&{}
-\frac{t^2}{8N^2}
\rme^{-\rmi\frac{t}{2N}H_1}
\left(
\rme^{-\rmi\frac{t}{N}H_2}
-I
\right)
H_1^2\varphi
\nonumber\\
&{}
-\rme^{-\rmi\frac{t}{2N}H_1}
\left(
\rme^{\rmi\frac{t}{2N}H_1}
-I
-
\rmi\frac{t}{2N}H_1
+
\frac{t^2}{8N^2}H_1^2
\right)
\varphi.
\end{align}
Bounding it by the triangle inequality yields Eq.~(\ref{eq:fractional_second_2}).
This works for $\varphi\in\mathcal{D}(H_1^2)\cap\mathcal{D}(H_2)$.

If the input state $\varphi$ is not in $\mathcal{D}(H_1^2)$, to estimate the second-order Trotter bound we can use the remainder term $R^{(p)}_q\varphi$ in Eq.~(\ref{eqn:Dyson_error_q}) for $p=2$ and $q=1$.
The first-order integral action $S_1(s)$ is given by Eq.~(\ref{eq:S1_second_order}), and the remainder term $R^{(p)}_q$ for $p=2$ and $q=1$ reads
\begin{align*}
R^{(2)}_1
={}&{-}\int_0^{3t/N}\rmd s\,U^{(2)}(3t/N)U^{(2)\dag}(s)H^{(2)}(s)S_1(s)
\nonumber\\
={}&
{-}\int_0^{t/N}\rmd s\,\rme^{-\rmi\frac{t}{2N}H_1}\rme^{-\rmi\frac{t}{N}H_2}\rme^{-\rmi \frac{1}{2}(\frac{t}{N}-s)H_1} \frac{1}{4} sH_1^2
\nonumber\\
&{}
-\int_{t/N}^{2t/N}\rmd s\,
\rme^{-\rmi\frac{t}{2N}H_1}\rme^{-\rmi(\frac{2t}{N}-s)H_2}
\left[
\left(s-\frac{t}{N}\right)H_2^2
+\frac{t}{2N}H_2H_1
\right]
\nonumber\\
&{}
-\int_{2t/N}^{3t/N}\rmd s\,
\rme^{-\rmi \frac{1}{2}(\frac{3t}{N}-s)H_1}
\left[
\frac{1}{4}\left(s-\frac{t}{N}\right)H_1^2
+\frac{t}{2N}H_1H_2
\right].
\end{align*}
By integrating it, one has
\begin{align*}
R^{(2)}_1
={}&
\rme^{-\rmi\frac{t}{2N}H_1}\rme^{-\rmi\frac{t}{N}H_2}
\left(
\rme^{-\rmi\frac{t}{2N}H_1}
-I
+\rmi\frac{t}{2N}H_1
\right)
\nonumber\\
&{}
+
\rme^{-\rmi\frac{t}{2N}H_1}
\rme^{-\rmi\frac{t}{2N}H_2}
\left[
\rme^{-\rmi\frac{t}{2N}H_2}
\left(
I+\rmi\frac{t}{2N}H_2
\right)
-
\rme^{\rmi\frac{t}{2N}H_2}
\left(
I
-\rmi\frac{t}{2N}H_2
\right)
\right]
\nonumber\\
&{}
-
\rmi\frac{t}{2N}\rme^{-\rmi\frac{t}{2N}H_1}
\left(
\rme^{-\rmi\frac{t}{N}H_2} -I
\right)
(H_1+H_2)
\nonumber\\
&{}
-
\rme^{-\rmi\frac{t}{2N}H_1}
\left(
\rme^{\rmi\frac{t}{2N}H_1}
-I
-\rmi\frac{t}{2N}H_1
\right)
\nonumber\\
&{}
-\rmi\frac{t}{N}
\left(
\rme^{-\rmi\frac{t}{2N}H_1} -I
\right)
(H_1+H_2),
\end{align*}
whence, for $\varphi\in\mathcal{D}(H_1)\cap\mathcal{D}(H_2)$ with $(H_1+H_2)\varphi=0$, one gets
\begin{align*}
[U^{(2)}(2t/N)-I]\varphi
={}&R_1^{(2)}\varphi
\nonumber\\
={}&
\rme^{-\rmi\frac{t}{2N}H_1}\rme^{-\rmi\frac{t}{N}H_2}
\left(
\rme^{-\rmi\frac{t}{2N}H_1}
-I
+\rmi\frac{t}{2N}H_1
\right)
\varphi
\nonumber\\
&{}
+
\rme^{-\rmi\frac{t}{2N}H_1}
\rme^{-\rmi\frac{t}{2N}H_2}
\left[
\rme^{-\rmi\frac{t}{2N}H_2}
\left(
I+\rmi\frac{t}{2N}H_2
\right)
-
\rme^{\rmi\frac{t}{2N}H_2}
\left(
I
-\rmi\frac{t}{2N}H_2
\right)
\right]
\varphi
\nonumber\\
&{}
-
\rme^{-\rmi\frac{t}{2N}H_1}
\left(
\rme^{\rmi\frac{t}{2N}H_1}
-I
-\rmi\frac{t}{2N}H_1
\right)
\varphi.
\end{align*}
Bounding it by the triangle inequality yields Eq.~(\ref{eq:fractional_second_1}).
\end{proof}

In order to prove Thm.~\ref{thm:second_order_alpha}, we need to bound
\begingroup
\allowdisplaybreaks
\begin{gather*}
\Delta^{(0)}(s)
=\|(\rme^{-\rmi sH}-1)\varphi\|^2
=\int_{\mathbb{R}} f_0(s\lambda)\,\rmd\mu_\varphi(\lambda),
\\
\Delta^{(1)}(s)
=\|(\rme^{-\rmi sH}-1+\rmi sH)\varphi\|^2
=\int_{\mathbb{R}} f_1(s\lambda)\,\rmd\mu_\varphi(\lambda),
\\
\Delta^{(2)}(s)
=\left\|\left(\rme^{-\rmi sH}-1+\rmi sH+\frac{1}{2}s^2H^2\right)\varphi\right\|^2
=\int_{\mathbb{R}} f_2(s\lambda)\,\rmd\mu_\varphi(\lambda),
\\
\tilde{\Delta}^{(2)}(s)
=\|[\rme^{-\rmi sH}(1+\rmi sH)-\rme^{\rmi sH}(1-\rmi sH)]\varphi\|^2
=\int_{\mathbb{R}}\tilde{f}_2(s\lambda)\,\rmd\mu_\varphi(\lambda),
\end{gather*}
\endgroup
where
\begin{gather*}
f_0(x)
=|\rme^{-\rmi x}-1|^2
=2(1-\cos x)
\le x^2,
\vphantom{\frac{1}{4}}
\\
f_1(x)
=|\rme^{-\rmi x}-1+\rmi x|^2
=(1-\cos x)^2+(x-\sin x)^2
\le\frac{1}{4}x^4,
\\
f_2(x)
=\left|
\rme^{-\rmi x}-1+\rmi x+\frac{1}{2}x^2
\right|^2
=\left(1-\frac{1}{2}x^2-\cos x\right)^2+(x-\sin x)^2
\le\frac{1}{36}x^6,
\\
\tilde{f}_2(x)
=|
\rme^{-\rmi x}(1+\rmi x)
-
\rme^{\rmi x}(1-\rmi x)
|^2
=4(
\sin x
-x\cos x
)^2
\le\frac{4}{9}x^6.
\end{gather*}
As done in the proof of Lemma~\ref{lem:FractionalScaling} 
in Appendix~\ref{sec:proof_trotter_thm}, if 
\begin{equation*}
\mu_{\psi}(\{|x|\geq \lambda\})
\le\left(\frac{\lambda_0}{\lambda}\right)^{2\delta},
\end{equation*}
for some $\lambda_0>0$, then we have
\begingroup
\allowdisplaybreaks
\begin{align}
\Delta^{(0)}(s)
&\le
\begin{cases}
\medskip
\displaystyle	
\frac{2\pi\delta}{\Gamma(2\delta+1)\sin(\delta\pi)}(\lambda_0 s)^{2\delta},&
0<\delta<1,\\
\medskip
\displaystyle
\Lambda^2s^2
+
\frac{2\lambda_0^2}{\Lambda^2}
g_0(\Lambda s),&
\delta=1,\\
\displaystyle
\|H\psi\|^2s^2,
\vphantom{\frac{1}{4}}&
\delta>1,
\end{cases}
\label{eq:bound_Delta0}
\\
\Delta^{(1)}(s)
&\le 
\begin{cases}
\medskip
\displaystyle	
\frac{\pi}{\Gamma(2\delta-1)\sin[(\delta-1)\pi]}(\lambda_0 s)^{2\delta},&
1<\delta<2,\\
\medskip
\displaystyle
\frac{1}{4}\Lambda^2\|H\psi\|^2s^4
+
\frac{\lambda_0^4}{\Lambda^4}
g_1(\Lambda s),&
\delta=2,\\
\displaystyle
\frac{1}{4}\|H^2\psi\|^2s^4,&
\delta>2,
\end{cases}
\label{eq:bound_Delta1}
\\
\Delta^{(2)}(s)
&\le
\begin{cases}
\medskip
\displaystyle
\frac{\pi}{2\Gamma(2\delta-2)\sin(\delta\pi)]}
(\lambda_0 s)^{2\delta},&
2<\delta<3,\\
\medskip
\displaystyle
\frac{1}{36}\Lambda^2\|H^2\psi\|^2s^6
+
\frac{\lambda_0^6}{6\Lambda^6}
g_2(\Lambda s),&
\delta=3,\\
\displaystyle
\frac{1}{36}\|H^3\psi\|^2s^6,&
\delta>3,
\end{cases}
\\
\tilde{\Delta}_2(s)
&\le
\begin{cases}
\medskip
\displaystyle
\frac{2^{2\delta-1}(\delta-2)\pi}{\Gamma(2\delta-1)\sin[(\delta-2)\pi]} (\lambda_0 s)^{2\delta},&
1<\delta<3,\\
\medskip
\displaystyle
\frac{4}{9}\Lambda^2\|H^2\psi\|^2s^6
+
\frac{8\lambda_0^6}{3\Lambda^6}
\tilde{g}_2(\Lambda s),&
\delta=3,\\
\displaystyle
\frac{4}{9}\|H^3\psi\|^2s^6,&
\delta>3,
\end{cases}
\end{align}
\endgroup
where
\begingroup
\allowdisplaybreaks
\begin{align*}
g_0(x)
&=-x^2\Ci(x)+(1-\cos x)+x\sin x
\vphantom{\frac{1}{4}}
\\
&=-x^2\log x+\left(\frac{3}{2}-\gamma\right)x^2+O(x^4),
\\
g_1(x)
&=-x^4\Ci(x)+\left(1+\frac{1}{2}x^2\right)[(1-\cos x)^2+(x-\sin x)^2]-2x^3(x-\sin x)+\frac{3}{2}x^4\\
&=-x^4\log x+\left(\frac{7}{4}-\gamma\right)x^4+O(x^6),
\\
g_2(x)
&=-x^6\Ci(x)
+
12+\frac{9}{2}x^4
-(12-6x^2+x^4)\cos x
-x(12+2x^2-x^4)\sin x
\vphantom{\frac{1}{4}}
\\
&=-x^6\log x+(2-\gamma)x^6+O(x^8),
\vphantom{\frac{1}{4}}
\\
\tilde{g}_2(x)
&=-x^6\Ci(2x)
+
\frac{1}{8}[
6+9x^2
-(6-3x^2+2x^4)\cos2x
-2x(2-x^2)(3+2x^2)\sin2x
]
\vphantom{\frac{1}{4}}
\\
&=-x^6\log2x+\left(\frac{7}{4}-\gamma\right)x^6+O(x^8).
\end{align*}
\endgroup
Theorem~\ref{thm:second_order_alpha} follows by using these bounds for Eqs.~(\ref{eq:fractional_second_2}) and~(\ref{eq:fractional_second_1}) of Lemma~\ref{lem:FractionalDelta}\@.
We summarize the scalings of the first- and second-order Trotter bounds for the hydrogen atom in Table~\ref{tab:fractional_scalings}\@.
\begin{table}
\caption{Scalings of the state-dependent bounds on the first- and second-order Trotter errors for the hydrogen atom with $H_1=-\frac{\hbar^2}{2m_\mathrm{e}}\Delta$ and $H_2=-\frac{\hbar^2}{m_\mathrm{e}a_0r}$.
	The state $\Psi_{n\ell m}$ is an eigenfunction of the Hamiltonian $H_1+H_2$ of the hydrogen atom.
	The parameter $\delta_j$ specifies the scaling of the tail of the spectral density $\rho_{j,\Psi_{n\ell m}}(\lambda)\sim|\lambda|^{-2\delta_j-1}$ of the Hamiltonian $H_j$ ($j=1,2$) at $\Psi_{n\ell m}$.
	Similarly, $\delta_{12}$ specifies the scaling of the tail of the spectral density $\rho_{2,H_1^2\Psi_{n\ell m}}(\lambda)\sim|\lambda|^{-2\delta_{12}-1}$ of the Hamiltonian $H_2$ at the vector $H_1^2\Psi_{n\ell m}$, while $\delta_{21}$ specifies the scaling of the tail of the spectral density $\rho_{1,H_2^2\Psi_{n\ell m}}(\lambda)\sim|\lambda|^{-2\delta_{21}-1}$ of the Hamiltonian $H_1$ at the vector $H_2^2\Psi_{n\ell m}$.
	``$H_2H_1$'' indicates the scaling of the bound on the first-order Trotter product with cycle $\rme^{-\rmi\frac{t}{N}H_2}\rme^{-\rmi\frac{t}{N}H_1}$ from Thm.~\ref{thm:trotter_thm} and Thm.~\ref{thm:first_order_alpha}\@. 
	``$H_1H_2H_1$'' and ``$H_2H_1H_2$'' indicate the scalings of the bounds on the second-order Trotter products with cycles $\rme^{-\rmi\frac{t}{2N}H_1}\rme^{-\rmi\frac{t}{N}H_2}\rme^{-\rmi\frac{t}{2N}H_1}$ and $\rme^{-\rmi\frac{t}{2N}H_2}\rme^{-\rmi\frac{t}{N}H_1}\rme^{-\rmi\frac{t}{2N}H_2}$, respectively, from Thm.~\ref{thm:2nd_order} and Thm.~\ref{thm:second_order_alpha}\@.
	Blank items in the table indicate that the corresponding entry is not applicable.
	}
\label{tab:fractional_scalings}
\centering
\begin{tabular}{|c|cc|cc|c|cc|}
	\hline
	\begin{tabular}[c]{@{}c@{}} $\ell$ \end{tabular} 
	&	\begin{tabular}[c]{@{}c@{}} $\delta_1$ \end{tabular}
	&	\begin{tabular}[c]{@{}c@{}} $\delta_2$ \end{tabular}
	&	\begin{tabular}[c]{@{}c@{}} $\delta_{12}$ \end{tabular}
	&	\begin{tabular}[c]{@{}c@{}} $\delta_{21}$ \end{tabular}
	&	\begin{tabular}[c]{@{}c@{}} $H_2H_1$ \end{tabular}
	&	\begin{tabular}[c]{@{}c@{}} $H_1H_2H_1$ \end{tabular}
	&	\begin{tabular}[c]{@{}c@{}} $H_2H_1H_2$ \end{tabular}
	\\
	\hhline{:========:}
	$0$&${5}/{4}$&${3}/{2}$& & &$N^{-1/4}$&$N^{-1/4}$&$N^{-1/4}$\\
	$1$&${7}/{4}$&${5}/{2}$& &${1}/{4}$&$N^{-3/4}$&$N^{-3/4}$&$N^{-3/4}$\\
	$2$&${9}/{4}$&${7}/{2}$&${1}/{2}$&${3}/{4}$&$N^{-1}$&$N^{-5/4}$&$N^{-5/4}$\\
	$3$&${11}/{4}$&${9}/{2}$&${1}/{2}$&${5}/{4}$&$N^{-1}$&$N^{-3/2}$&$N^{-7/4}$\\
	$\ell\ge4$&$\frac{1}{2}\ell+\frac{5}{4}$&$\ell+\frac{3}{2}$&$\ge\frac{5}{2}$&$\frac{1}{2}\ell-\frac{1}{4}$&$N^{-1}$&$N^{-2}$&$N^{-2}$\\
	\hline
\end{tabular}
\end{table}

\subsection{Fractional scalings of general $p$th-order Trotter bounds}
The above strategy for estimating the fractional scaling works for a general $p$th-order Trotter product when $\varphi\not\in\mathcal{D}\bm{(}H^{(p)}(s)S_p(s)\bm{)}$\@.
Inserting the expression~(\ref{eq:kth_action_explicit_expanded0}) for the $q$th-order integral action into the remainder term~(\ref{eqn:Dyson_error_q}) for a $p$th-order Trotter product, we get 
\begin{align}
&[U^{(p)}(T_M)-I]\varphi
\nonumber\\
&\ \
=
(-\rmi)^{q+1}\int_0^{T_M}\rmd s\,U^{(p)}(T_M)U^{(p)\dag}(s)H^{(p)}(s)S_q(s)\varphi
\nonumber\\
&\ \
=
(-\rmi)^{q+1}
\sum_{j=1}^M
\mathop{\mathop{\sum}_{\bm{\beta}\in\mathbb{N}^j}}_{|\bm{\beta}|=q}
\rme^{-\rmi\frac{t}{N}\tau_MH_{\sigma(M)}}
\cdots
\rme^{-\rmi\frac{t}{N}\tau_{j+1}H_{\sigma(j+1)}}
\nonumber\\
&\qquad\qquad\qquad
{}\times
\int_0^{\frac{t}{N}}\rmd s\,
\rme^{-\rmi(\frac{t}{N}-s) \tau_j H_{\sigma(j)}}
\frac{1}{\beta_j!}
s^{\beta_j} (\tau_j H_{\sigma(j)})^{\beta_j+1}
\left(
\frac{t}{N}
\right)^{q-\beta_j}
\frac{1}{\beta_{j-1}!}
(\tau_{j-1}H_{\sigma(j-1)})^{\beta_{j-1}}
\cdots
\frac{1}{\beta_1!}
(\tau_1H_1)^{\beta_1}
\varphi
\nonumber
\displaybreak[0]
\\
&\ \
=
\sum_{j=1}^M
\mathop{\mathop{\sum}_{\bm{\beta}\in\mathbb{N}^j}}_{|\bm{\beta}|=q}
\rme^{-\rmi\frac{t}{N}\tau_MH_{\sigma(M)}}
\cdots
\rme^{-\rmi\frac{t}{N}\tau_{j+1}H_{\sigma(j+1)}}
\nonumber\\
&\qquad\qquad\qquad
{}\times
\Biggl[
\rme^{-\rmi\frac{t}{N}\tau_jH_{\sigma(j)}}
-I
-\sum_{k=1}^{\beta_j}\frac{1}{k!}\left(-\rmi\frac{t}{N}\tau_jH_{\sigma(j)}\right)^k
\Biggr]
\left(
-\rmi\frac{t}{N}
\right)^{q-\beta_j}
\frac{1}{\beta_{j-1}!}
(\tau_{j-1}H_{\sigma(j-1)})^{\beta_{j-1}}
\cdots
\frac{1}{\beta_1!}
(\tau_1H_1)^{\beta_1}
\varphi,
\label{eq:remainder_q}
\end{align}
where we have used the Taylor expansion~(\ref{eq:TaylorExpansion}) of Remark~\ref{rmk:TaylorExpansion}\@.
It can be bounded by triangle inequality to yield an error bound
\begin{align*}
\xi_N^{(p)}(t;\varphi)
\le
N\sum_{j=1}^M
\mathop{\mathop{\sum}_{\bm{\beta}\in\mathbb{N}^j}}_{|\bm{\beta}|=q}
&\left(
\frac{t}{N}
\right)^{q-\beta_j}
\frac{|\tau_{j-1}|^{\beta_{j-1}}\cdots|\tau_1|^{\beta_1}}{\beta_{j-1}!\cdots\beta_1!}
\nonumber\\[-4truemm]
&{}\times
\Biggl\|
\Biggl[
\rme^{-\rmi\frac{t}{N}\tau_jH_{\sigma(j)}}
-I
-\sum_{k=1}^{\beta_j}\frac{1}{k!}\left(-\rmi\frac{t}{N}\tau_jH_{\sigma(j)}\right)^k
\Biggr]
H_{\sigma(j-1)}^{\beta_{j-1}}
\cdots
H_1^{\beta_1}
\varphi
\Biggr\|.
\end{align*}

This is just a rough bound.
By taking care of cancellations among terms due to $(H_1+H_2)\varphi=0$ before taking the norm, one can get a better bound.
For instance, for $q=1$, the error operator~(\ref{eq:remainder_q}) simplifies to
\begin{align*}
[U^{(p)}(T_M)-I]\varphi
={}&
\sum_{j=1}^M
\rme^{-\rmi\frac{t}{N}\tau_MH_{\sigma(M)}}
\cdots
\rme^{-\rmi\frac{t}{N}\tau_{j+1}H_{\sigma(j+1)}}
\nonumber\\
&\qquad
{}\times
\left[
\left(
\rme^{-\rmi\frac{t}{N}\tau_jH_{\sigma(j)}}
-I
+\rmi\frac{t}{N}\tau_jH_{\sigma(j)}
\right)
-
\rmi\frac{t}{N}
\left(
\rme^{-\rmi\frac{t}{N}\tau_jH_{\sigma(j)}}
-I
\right)
\sum_{k=1}^{j-1}
\tau_k
H_{\sigma(k)}
\right]
\varphi
\nonumber
\displaybreak[0]
\\
={}&
\sum_{j=1}^M
\rme^{-\rmi\frac{t}{N}\tau_MH_{\sigma(M)}}
\cdots
\rme^{-\rmi\frac{t}{N}\tau_{j+1}H_{\sigma(j+1)}}
\nonumber\\
&\qquad
{}\times
\left[
\left(
\rme^{-\rmi\frac{t}{N}\tau_jH_{\sigma(j)}}
-
I
+
\rmi\frac{t}{N}
\tau_j
H_{\sigma(j)}
\right)
+
\rmi\frac{t}{N}
\left(
\rme^{-\rmi\frac{t}{N}\tau_jH_{\sigma(j)}}
-
I
\right)
\sum_{k=1}^{j-1}
(-1)^{j-1-k}
\tau_k
H_{\sigma(j)}
\right]
\varphi,
\end{align*}
and the Trotter error is bounded by
\begingroup
\allowdisplaybreaks
\begin{equation*}
\xi_N^{(p)}(t;\varphi)
\le
N\sum_{j=1}^M
\left\|
\left[
\left(
\rme^{-\rmi\frac{t}{N}\tau_jH_{\sigma(j)}}
-
I
+
\rmi\frac{t}{N}
\tau_j
H_{\sigma(j)}
\right)
+
\rmi\frac{t}{N}
\left(
\rme^{-\rmi\frac{t}{N}\tau_jH_{\sigma(j)}}
-
I
\right)
\sum_{k=1}^{j-1}
(-1)^{j-1-k}
\tau_k
H_{\sigma(j)}
\right]
\varphi
\right\|.
\end{equation*}
\endgroup
This is valid for $\varphi\in\mathcal{D}(H_1)\cap\mathcal{D}(H_2)$, and reproduces Eq.~(\ref{eq:lemma_bound1}) of Lemma~\ref{lem:bound_refined} for the first-order Trotterization and Eq.~(\ref{eq:fractional_second_1}) of Lemma~\ref{lem:FractionalDelta} for the second-order Trotterization.
Using the bounds in Eqs.~(\ref{eq:bound_Delta0}) and~(\ref{eq:bound_Delta1}), we see that it scales as
$\xi_N^{(p)}(t;\varphi)=\mathcal{O}(N^{-\delta})$ with some $\delta\in(0,1)$, if $\varphi\in\mathcal{D}(H_1)\cap\mathcal{D}(H_2)$ but $\varphi\not\in\mathcal{D}(H_1^2)\cap\mathcal{D}(H_2^2)$, which is the case for the hydrogen eigenstates with $\ell=0,1$.
See Table~\ref{tab:fractional_scalings}\@.

For $q=2$, the error operator~(\ref{eq:remainder_q}) yields
\begingroup
\allowdisplaybreaks
\begin{align*}
[U^{(p)}(T_M)-I]\varphi
={}&
\sum_{j=1}^M
\rme^{-\rmi\frac{t}{N}\tau_MH_{\sigma(M)}}
\cdots
\rme^{-\rmi\frac{t}{N}\tau_{j+1}H_{\sigma(j+1)}}
\nonumber\\
&\qquad
{}\times
\Biggl[
\left(
\rme^{-\rmi\frac{t}{N}\tau_jH_{\sigma(j)}}
-I
+\rmi\frac{t}{N}\tau_jH_{\sigma(j)}
+\frac{t^2}{2N^2}\tau_j^2H_{\sigma(j)}^2
\right)
\nonumber\\
&\qquad\qquad
{}
-\rmi\frac{t}{N}
\left(
\rme^{-\rmi\frac{t}{N}\tau_jH_{\sigma(j)}}
-I
+\rmi\frac{t}{N}\tau_jH_{\sigma(j)}
\right)
\sum_{k=1}^{j-1}
\tau_kH_{\sigma(k)}
\nonumber\\
&\qquad\qquad
{}-\frac{t^2}{N^2}
\left(
\rme^{-\rmi\frac{t}{N}\tau_jH_{\sigma(j)}}
-I
\right)
\left(
\frac{1}{2}
\sum_{k=1}^{j-1}
\tau_k^2H_{\sigma(k)}^2
+
\sum_{k_1=2}^{j-1}
\sum_{k_2=1}^{k_1-1}
\tau_{k_1}H_{\sigma(k_1)}
\tau_{k_2}H_{\sigma(k_2)}
\right)
\Biggr]\,
\varphi
\nonumber\\
={}&
\sum_{j=1}^M
\rme^{-\rmi\frac{t}{N}\tau_MH_{\sigma(M)}}
\cdots
\rme^{-\rmi\frac{t}{N}\tau_{j+1}H_2}
\nonumber\\
&\qquad
{}\times
\Biggl[
\left(
\rme^{-\rmi\frac{t}{N}\tau_jH_{\sigma(j)}}
-I
+\rmi\frac{t}{N}\tau_jH_{\sigma(j)}
+\frac{t^2}{2N^2}\tau_j^2H_{\sigma(j)}^2
\right)
\nonumber\\
&\qquad\qquad
{}
+\rmi\frac{t}{N}
\,\Biggl(
\sum_{k=1}^{j-1}
(-1)^{j-1-k}\tau_k
\Biggr)
\left(
\rme^{-\rmi\frac{t}{N}\tau_jH_{\sigma(j)}}
-I
+\rmi\frac{t}{N}\tau_jH_{\sigma(j)}
\right)
H_{\sigma(j)}
\nonumber\\
&\qquad\qquad
{}-\frac{t^2}{N^2}
\,\Biggl(
\frac{1}{2}
\sum_{k_1=1}^{\lfloor\frac{j}{2}\rfloor}
\sum_{k_2=1}^{\lfloor\frac{j}{2}\rfloor}
\tau_{2k_1-1}
\tau_{2k_2-1}
-
\sum_{k_1=2}^{\lfloor\frac{j}{2}\rfloor}
\sum_{k_2=1}^{k_1-1}
\tau_{2k_1-1}
\tau_{2k_2}
\Biggr)
\left(
\rme^{-\rmi\frac{t}{N}\tau_jH_{\sigma(j)}}
-I
\right)
H_1^2
\nonumber\\
&\qquad\qquad
{}-\frac{t^2}{N^2}
\,\Biggl(
\frac{1}{2}
\sum_{k_1=1}^{\lceil\frac{j}{2}\rceil-1}
\sum_{k_2=1}^{\lceil\frac{j}{2}\rceil-1}
\tau_{2k_1}
\tau_{2k_2}
-
\sum_{k_1=1}^{\lceil\frac{j}{2}\rceil-1}
\sum_{k_2=1}^{k_1}
\tau_{2k_1}
\tau_{2k_2-1}
\Biggr)
\left(
\rme^{-\rmi\frac{t}{N}\tau_jH_{\sigma(j)}}
-I
\right)
H_2^2
\Biggr]\,
\varphi,
\end{align*}
and the Trotter error is bounded by
\begin{align*}
\xi_N^{(p)}(t;\varphi)
\le{}&
N\sum_{i=1}^{\lceil\frac{M}{2}\rceil}
\,\Biggl\|
\Biggl[
\left(
\rme^{-\rmi\frac{t}{N}\tau_{2i-1}H_1}
-I
+\rmi\frac{t}{N}\tau_{2i-1}H_1
+\frac{t^2}{2N^2}\tau_{2i-1}^2H_1^2
\right)
\nonumber\\
&\qquad\qquad
{}
+\rmi\frac{t}{N}
\,\Biggl(
\sum_{k=1}^{2i-2}
(-1)^k\tau_k
\Biggr)
\left(
\rme^{-\rmi\frac{t}{N}\tau_{2i-1}H_1}
-I
+\rmi\frac{t}{N}\tau_{2i-1}H_1
\right)
H_1
\nonumber\\
&\qquad\qquad
{}-\frac{t^2}{N^2}
\,\Biggl(
\frac{1}{2}
\sum_{k_1=1}^{i-1}
\sum_{k_2=1}^{i-1}
\tau_{2k_1-1}
\tau_{2k_2-1}
-
\sum_{k_1=2}^{i-1}
\sum_{k_2=1}^{k_1-1}
\tau_{2k_1-1}
\tau_{2k_2}
\Biggr)
\left(
\rme^{-\rmi\frac{t}{N}\tau_{2i-1}H_1}
-I
\right)
H_1^2
\Biggr]\,
\varphi
\Biggr\|
\nonumber\\
&{}+
N\sum_{i=1}^{\lfloor\frac{M}{2}\rfloor}
\,\Biggl\|
\Biggl[
\left(
\rme^{-\rmi\frac{t}{N}\tau_{2i}H_2}
-I
+\rmi\frac{t}{N}\tau_{2i}H_2
+\frac{t^2}{2N^2}\tau_{2i}^2H_2^2
\right)
\nonumber\\
&\qquad\qquad
{}
-\rmi\frac{t}{N}
\,\Biggl(
\sum_{k=1}^{2i-1}
(-1)^k\tau_k
\Biggr)
\left(
\rme^{-\rmi\frac{t}{N}\tau_{2i}H_2}
-I
+\rmi\frac{t}{N}\tau_{2i}H_2
\right)
H_2
\nonumber\\
&\qquad\qquad
{}-\frac{t^2}{N^2}
\,\Biggl(
\frac{1}{2}
\sum_{k_1=1}^{i-1}
\sum_{k_2=1}^{i-1}
\tau_{2k_1}
\tau_{2k_2}
-
\sum_{k_1=1}^{i-1}
\sum_{k_2=1}^{k_1}
\tau_{2k_1}
\tau_{2k_2-1}
\Biggr)
\left(
\rme^{-\rmi\frac{t}{N}\tau_{2i}H_2}
-I
\right)
H_2^2
\Biggr]\,
\varphi
\Biggr\|
\nonumber\\
&{}+
\frac{t^2}{N}
\sum_{i=1}^{\lceil\frac{M}{2}\rceil}
\,\Biggl(
\frac{1}{2}
\sum_{k_1=1}^{i-1}
\sum_{k_2=1}^{i-1}
\tau_{2k_1}
\tau_{2k_2}
-
\sum_{k_1=1}^{i-1}
\sum_{k_2=1}^{k_1}
\tau_{2k_1}
\tau_{2k_2-1}
\Biggr)
\left\|
\left(
\rme^{-\rmi\frac{t}{N}\tau_{2i-1}H_1}
-I
\right)
H_2^2
\varphi
\right\|
\nonumber\\
&{}+
\frac{t^2}{N}
\sum_{i=1}^{\lfloor\frac{M}{2}\rfloor}
\,\Biggl(
\frac{1}{2}
\sum_{k_1=1}^i
\sum_{k_2=1}^i
\tau_{2k_1-1}
\tau_{2k_2-1}
-
\sum_{k_1=2}^i
\sum_{k_2=1}^{k_1-1}
\tau_{2k_1-1}
\tau_{2k_2}
\Biggr)
\left\|
\left(
\rme^{-\rmi\frac{t}{N}\tau_{2i}H_2}
-I
\right)
H_1^2
\varphi
\right\|.
\end{align*}
\endgroup
This is valid for $\varphi\in\mathcal{D}(H_1^2)\cap\mathcal{D}(H_2^2)$, and reproduces Eq.~(\ref{eq:fractional_second_2}) of Lemma~\ref{lem:FractionalDelta} for the second-order Trotterization.
It scales as $\xi_N^{(p)}(t;\varphi)=\mathcal{O}(N^{-\delta})$ with some $\delta\in(1,2)$, if $\varphi\in\mathcal{D}(H_1^2)\cap\mathcal{D}(H_2^2)$ but $\varphi\not\in\mathcal{D}(H_1^3)\cap\mathcal{D}(H_1H_2^2)\cap\mathcal{D}(H_2H_1^2)\cap\mathcal{D}(H_2^3)$, which is the case for the hydrogen eigenstates with $\ell=2,3$.
See Table~\ref{tab:fractional_scalings} again.

The method presented here is iterative and can be extended to all $q\leq p$ for a $p$th-order product formula.
Thereby, error bounds with fractional scalings are obtained for any state $\varphi\not\in\mathcal{D}\bm{(}H^{(p)}(s)S_p(s)\bm{)}$\@.

\section{Mathematica script for computing higher-order Trotter bounds}
\label{sec:Mathematica}
In this appendix, we provide a script written in Wolfram Mathematica to compute the higher-order Trotter error.
As an input, one has to give the order $p$ of the product formula as well as the switching coefficients $\{\tau_j\}$.

\subsection{Compute $S_p(s)$ at time step $j$}
First, we compute $S_p(s)$ by Eq.~\eqref{eq:kth_action_explicit_expanded0}\@.
For this purpose, we use the string $a$ to denote the Hamiltonian $H_1$ and the string $b$ to denote the Hamiltonian $H_2$.
\begin{tcolorbox}[breakable,sharp corners,colframe=white,left=1mm]
\begin{Verbatim}[numbers=left]
ClearAll[T, KK, t, n, s];
	
(*
Definition of a power of the Hamiltonian.
*)
Ham[
   K_(*k:int*),
   power_(*i:int*)
   ]
   :=
   Module[{H},
   If[IntegerQ[K],
     H = If[EvenQ[K], "b", "a"];
     Return[StringRepeat[H, power]],
     (*Else*)
     Return[Hold[Ham[K, power]]];
     ](*If*);
   ];
  
(*
All combinations to distribute the integer p in lists of length L.
This function solves the non-commutative integer partition problem to distribute
the exponents in the j-th term of the integral action.
*)
Combinations[
   p_(*p:int*),
   len_(*L:int*)
   ] 
   := 
   Module[{AllSums},
   AllSums = IntegerPartitions[p, {len}];
   If[Length[AllSums] > 1,
    AllSums = MapAt[Permutations, AllSums, All];
    AllSums = Flatten[AllSums, 1],
    (*Else*)
    AllSums = Permutations @@ AllSums;
    ](*If*);
   Return[AllSums];
   ];
      
(*
Use the fact that the input state is an eigenvector of eigenvalue 0 to simplify.
*)
ZeroEigenstate[
   expressionVector_(*integral action:array*)
   ]
   := 
   Module[{result, bookKeeping},
      (*If the position is divisible by 2 but not by 4: move the entry to the left.*)
      (*The next entry after this: move it to the right.*)
      result = expressionVector;
      bookKeeping = False;(*helper variable*)
      Table[
         If[EvenQ[i] \[And] Mod[i, 4] != 0,
            result[[i - 1]] -= result[[i]];
            result[[i]] = 0;
            bookKeeping = True,
            If[bookKeeping,
               result[[i + 1]] -= result[[i]];
               result[[i]] = 0;
               bookKeeping = False;
            ](*If*);
         ](*If*);
      , {i, 1, Length[expressionVector]}](*Table*);
   Return[result];
];
      
(*
Compute the integral action at time step j.
The result is stored in a vector of length 2^p, where each entry corresponds to a
product of operators H_1="a" and H_2="b", sorted in alphabetical order.
*)
Sp[
   order_(*p:int*),
   mMax_(*M:int*),
   time_(*s:var*),
   tauList_(*List of the switching coefficients:array*),
   boundaryQ_ : False(*whether or not to compute the integral action
                        at the boundary:bool*),
   simplifyQ_ : False(*Whether or not to use the function `ZeroEigenstate` to
                        simplify the result:bool*)
   ]
   := 
      Module[{SummationList, tuples, delete, result, ii, L, AllSums, 
         times, Hamiltonians, jj, power, prefactor, HamTerm, position},
         tuples = Map[StringJoin, Tuples[Sort@{"a", "b"}, order], 2];
         delete = 0;(*deletion parameter q*)
         result = ConstantArray[0, 2^order];
         While[delete < order,
            L = order - delete;
            AllSums = Combinations[order, L];
            For[ii = 1, ii <= Length[AllSums], ii++,
               prefactor = 1/Times @@ Map[Factorial, AllSums[[ii]]];
               power = AllSums[[ii, 1]];
               times = T[KK[1]]^power;
               Hamiltonians = {Ham[KK[1], power]};
                  Table[
                     power = AllSums[[ii, jj]];
                     times *= T[KK[jj]]^power;
                     Hamiltonians = AppendTo[Hamiltonians, Ham[KK[jj], power]]
                     , {jj, 2, L}
                  ](*Table*);
                  Table[
                     HamTerm = Map[StringJoin, Evaluate @@@ Hamiltonians, All];
                     position = Position[tuples, HamTerm][[1]];
                     result[[position]] += prefactor*times
                     , Evaluate[
                        Sequence@@Prepend[
                           Table[{KK[index], 1, KK[index - 1] - 1}, {index, 2, L}]
                           ,{KK[1], 1, mMax}](*Prepend*)
                        ](*Evaluate*)
                  ](*Table*);
                  If[boundaryQ == False,
                     result = 
                     result /. 
                     T[mMax] -> (time - Sum[Abs[T[ll]], {ll, 1, mMax - 1}])*
                     Sign[tauList[[mMax]]];
                  ](*If*);
               ](*For ii*);
               delete++;
            ](*While*);
            result = result /. Table[T[j] -> tauList[[j]], {j, 1, mMax}];
            If[simplifyQ,
               Return[ZeroEigenstate[result]],
            (*Else*)
               Return[result];
            ](*If*);
         ];
\end{Verbatim}
\end{tcolorbox}

\subsection{Compute the error bound $\xi_N^{(p)}(t;\varphi)$}
Having defined $S_p(s)$ for time $s$ in the $j$th time slot, we can now compute the error bounds using Eq.~\eqref{eqn:error_estimate_p} of Lemma~\ref{lem:bound_pth_order}\@.
Here, we will use the convention that $(H_1+H_2)\varphi=0$.
\begin{tcolorbox}[breakable,sharp corners,colframe=white,left=1mm]
\begin{Verbatim}[numbers=left, firstnumber=129]
(*
Multiply the vector representing the integral action at time step j by the Hamiltonian
H_{\sigma(j)}.
*)
MultiplyBy[
   expressionVector_(*integral action:array*), 
   step_(*j:int*)
   ]
   :=
   Module[{zeros, result},
   zeros = ConstantArray[0, Length[expressionVector]];
   If[EvenQ[step],
    result = Join[zeros, expressionVector],
    result = Join[expressionVector, zeros];
    ](*If*);
   Return[result];
   ];
   
(*
Compute the actual error bound.
*)
ErrorBound[
   order_(*p:int*), mMax_(*M:int*), 
   tauList_(*List of the switching coefficients:array*), 
   simplifyQ_ : True(*Whether or not to use the function `ZeroEigenstate` to
                        simplify the result:bool*)
   ]
   := 
   Module[{Action, integralBounds, bound, operators},
      integralBounds = Table[Sum[Abs[tauList[[j]]], {j, 1, i}], {i, 1, Length[tauList]}];
      integralBounds = PrependTo[integralBounds, 0];
      bound = (t^(order + 1)/n^order)*Sum[
         Abs@
         MultiplyBy[
            Integrate[
               Sp[order, i, s, tauList, False, simplifyQ]
            , {s, integralBounds[[i]], integralBounds[[i + 1]]}](*Integrate*)
         , i](*MultiplyBy*)
      , {i, 1, mMax}](*Sum*);
      operators = 
      Map[StringJoin, 
         Tuples[Sort@{"a", "b"}, order + 1]
         // Map[Append["\[CurlyPhi]||"]] // Map[Prepend["||"]]
      , 2](*Map*);
      Return[Chop[bound, 10^-MachinePrecision].operators];
   ];
\end{Verbatim}
\end{tcolorbox}

\subsection{Examples}\label{sec:Mathematica_examples}
To show how the code works, we compute the examples of $p=1,2,4,6$.
For $p=1$, we enter
\begin{tcolorbox}[breakable,sharp corners,colframe=white,left=1mm]
\begin{Verbatim}[numbers=left, firstnumber=175]
ErrorBound[1, 2, {1, 1}, True]
\end{Verbatim}
\end{tcolorbox}
\noindent
and get the output
\begin{equation*}
	\frac{\text{$||$aa$\varphi ||$} t^2}{2 n}+\frac{\text{$||$bb$\varphi ||$} t^2}{2 n}.
\end{equation*}
For $p=2$, we enter
\begin{tcolorbox}[breakable,sharp corners,colframe=white,left=1mm]
\begin{Verbatim}[numbers=left, firstnumber=176]
ErrorBound[2, 3, {1/2, 1, 1/2}, True]
\end{Verbatim}
\end{tcolorbox}
\noindent
to obtain
\begin{equation*}
	\frac{\text{$||$aaa$\varphi ||$} t^3}{24 n^2}+\frac{\text{$||$baa$\varphi ||$} t^3}{8 n^2}+\frac{\text{$||$bbb$\varphi ||$} t^3}{12 n^2}.
\end{equation*}
The case $p=4$ can be computed with
\begin{tcolorbox}[breakable,sharp corners,colframe=white,left=1mm]
\begin{Verbatim}[numbers=left, firstnumber=177]
s4 = 1/(2 - 2^(1/3));
times4 = {s4/2, s4, (1-s4)/2, 1-2s4, (1-s4)/2, s4, s4/2}//N;
ErrorBound[4, 7, times4, True]
\end{Verbatim}
\end{tcolorbox}
\noindent
and results in
\begin{align*}
	&\frac{0.00427803 \text{$||$aaaaa$\varphi ||$} t^5}{n^4}
	+\frac{0.00950917 \text{$||$aabaa$\varphi ||$} t^5}{n^4}
	+\frac{0.00633945 \text{$||$aabbb$\varphi ||$}
   t^5}{n^4}
   +\frac{0.0243899 \text{$||$abaaa$\varphi ||$} t^5}{n^4}
   \\
   &+\frac{0.0731697 \text{$||$abbaa$\varphi ||$} t^5}{n^4}
   +\frac{0.0487798 \text{$||$abbbb$\varphi ||$}
   t^5}{n^4}
   +\frac{0.0299684 \text{$||$baaaa$\varphi ||$} t^5}{n^4}
   +\frac{0.092188 \text{$||$babaa$\varphi ||$} t^5}{n^4}
   \\
   &+\frac{0.0614587 \text{$||$babbb$\varphi ||$}
   t^5}{n^4}
   +\frac{0.212061 \text{$||$bbaaa$\varphi ||$} t^5}{n^4}
   +\frac{0.393394 \text{$||$bbbaa$\varphi ||$} t^5}{n^4}
   +\frac{0.22989 \text{$||$bbbbb$\varphi ||$}
   t^5}{n^4}.
\end{align*}
Finally, for $p=6$, we input
\begin{tcolorbox}[breakable,sharp corners,colframe=white,left=1mm]
\begin{Verbatim}[numbers=left, firstnumber=180]
s4 = 1/(2 - 2^(1/3));
s6 = 1/(2 - 2^(1/5));
times6 = {
	   s4*s6/2, s4*s6, (1-s4)s6/2, (1-2 s4)s6, (1-s4)s6/2, s4*s6,
	   s4*s6/2+s4*(1-2 s6)/2, s4 (1-2s6), (1-s4)(1-2s6)/2, (1-2s4)(1-2s6),
	   (1-s4)(1-2s6)/2, s4(1-2s6), s4(1-2s6)/2+s4*s6/2, s4*s6,
	   (1-s4)s6/2, (1-2s4)s6, (1-s4)s6/2, s4*s6, s4*s6/2
         }//N;
ErrorBound[6, 19, times6, True]
\end{Verbatim}
\end{tcolorbox}
\noindent
and obtain
\begingroup
\allowdisplaybreaks
\begin{align*}
	&\frac{0.000342245 \text{$||$aaaaaaa$\varphi ||$} t^7}{n^6}+\frac{0.00466073 \text{$||$aaaabaa$\varphi ||$} t^7}{n^6}+\frac{0.00310716
		\text{$||$aaaabbb$\varphi ||$} t^7}{n^6}+\frac{0.0128515 \text{$||$aaabaaa$\varphi ||$} t^7}{n^6}\\
	&+\frac{0.0385546 \text{$||$aaabbaa$\varphi ||$}
		t^7}{n^6}+\frac{0.0257031 \text{$||$aaabbbb$\varphi ||$} t^7}{n^6}+\frac{0.0128622 \text{$||$aabaaaa$\varphi ||$} t^7}{n^6}+\frac{0.0417926
		\text{$||$aababaa$\varphi ||$} t^7}{n^6}\\
	&+\frac{0.0278617 \text{$||$aababbb$\varphi ||$} t^7}{n^6}+\frac{0.0164871 \text{$||$aabbaaa$\varphi ||$}
		t^7}{n^6}+\frac{0.0382987 \text{$||$aabbbaa$\varphi ||$} t^7}{n^6}+\frac{0.0240441 \text{$||$aabbbbb$\varphi ||$} t^7}{n^6}\\
	&+\frac{0.00717357
		\text{$||$abaaaaa$\varphi ||$} t^7}{n^6}+\frac{0.0241779 \text{$||$abaabaa$\varphi ||$} t^7}{n^6}+\frac{0.0161186 \text{$||$abaabbb$\varphi ||$}
		t^7}{n^6}+\frac{0.0620135 \text{$||$ababaaa$\varphi ||$} t^7}{n^6}\\
	&+\frac{0.18604 \text{$||$ababbaa$\varphi ||$} t^7}{n^6}+\frac{0.124027
		\text{$||$ababbbb$\varphi ||$} t^7}{n^6}+\frac{0.0213565 \text{$||$abbaaaa$\varphi ||$} t^7}{n^6}+\frac{0.234396 \text{$||$abbabaa$\varphi ||$}
		t^7}{n^6}\\
	&+\frac{0.156264 \text{$||$abbabbb$\varphi ||$} t^7}{n^6}+\frac{0.168267 \text{$||$abbbaaa$\varphi ||$} t^7}{n^6}+\frac{0.205275
		\text{$||$abbbbaa$\varphi ||$} t^7}{n^6}+\frac{0.0969135 \text{$||$abbbbbb$\varphi ||$} t^7}{n^6}\\
	&+\frac{0.00379873 \text{$||$baaaaaa$\varphi ||$}
		t^7}{n^6}+\frac{0.0501259 \text{$||$baaabaa$\varphi ||$} t^7}{n^6}+\frac{0.0334173 \text{$||$baaabbb$\varphi ||$} t^7}{n^6}+\frac{0.137266
		\text{$||$baabaaa$\varphi ||$} t^7}{n^6}\\
	&+\frac{0.411798 \text{$||$baabbaa$\varphi ||$} t^7}{n^6}+\frac{0.274532 \text{$||$baabbbb$\varphi ||$}
		t^7}{n^6}+\frac{0.127458 \text{$||$babaaaa$\varphi ||$} t^7}{n^6}+\frac{0.420193 \text{$||$bababaa$\varphi ||$} t^7}{n^6}\\
	&+\frac{0.280128
		\text{$||$bababbb$\varphi ||$} t^7}{n^6}+\frac{0.290088 \text{$||$babbaaa$\varphi ||$} t^7}{n^6}+\frac{0.467689 \text{$||$babbbaa$\varphi ||$}
		t^7}{n^6}+\frac{0.258116 \text{$||$babbbbb$\varphi ||$} t^7}{n^6}\\
	&+\frac{0.0631099 \text{$||$bbaaaaa$\varphi ||$} t^7}{n^6}+\frac{0.141983
		\text{$||$bbaabaa$\varphi ||$} t^7}{n^6}+\frac{0.0946554 \text{$||$bbaabbb$\varphi ||$} t^7}{n^6}+\frac{0.36417 \text{$||$bbabaaa$\varphi ||$}
		t^7}{n^6}\\
	&+\frac{1.09251 \text{$||$bbabbaa$\varphi ||$} t^7}{n^6}+\frac{0.72834 \text{$||$bbabbbb$\varphi ||$} t^7}{n^6}+\frac{0.0509254
		\text{$||$bbbaaaa$\varphi ||$} t^7}{n^6}+\frac{1.05772 \text{$||$bbbabaa$\varphi ||$} t^7}{n^6}\\
	&+\frac{0.705143 \text{$||$bbbabbb$\varphi ||$}
		t^7}{n^6}+\frac{0.67482 \text{$||$bbbbaaa$\varphi ||$} t^7}{n^6}+\frac{0.679037 \text{$||$bbbbbaa$\varphi ||$} t^7}{n^6}+\frac{0.281524
		\text{$||$bbbbbbb$\varphi ||$} t^7}{n^6}.
\end{align*}
\endgroup
\end{widetext}

\bibliography{hydrogen.bib}
\end{document}